\documentclass{article}

\usepackage{microtype}
\usepackage{graphicx}
\usepackage{subcaption}
\usepackage{booktabs} % for professional tables

\usepackage{hyperref}

\usepackage[accepted]{icml2026}

\usepackage{amsmath}
\usepackage{amssymb}
\usepackage{mathtools}
\usepackage{amsthm}
\usepackage{multirow}
\usepackage{xurl}
\usepackage[table]{xcolor}
\usepackage{caption}

\usepackage[capitalize,noabbrev]{cleveref}
\usepackage{truncate}
\usepackage{tcolorbox}
\usepackage{pifont}
\newcounter{insightcounter}

\theoremstyle{plain}
\newtheorem{theorem}{Theorem}[section]

\newtheorem{lemma}[theorem]{Lemma}
\newtheorem{corollary}[theorem]{Corollary}
\theoremstyle{definition}
\newtheorem{definition}[theorem]{Definition}
\newtheorem{assumption}[theorem]{Assumption}
\theoremstyle{remark}
\newtheorem{remark}[theorem]{Remark}

\usepackage[textsize=tiny]{todonotes}

\icmltitlerunning{Geometric Distortion in Black-Box Watermark Forgery}

\begin{document}

\twocolumn[
\icmltitle{
\texorpdfstring
{Rethinking Forgery Attacks on Semantic Watermarks \\ in Black-Box Settings: A Geometric Distortion Perspective}
{Rethinking Forgery Attacks on Semantic Watermarks in Black-Box Settings: A Geometric Distortion Perspective}
}

% It is OKAY to include author information, even for blind
% submissions: the style file will automatically remove it for you
% unless you've provided the [accepted] option to the icml2025
% package.

% List of affiliations: The first argument should be a (short)
% identifier you will use later to specify author affiliations
% Academic affiliations should list Department, University, City, Region, Country
% Industry affiliations should list Company, City, Region, Country

% You can specify symbols, otherwise they are numbered in order.
% Ideally, you should not use this facility. Affiliations will be numbered
% in order of appearance and this is the preferred way.
\icmlsetsymbol{equal}{*}

\begin{icmlauthorlist}
\icmlauthor{Cheng-Yi Lee}{as}
\icmlauthor{Yichi Zhang}{psu}
\icmlauthor{Yuchen Yang}{psu}
\icmlauthor{Chun-Shien Lu}{as}
\icmlauthor{Jun-Cheng Chen}{as}
% \icmlauthor{Firstname6 Lastname6}{sch,yyy,comp}
% \icmlauthor{Firstname7 Lastname7}{comp}
%\icmlauthor{}{sch}
\end{icmlauthorlist}

\icmlaffiliation{as}{Academia Sinica, Taipei, Taiwan, ROC}
\icmlaffiliation{psu}{The Pennsylvania State University, PA, US}
% \icmlaffiliation{sch}{School of ZZZ, Institute of WWW, Location, Country}

\icmlcorrespondingauthor{Jun-Cheng Chen}{pullpull@citi.sinica.edu.tw}
% \icmlcorrespondingauthor{Firstname2 Lastname2}{first2.last2@www.uk}

% You may provide any keywords that you
% find helpful for describing your paper; these are used to populate
% the "keywords" metadata in the PDF but will not be shown in the document
\icmlkeywords{Semantic Watermarks, Diffusion Model, Forgery Attacks, Geometric Distortion}

\vskip 0.3in
]

% this must go after the closing bracket ] following \twocolumn[ ...

% This command actually creates the footnote in the first column listing the
% affiliations and the copyright notice. The command takes one argument, which
% is text to display at the start of the footnote. The \icmlEqualContribution
% command is standard text for equal contribution. Remove it (just {}) if you
% do not need this facility.

% Use ONE of the following lines. DO NOT remove the command.
% If you have no special notice, KEEP empty braces:
\printAffiliationsAndNotice{}  % no special notice (required even if empty)
% Or, if applicable, use the standard equal contribution text:
% \printAffiliationsAndNotice{\icmlEqualContribution}.

\begin{abstract}
Recent studies have shown that semantic watermarks, which embed information into the initial noise of latent diffusion models (LDMs), are vulnerable to black-box forgery attacks. However, existing methods primarily rely on empirical evidence and lack a rigorous theoretical understanding of the conditions under which such attacks succeed or fail. To bridge this gap, we rethink the nature of such attacks through the lens of rate-distortion in the latent space. Our analysis identifies an irreducible distortion floor due to structural mismatches between proxy and target models, which fundamentally limits the fidelity of forged watermarks. We further characterize this distortion as structured geometric deviations on the latent manifold, in the form of global drift and local deformation rather than stochastic noise. Leveraging these insights, we propose a scheme-agnostic detection method that distinguishes forged samples before watermark verification. Extensive experiments demonstrate the effectiveness of our method across diverse black-box scenarios, while preserving robustness to common distortions.

\end{abstract}

% have emerged as a promising technique for latent diffusion models, embedding information by subtly modifying the initial latent noise. 

\section{Introduction}
\label{sec:intro}
The increasing proliferation of AI-generated content (AIGC)~\cite{cao2025survey} has attracted widespread interest across various fields and contributed to substantial commercial value~\cite{betker2023improving, midjourney2025}. In visual content generation, diffusion models~\cite{ho2020denoising, song2020denoising, rombach2022high} allow individuals from diverse backgrounds to produce high-quality images with minimal effort. However, this advancement has raised concerns regarding the erosion of trust in digital media~\cite{goodman2024synthetic} and the dissemination of misinformation~\cite{jaidka2025misinformation}. For instance, deepfakes~\cite{westerlund2019emergence}, highly realistic AI-generated media, have been used to perpetrate fraud, damage personal reputations, and spread disinformation. In response, governments~\cite{usbiden2024, uscalaw2024, eupolicy2024} have begun mandating that companies ensure the detectability and traceability of generated images or their underlying models, as well as the identification of responsible users.

% Fernandez2023
Generative image watermarking has emerged as a promising paradigm to address these challenges. By injecting watermarks into generated images, the service provider (SP) can enable reliable detection and source attribution through watermark extraction. However, post-hoc watermarking schemes~\cite{cox2008digital, bui2023rosteals, sander2025watermark} alter the data distribution and degrade visual fidelity, regardless of whether operating in the spatial~\cite{li2009generalization} or frequency domain~\cite{al2007combined}. To balance utility and robustness, the \textit{semantic watermarks}~\cite{wen2023tree, yang2024gaussian, gunn2025an} modify the initial latent noise to embed a predefined pattern, which can later be recovered through inversion of the denoising process. This design enables straightforward deployment into existing diffusion models, achieving greater robustness against diverse image transformations and adversarial attacks.

\begin{figure}
    \centering
    \includegraphics[width=\columnwidth]{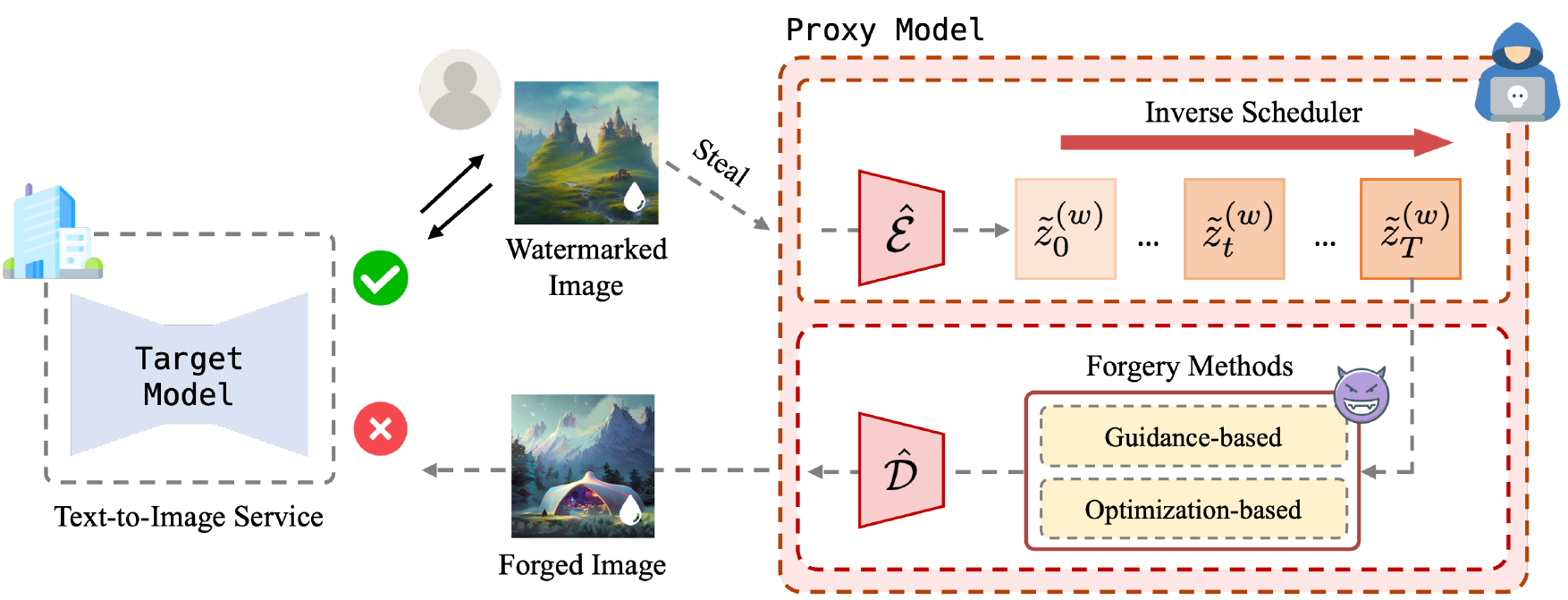}
    % \fbox{\rule{0pt}{4cm}\rule{0.95\columnwidth}{0pt}}
    \caption{Illustration of the black-box forgery attack. An adversary uses a proxy model to invert a watermarked image into a latent representation and generates a forged image via different strategies, which preserves the service provider's watermark while falsifying content provenance.}
    \vspace{-0.1in}
    \label{fig:1}
\end{figure}

Despite these advantages, recent studies~\cite{muller2025black} reveal that an adversary can perform a watermark forgery attack using only black-box access to the SP's model. Fig.~\ref{fig:1} illustrates this attack. The adversary takes a watermarked image generated by an SP, inverts it into the latent space using a proxy diffusion model, and regenerates a new image such that the SP's watermark is preserved while the image content is no longer produced by the SP. Such an attack undermines trust in the watermarking system by falsely attributing image provenance to an SP and wrongly accusing regular users of originating harmful content.
% \yuchen{I don't get the claim here. If I shared the image, I did distribute it, watermark or not. What's “wrongly accusing” about that?}
However, prior work primarily demonstrates such attacks empirically, with limited insight into the conditions under which black-box forgery succeeds or fails.
As a result, it remains unclear how much security semantic watermarking systems can provide in practice, or how to design defenses that generalize beyond specific attack implementations.

In this paper, we revisit the feasibility of black-box forgery attacks on semantic watermarks.% through the lens of rate-distortion analysis in the latent space of latent diffusion models.
% \yuchen{Hypothesis/goal first:}
We posit that the success of such attacks is constrained by distortions introduced in the latent space when the adversary relies on a mismatched proxy model.
% \yuchen{Then formulation}
To formalize this intuition, we model black-box forgery as a rate–distortion problem, where the adversary must trade off successful watermark information transfer (rate) against the preservation of latent quality (distortion).
Under this framework, we show that proxy–target model mismatch induces an irreducible distortion floor (see Fig.~\ref{fig:2}), which fundamentally limits the adversary's ability to achieve high-fidelity forgery.
Crucially, we find that this distortion is not stochastic noise, but manifests as structured geometric deviations from the intrinsic latent manifold, characterized by global drift and local deformation.
By leveraging these geometric differences, forged samples can be detected as a pre-verification step, without requiring any modification to existing watermarking schemes. Experimental results validate the effectiveness of our approach across a wide range of black-box forgery settings. Our contributions can be summarized as follows:

$\bullet$ To the best of our knowledge, we are the first to formalize black-box forgery attacks within a rate–distortion framework, identifying an irreducible distortion floor that fundamentally constrains the adversary's capability.

$\bullet$ We characterize forgery-induced distortion as structured geometric deviations on the intrinsic latent manifold, manifesting as global drift and local deformation.

$\bullet$ We propose a scheme-agnostic detection based on these geometric findings. Extensive evaluations demonstrate its effectiveness under diverse black-box forgery scenarios.

% \begin{figure*}
%   \centering
%   \begin{subfigure}{0.68\linewidth}
%     \fbox{\rule{0pt}{2in} \rule{.9\linewidth}{0pt}}
%     \caption{An example of a subfigure.}
%     \label{fig:short-a}
%   \end{subfigure}
%   \hfill
%   \begin{subfigure}{0.28\linewidth}
%     \fbox{\rule{0pt}{2in} \rule{.9\linewidth}{0pt}}
%     \caption{Another example of a subfigure.}
%     \label{fig:short-b}
%   \end{subfigure}
%   \caption{Example of a short caption, which should be centered.}
%   \label{fig:short}
% \end{figure*}

\section{Related Works} \label{sec:2}
% Watermarking: application, semantic watermark
% near-duplicate detection/image provenance filtering

\noindent \textbf{Semantic Watermarking.} Semantic watermarks~\cite{wen2023tree, yang2024gaussian, lee2025semantic} embed or map a specific, recoverable structure into the starting latent, from which the diffusion process generates watermarked images. During extraction, the denoising inversion process is applied to the watermarked image to recover the corresponding watermarked latent. This approach is highly effective because it is easy to implement without additional training. For example, Tree-Ring (TR)~\cite{wen2023tree} embeds circular patterns into the frequency domain of the latent $z^{(w)}_T$. For detection, it verifies the pattern by checking if the frequency representation of $\hat{z}^{(w)}_T$ is sufficiently close to the original pattern. Gaussian Shading (GS)~\cite{yang2024gaussian} combines stream cipher encryption with distribution-preserving sampling to ensure that watermarked images follow the same distribution as non-watermarked ones. During verification, this process is inverted to recover a bit string, which is compared against registered keys. However, recent works~\cite{muller2025black} show that an adversary can readily forge a watermarked pattern using any arbitrary proxy model through reprompting and imprint attacks. Further details on semantic watermarks are provided in Sec.~\ref{sec:A.1}.

\noindent \textbf{Rate-Distortion Theory.} Shannon first explored the essential balance between the minimum amount of information (rate) required to represent a source and the distortion that arises when the data is reconstructed~\cite{shannon1948mathematical,shannon1959coding}. By establishing theoretical limits on compression performance, rate–distortion (RD) theory guides the design of practical source coding schemes~\cite{balle2020nonlinear} and enables evaluation of their capabilities. Recent studies~\cite{blau2018perception, blau2019rethinking,zhang2021universal} have extended this theory to include perceptual quality, revealing a three-way trade-off among rate, distortion, and perception. In this work, we leverage RD theory to characterize latent distortions arising from model mismatch. Our analysis reveals that forged latents exhibit measurable geometric drift, which aids in the separation of forged samples.

\begin{figure}[!t]
  \centering
   \includegraphics[width=0.85\columnwidth]{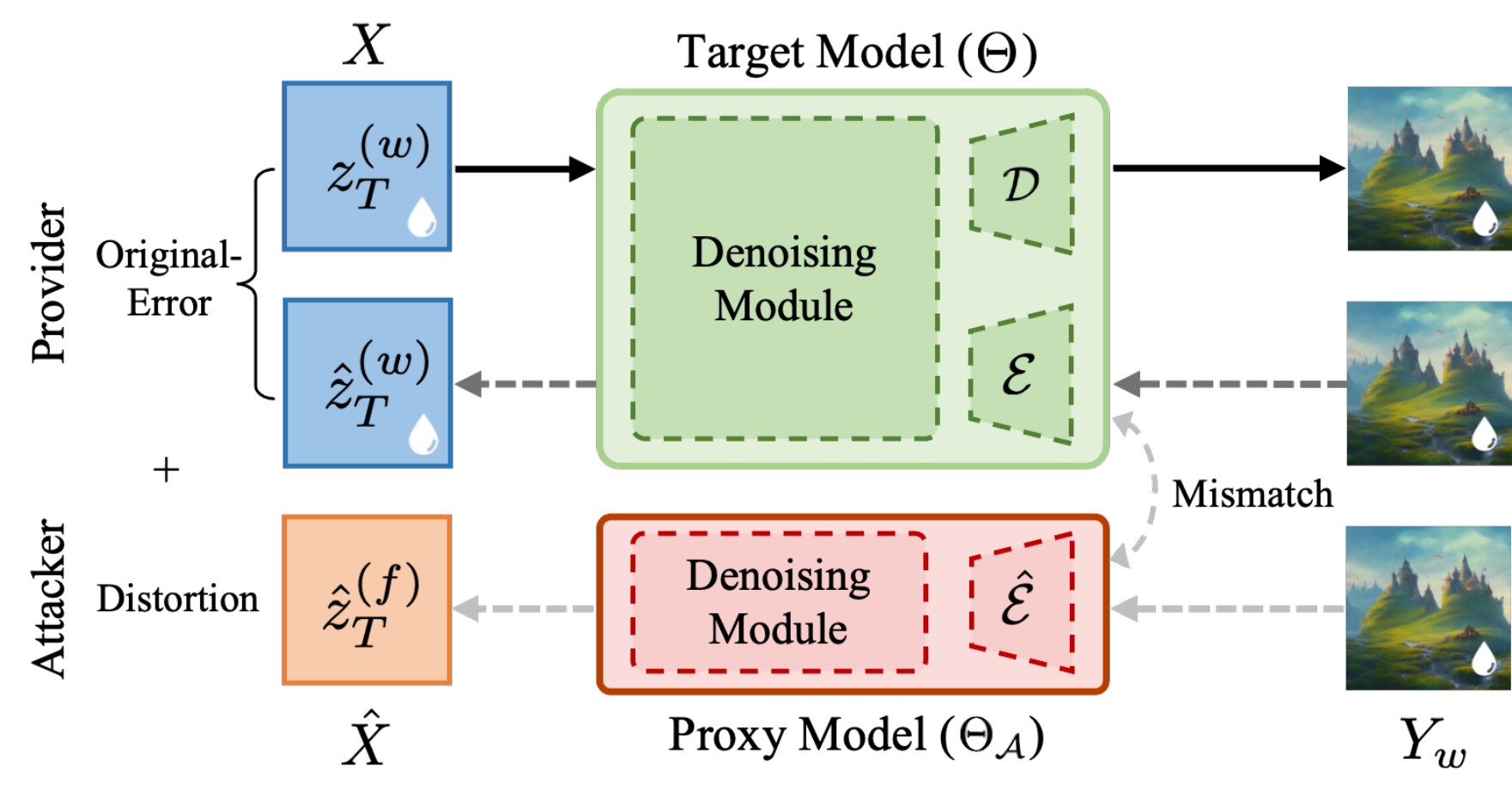}
   % \fbox{\rule{0pt}{4cm}\rule{0.95\columnwidth}{0pt}}
   \vspace{-0.02in}
   \caption{Latent distortion under model mismatch. Discrepancies between the target model ($\Theta$) and the proxy model ($\Theta_{\mathcal{A}}$) exacerbate reconstruction distortion in the latent space, arising from both intrinsic and external errors. (See Sec~\ref{sec:4.2} for more details.)}
   \vspace{-0.1in}
   \label{fig:2}
\end{figure}
% The inevitable error is implicitly generated between $\hat{z}_T$ and induced \textcolor{red}{The inevitable error between implicitly}
\section{Preliminary}
\subsection{Diffusion Models and DDIM Inversion}
Denoising Diffusion Probabilistic Models (DDPM)~\cite{ho2020denoising} formulate the process of adding and removing noise as a Markov chain. Denoising Diffusion Implicit Models (DDIM)~\cite{song2020denoising} extend DDPM to generate high-quality images with fewer sampling steps. Unlike DDPM, DDIM follows a deterministic and non-Markovian process, which enables reversible noising and denoising. To reduce memory usage and computational cost, Latent Diffusion Models (LDM)~\cite{rombach2022high} perform the diffusion process in a latent space. Given an image $x \in \mathbb{R}^{H \times W \times 3}$, LDM employs an encoder $\mathcal{E}(\cdot)$ maps $x$ to its latent representation $z_{0} = \mathcal{E}(x)$, and a decoder $\mathcal{D}(\cdot)$ reconstructs the image as $x' = \mathcal{D}(z_{0})$.
% To reduce memory usage and computational cost, Latent Diffusion Models (LDM)~\cite{LDM} perform the diffusion process in a latent space. Formally, given an image $x \in \mathbb{R}^{H \times W \times 3}$, LDM uses an encoder $\mathcal{E}(\cdot)$ to project $x$ into the latent space, i.e., $z_{0} = \mathcal{E}(x)$, and a decoder $\mathcal{D}(\cdot)$ to map the latent representation back to image space, i.e., $x' = \mathcal{D}(z_{0})$.  
%with $\alpha_t = 1 - \beta_t$
Let $\beta_t$ denote the variance schedule at timestep $t$, where $t \in \{0,1,\dots,T-1\}$, and define $\bar{\alpha}_t = \prod_{i=1}^t{\alpha_i} = \prod_{i=1}^t(1-\beta_i)$. At each denoising step, a learned noise predictor $\epsilon_{\theta}(z_{t}, t, C)$ estimates the noise added to $z_0$. The corresponding estimate of $z_0$ at timestep $t$ is given by:
\begin{equation}
\label{eq:z0}
\hat{z}_0^t = \frac{z_t - \sqrt{1-\bar{\alpha}_t}\, \epsilon_{\theta}(z_{t}, t, C)}{\sqrt{\bar{\alpha}_t}},
\end{equation}
where $C$ denotes the text condition. Using $\hat{z}_0^t$, the latent at the previous timestep can be computed as:
\begin{equation}
\label{eq:zt-1}
z_{t-1} = \sqrt{\bar{\alpha}_{t-1}}\, \hat{z}_0^t + \sqrt{1-\bar{\alpha}_{t-1}}\, \epsilon_{\theta}(z_{t},t,C).
\end{equation}
Diffusion inversion reverses the generative process by recovering the latent representation from a given image. DDIM inversion accomplishes this by reversing the time steps and applying the same update rule used in DDIM generation. Starting from the latent representation $z_0$, noise is incrementally added, with the $t$-th step defined as:
\begin{equation}
\label{eq:zt+1}
z_{t+1} = \sqrt{\bar{\alpha}_{t+1}}\, \hat{z}_0^t + \sqrt{1-\bar{\alpha}_{t+1}}\, \epsilon_{\theta}(z_{t}, t, C).
\end{equation}

\subsection{Black-Box Forgery Attacks} 
In the black-box forgery setting~\cite{muller2025black}, an adversary is given a watermarked image $x^{(w)}$ generated by the SP's model $\Theta$ and has only black-box query access to this model. The adversary relies on a proxy model $\Theta_{\mathcal{A}}$ to invert $x^{(w)}$ and obtain an estimated latent $\tilde{z}_T^{(w)}$. From this latent, two forgery strategies emerge: \textit{Guidance}- and \textit{Optimization-based} methods. While guidance-based methods initialize the reverse diffusion process from $\tilde{z}_T^{(w)}$ using auxiliary textual or structural conditions, optimization-based methods seek a perturbation $\delta$ to align a perturbed cover latent $\tilde{z}_T^{(c)} + \delta$ with $\tilde{z}_T^{(w)}$ under the proxy inversion $\mathcal{I}_{\mathcal{A}}$, such that $\mathcal{I}_{0 \rightarrow T}(\tilde{z}_T^{(c)} + \delta; u_\mathcal{A}) \approx \tilde{z}_T^{(w)}$ (See Sec.~\ref{sec:A.2} for details).

% We provide a detailed description in Sec.~\ref{sec:A.2}.

\subsection{Rate-Distortion Theory} 
The RD function \cite{koga2013information} aims to study the relation between the input $X$ and the output $\hat{X}$ of an encoder-decoder pair, and is a mapping defined by a conditional distribution $p_{\hat{X}|X}$. For a Gaussian source $X \sim \mathcal{N}(0,\sigma^2)$ under common distortion measures (\textit{e.g.}, mean-square error), the RD function admits a closed form:

\begin{definition}[RD Function of a Gaussian Source]\label{df:RD_GS}
Let $X \sim \mathcal{N}(0, \sigma^2)$ be a Gaussian source. The RD function under mean-squared error distortion $D$ is defined as:
\begin{equation}\label{eq:RD_GS}
R(D) = \left\{
\begin{array}{ll}
\displaystyle \frac{1}{2} \log \left( \frac{\sigma^2}{D} \right) & 0 < D < \sigma^2,\\[1ex]
0 & D > \sigma^2.
\end{array}
\right.
\end{equation}
\end{definition}
Since the latent variable $z_T$ in diffusion models follows an isotropic Gaussian distribution $\mathcal{N}(0, \mathbf{I})$, we adopt this form to characterize the RD trade-off under model mismatch (see Sec.~\ref{sec:A.3} for details).
\vspace{-0.1in}
% While this approach offers precise control over the latent modification, its primary limitation is the computational cost of iterative optimization for each image. Furthermore, its effectiveness depends on the proxy model's ability to encode images into a latent space where such perturbations yield semantically consistent outputs.

\section{Methodology and Theoretical Analysis} 
\label{sec:4}
\subsection{Threat Model}
\noindent \textbf{Adversary's capability:} Let $\hat{z}_T^{(w)}$, $\hat{z}_T^{(f)}$ denote the recovered latents of watermarked and forged samples, respectively. The adversary's objective is to generate forged images that successfully deceive both the watermark detector and extractor by minimizing the difference between $z_T^{(w)}$ and $\hat{z}_T^{(w)}$. To achieve this, a proxy model $\Theta_{\mathcal{A}}$ invert a watermarked image $x^{(w)}$ and obtain $\hat{z}_T^{(w)}$, which is then decoded by $\Theta_{\mathcal{A}}$ to produce the forged image $\hat{x}$. Meanwhile, the adversary aims to preserve the visual fidelity of $\hat{x}$, ensuring the forgery remains indistinguishable from genuine.

\noindent \textbf{Adversary's knowledge:} We assume a black-box setting in which the adversary has limited knowledge of the semantic watermarking methods and the SP's model. Specifically, in practice, the adversary does not know the model architecture or parameters, nor does it have access to the prompts used by legitimate users. However, the adversary can acquire watermarked images that are publicly shared or uploaded by users and are known to originate from a generative model.

\noindent \textbf{Provider's capability:} We assume that the SP has full control over the generative pipeline (\textit{e.g.}, watermark embedding or detection) and complete knowledge of the model architecture and its parameters. However, storing the original generative features for every synthesized image is unfeasible in large-scale deployments. Therefore, in our threat model, the provider must rely on the latent noise $\hat{z}_T$ obtained through an inversion process to detect forged samples.

% We first introduce the structure of our analysis, which divides the risk of KIP-based backdoor attacks into three parts: projection loss, conflict loss, and generalization gap. Then, we provide an upper bound for each part of the risk.

\subsection{Theoretical Framework} \label{sec:4.2}
\noindent \textbf{Rate–Distortion Perspective.} We study the theoretical limits of semantic watermark forgery from an information-theoretic perspective inspired by RD theory. Under this view, the forward and reverse processes of generative models are abstracted as the encoding and decoding stages of a lossy compression, in which the adversary's reconstruction acts as a lossy decoder approximating a target watermarked latent.

Let $X$ denote the target latent source, $\hat{X}$ its lossy reconstruction, and $Y_w$ the corresponding decoded watermarked image in the data space (see Fig.~\ref{fig:2}). This formulation serves as an abstraction for analyzing fundamental limits, rather than an exact equivalence to the diffusion inversion process.

\noindent \textbf{Assumptions.} Our analysis relies on two standard assumptions regarding latent statistics and the distortion measure. First, we assume that latent representations follow a Gaussian distribution with shared covariance (Assumption~\ref{assumpt:1}). Second, we adopt the MSE as the distortion measure for the information-theoretic analysis (Assumption~\ref{assumpt:2}). These assumptions facilitate a tractable characterization of fundamental distortion limits. We emphasize that the geometric metrics introduced in Sec~\ref{sec:4.3} are not used as distortion measures in the rate--distortion sense, but rather to characterize how inevitable distortion manifests in the latent space. Further theoretical analysis is provided in Sec.~\ref{sec:b}, with empirical validations in Sec.~\ref{sec:d}.

% This framework relies on two assumptions regarding the latent statistics and the distortion metric. First, we assume Gaussian latent distributions with shared covariance (Assumption~\ref{assumpt:1}), a standard approximation supported by empirical evidence in diffusion models where latents become approximately normal after sufficient denoising steps. Second, we adopt mean squared error (MSE) as the distortion metric (Assumption~\ref{assumpt:2}).

\subsubsection{Rate-Distortion Lower Bound under Model Mismatch}

As illustrated in Fig.~\ref{fig:2}, an adversary must rely on a proxy model $\Theta_{\mathcal{A}}$ that generally differs from the target model $\Theta$. This mismatch imposes intrinsic limits on the achievable inversion accuracy. We characterize these constraints in terms of an irreducible information loss and a corresponding effective rate penalty, which together yield a lower bound on the achievable distortion.
 
\begin{definition}[Irreducible Information Loss and Effective Rate Penalty] \label{def:4.1}
Let \(P_{\Theta}(\hat{X}\mid Y_w)\) and \(P_{\Theta_{\mathcal A}}(\hat{X}\mid Y_w)\) denote the target and proxy posterior distributions, respectively. The irreducible information loss induced by model mismatch is defined as
\begin{align*}
D_{\mathrm{irr}} := \mathbb{E}_{Y_w} \left[ D_{\mathrm{KL}}\left(P_{\Theta}(\cdot\mid Y_w)\| P_{\Theta_{\mathcal A}}(\cdot\mid Y_w) \right) \right].
\end{align*}
Under the Gaussian latent approximation with variance $\sigma^2$, we instantiate the corresponding effective rate penalty as
\begin{align*}
I_{\mathrm{pen}} := \frac{1}{2} \log_2\left(1+\frac{D_{\mathrm{irr}}}{\sigma^2}\right).
\end{align*}
Here, $D_{\mathrm{irr}}$ measures the posterior discrepancy caused by model mismatch, while $I_{\mathrm{pen}}$ serves as a Gaussian surrogate for the reduction in the adversary's usable information rate.
\end{definition}

\begin{theorem}[Rate-Distortion Lower Bound under Model Mismatch] \label{thm:4.1}
Under Assumptions~\ref{assumpt:1} and~\ref{assumpt:2}, the minimal achievable distortion for an adversary operating at information rate $R$, denoted $D_{\min}(R)$, is lower-bounded by
\begin{equation} \label{eq:4.1}
D_{\min}(R) \;\ge\; D_{\mathrm{irr}} + \sigma^2 \cdot 2^{-2(R-I_{\mathrm{pen}})}.
\end{equation}
\end{theorem}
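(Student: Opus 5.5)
The plan is to split the adversary's total distortion into a mismatch part and an ordinary rate-limited coding part. Each part is then bounded separately: the first by $D_{\mathrm{irr}}$, the second by the Gaussian RD function of Definition~\ref{df:RD_GS} evaluated at a reduced rate.

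\textbf{Step 1: orthogonal decomposition.} Let $\hat X_{\mathcal A}$ denote the adversary's reconstruction, produced through the proxy posterior $P_{\Theta_{\mathcal A}}(\cdot\mid Y_w)$. Let $\bar X := \mathbb{E}_{\Theta}[X\mid Y_w]$ be the target-model estimate. Under Assumption~\ref{assumpt:2} (MSE) I write
\begin{equation*}
\mathbb{E}\|X-\hat X_{\mathcal A}\|^2 = \mathbb{E}\|X-\bar X\|^2 + \mathbb{E}\|\bar X-\hat X_{\mathcal A}\|^2 .
\end{equation*}
The cross term vanishes because $X-\bar X$ is orthogonal to every $Y_w$-measurable quantity. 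The adversary's output is a function of $Y_w$ and of randomness independent of $X$ given $Y_w$, so it qualifies.

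\textbf{Step 2: the mismatch term.} Under Assumption~\ref{assumpt:1}, both posteriors are Gaussian with shared covariance. The KL divergence between them therefore reduces to a Mahalanobis distance between their means. After normalizing by the shared covariance (the scaling implicit in Definition~\ref{def:4.1}), averaging over $Y_w$ identifies $D_{\mathrm{irr}}$ with the expected squared mean discrepancy. This gives a lower bound of $D_{\mathrm{irr}}$ on the second term. This is the step I expect to be the main obstacle. The identification is only exact up to the covariance normalization, so I would state it explicitly as the instantiation licensed by Assumption~\ref{assumpt:1}, not as a general inequality.

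\textbf{Step 3: the rate-limited term.} The remaining error $\mathbb{E}\|X-\bar X\|^2$ reflects reconstructing a Gaussian source of variance $\sigma^2$ from an encoding at rate $R$. The mismatch consumes part of that rate. I would argue by a data-processing/chain-rule step that the information the adversary can usefully transfer is at most $R-I_{\mathrm{pen}}$. The surrogate $I_{\mathrm{pen}}=\tfrac12\log_2(1+D_{\mathrm{irr}}/\sigma^2)$ is precisely the Gaussian-channel capacity loss of an additive independent perturbation of power $D_{\mathrm{irr}}$. Inverting Definition~\ref{df:RD_GS} at effective rate $R_{\mathrm{eff}}=R-I_{\mathrm{pen}}$ then gives $D\ge\sigma^2 2^{-2R_{\mathrm{eff}}}$ (with logs in base 2).

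\textbf{Step 4: combine.} Adding the bounds from Steps 2 and 3, and taking the infimum over all adversary encoder–decoder pairs at rate $R$, yields \eqref{eq:4.1}. For $R<I_{\mathrm{pen}}$ the second term exceeds $\sigma^2$. The bound then remains valid but trivial, since the distortion is already at least $\sigma^2$ by the $D>\sigma^2$ branch of the RD function. I would note this as the regime in which forgery fails outright.
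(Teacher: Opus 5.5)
Your skeleton is the paper's: an additive floor $D_{\mathrm{irr}}$ plus the Gaussian inverse RD function evaluated at $R-I_{\mathrm{pen}}$. Your Steps 1--2 actually justify the additivity that the paper only asserts, modulo the normalization you already flag.

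\textbf{The gap is Step 3.} After your orthogonal decomposition, $\mathbb{E}\|X-\bar X\|^2$ is the Bayes risk of the \emph{target} posterior mean given $Y_w$. It is fixed by the target joint law of $(X,Y_w)$ and does not involve the proxy at all. So no data-processing or chain-rule argument can place a proxy-dependent quantity such as $I_{\mathrm{pen}}$ in a lower bound on it. The best available bound is $\sigma^2 2^{-2I(X;Y_w)}$, i.e.\ $\sigma^2 2^{-2R}$ if $R$ is the rate carried by $Y_w$. The mismatch has already been charged in full to the second term, so charging it again as lost rate is double counting.

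A concrete example shows the step cannot be repaired. The mismatch you model in Step 2 can be a constant shift, $\mu_{\mathcal A}(Y_w)=\mu_\Theta(Y_w)+b$, which is invertible and loses no information. Take the scalar Gaussian channel $Y_w=X+W$ with $I(X;Y_w)=R$. The adversary that outputs $\mu_{\mathcal A}(Y_w)$ attains distortion exactly $b^2+\sigma^2 2^{-2R}$. Because $2^{2I_{\mathrm{pen}}}=1+D_{\mathrm{irr}}/\sigma^2$, the right-hand side of \eqref{eq:4.1} under your identification $D_{\mathrm{irr}}=b^2$ equals $b^2+(\sigma^2+b^2)2^{-2R}$, which is strictly larger for $b\neq 0$. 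So under your identifications Step 3 is false rather than merely unproven, and your decomposition certifies only $D_{\mathrm{irr}}+\sigma^2 2^{-2R}$.

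The paper does not derive the penalty either. Its Step 2 \emph{models} the usable rate as $R_{\mathrm{eff}}\le R-I_{\mathrm{pen}}$, the Gaussian surrogate of Definition~\ref{def:4.1}. Its Step 3 only uses that $\sigma^2 2^{-2R}$ is decreasing. To reach \eqref{eq:4.1} you must adopt $R_{\mathrm{eff}}\le R-I_{\mathrm{pen}}$ as an explicit modeling postulate, not as a consequence of data processing. You should also note that, under the identifications of your Steps 1--2, that postulate is contradicted by the example above.

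Two smaller slips. First, your reading of $I_{\mathrm{pen}}$ as the capacity lost to an independent perturbation of power $D_{\mathrm{irr}}$ is inaccurate. Adding such noise to a Gaussian channel with signal power $\sigma^2$ and noise power $\nu$ costs $\tfrac12\log_2\frac{(\sigma^2+\nu)(\nu+D_{\mathrm{irr}})}{\nu(\sigma^2+\nu+D_{\mathrm{irr}})}$ bits, which depends on $\nu$. Moreover, your Step 2 mismatch is a deterministic mean shift, not independent noise.

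Second, the remark in Step 4 is backwards. The $D>\sigma^2$ branch of Definition~\ref{df:RD_GS} says that distortion $\sigma^2$ is achievable at zero rate, which is an \emph{upper} bound on the minimal distortion. It therefore cannot make a bound exceeding $D_{\mathrm{irr}}+\sigma^2$ ``valid but trivial''. For $R<I_{\mathrm{pen}}$ you are evaluating the inverse RD function at a negative rate, where it has no meaning.
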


% The proof of Thm.~\ref{thm:4.1} can be found in Sec~\ref{sec:b}. 
\begin{remark} 
The positivity of the distortion floor follows from the nonzero posterior mismatch. When $D_{\mathrm{irr}} > 0$, the target and proxy posteriors are distinct. Pinsker's inequality provides a standard control of their total-variation discrepancy in terms of the KL mismatch. Therefore, the effective penalty \(I_{\mathrm{pen}}\) in Def.~\ref{def:4.1} prevents the expected distortion from vanishing.
\end{remark}

From Thm.~\ref{thm:4.1}, we identify two key consequences of model mismatch. First, the irreducible information loss $D_{\mathrm{irr}}$ introduces a rate-independent distortion floor that cannot be eliminated by increasing the information rate $R$. Second, the penalty $I_{\mathrm{pen}}$ reduces the adversary's usable information rate, leading to a slower decay of distortion as $R$ increases. Together, these effects indicate that high-fidelity semantic forgery is intrinsically constrained when the adversary relies on an imperfect proxy model.

The following corollary formalizes the existence of an irreducible distortion implied by Thm.~\ref{thm:4.1}.
\begin{corollary}[Existence of Distortion Error]
\label{cor:1}
Under the conditions of Theorem~\ref{thm:4.1}, there exists a constant $\epsilon > 0$ such that $D_{\min}(R) \ge \epsilon$ for any rate $R$. In particular, even with unbounded rate or computational resources, a black-box adversary cannot achieve arbitrarily small inversion distortion under model mismatch.
\end{corollary}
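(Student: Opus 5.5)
The plan is to read Corollary~\ref{cor:1} off Theorem~\ref{thm:4.1} by discarding the rate-dependent term and then showing that what remains is strictly positive and independent of $R$. I will take $\epsilon := D_{\mathrm{irr}}$.

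First, I observe that the second term in \eqref{eq:4.1} is nonnegative for every $R$. Indeed, $\sigma^2>0$ is the latent variance (and equals $1$ for $z_T\sim\mathcal{N}(0,\mathbf{I})$), and $2^{-2(R-I_{\mathrm{pen}})}>0$ for all real $R$ and all finite $I_{\mathrm{pen}}$. Therefore
\begin{equation*}
D_{\min}(R) \;\ge\; D_{\mathrm{irr}} + \sigma^2 2^{-2(R-I_{\mathrm{pen}})} \;>\; D_{\mathrm{irr}}
\end{equation*}
holds uniformly in $R$. This covers the limit $R\to\infty$, i.e.\ unbounded rate or computation, since the bound never references $R$ except through a positive term.

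Second, I must verify that $D_{\mathrm{irr}}>0$ under the standing hypothesis of model mismatch. Here $D_{\mathrm{irr}}$ is an expectation over $Y_w$ of KL divergences, so it is nonnegative. By Gibbs' inequality, it equals zero iff $P_{\Theta}(\cdot\mid Y_w)=P_{\Theta_{\mathcal A}}(\cdot\mid Y_w)$ for $P_{Y_w}$-almost every $Y_w$. I interpret ``model mismatch'' (as in the Remark following Theorem~\ref{thm:4.1}) to mean that the two posteriors differ on a set of $Y_w$ of positive probability. On that set the integrand is strictly positive, so its expectation is strictly positive. Pinsker's inequality, $\mathrm{TV}\le\sqrt{D_{\mathrm{KL}}/2}$, makes this quantitative: a nonvanishing expected total-variation gap $\eta>0$ forces $D_{\mathrm{irr}}\ge 2\eta^2$ by Jensen. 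This gives an explicit $\epsilon$ if desired.

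The main obstacle is this second step. It is really a matter of pinning down the hypothesis: Theorem~\ref{thm:4.1} itself allows $D_{\mathrm{irr}}=0$, in which case the bound degenerates to the classical $\sigma^2 2^{-2R}\to 0$. The corollary is therefore only as strong as the formalization of ``mismatch'' as a posterior discrepancy of positive measure. I would state that assumption explicitly and note that under Assumption~\ref{assumpt:1} it is implied by, for example, distinct Gaussian posterior means on a positive-measure set. With it, setting $\epsilon=D_{\mathrm{irr}}$ (or $2\eta^2$) completes the argument.
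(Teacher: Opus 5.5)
Your proof is correct and is essentially the paper's own reasoning. The paper reads the corollary directly off Theorem~\ref{thm:4.1}: the term $\sigma^2 2^{-2(R-I_{\mathrm{pen}})}$ is positive for every $R$, so $\epsilon = D_{\mathrm{irr}}$ works uniformly in $R$ once $D_{\mathrm{irr}}>0$. Your use of Gibbs' inequality, and the optional Pinsker--Jensen bound $D_{\mathrm{irr}} \ge 2\eta^2$, just makes explicit the mismatch hypothesis that the paper assumes without justification when it writes ``Since $D_{\mathrm{irr}}>0$'' in the proof of Theorem~\ref{thm:4.1}.
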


These distortions can be attributed to both \textit{intrinsic} inaccuracies of the target model and \textit{external} errors induced by proxy models; a detailed description is deferred to Sec.~\ref{sec:c.1}.

\subsection{Geometric Interpretation of Latent Discrepancy}
\label{sec:4.3}
% Theoretical Analysis of Geometric Deviations

\begin{figure}[!t]
  \centering
   \begin{subfigure}[c]{0.48\columnwidth}
        \centering
        \includegraphics[width=0.9\columnwidth]{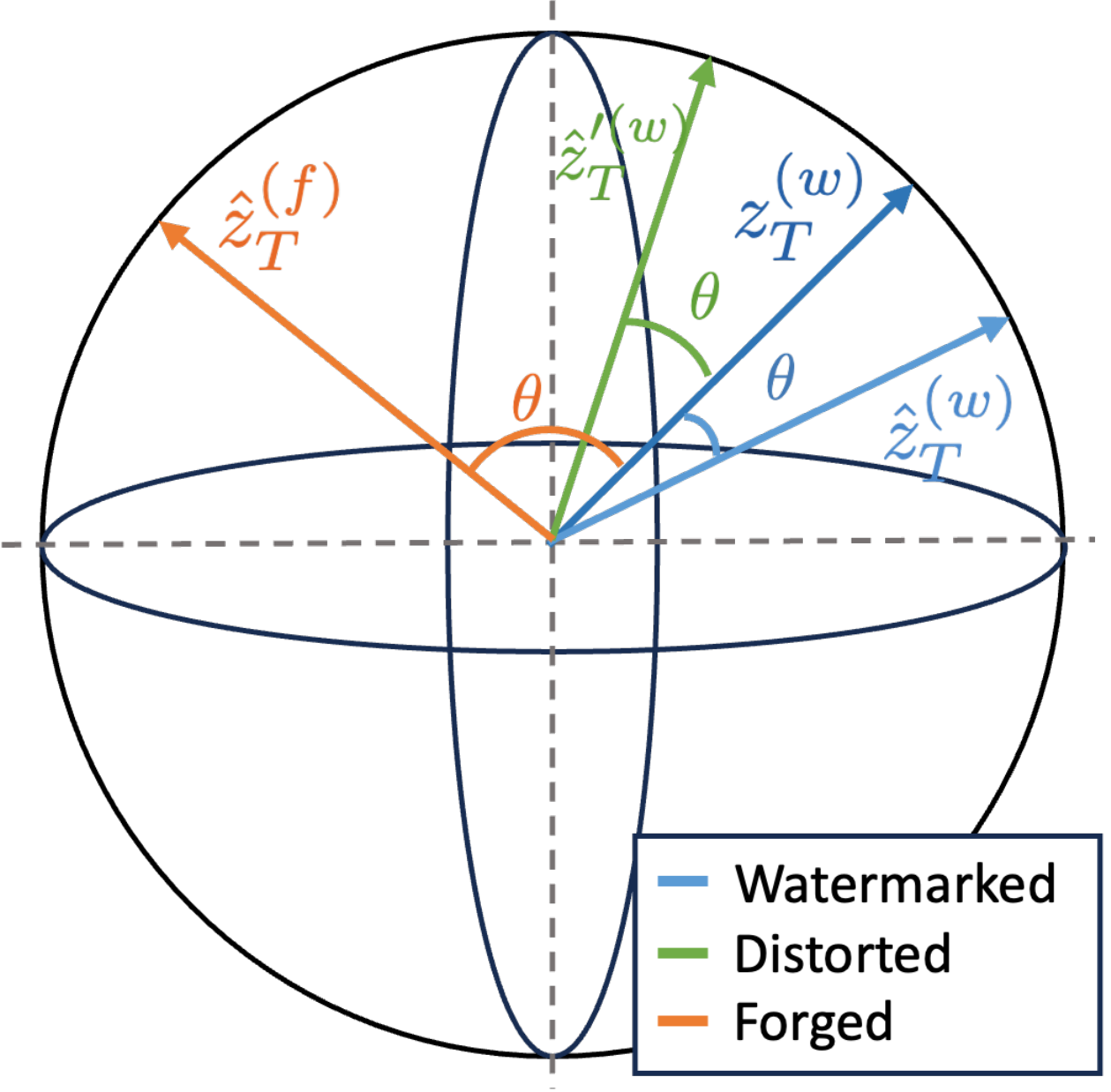}
        \caption{Hyperspherical Space}
        \label{fig:3_cos}
    \end{subfigure}
    \hspace{-0.02\columnwidth}
    \begin{subfigure}[c]{0.48\columnwidth}
        \centering
        \includegraphics[width=0.95\columnwidth]{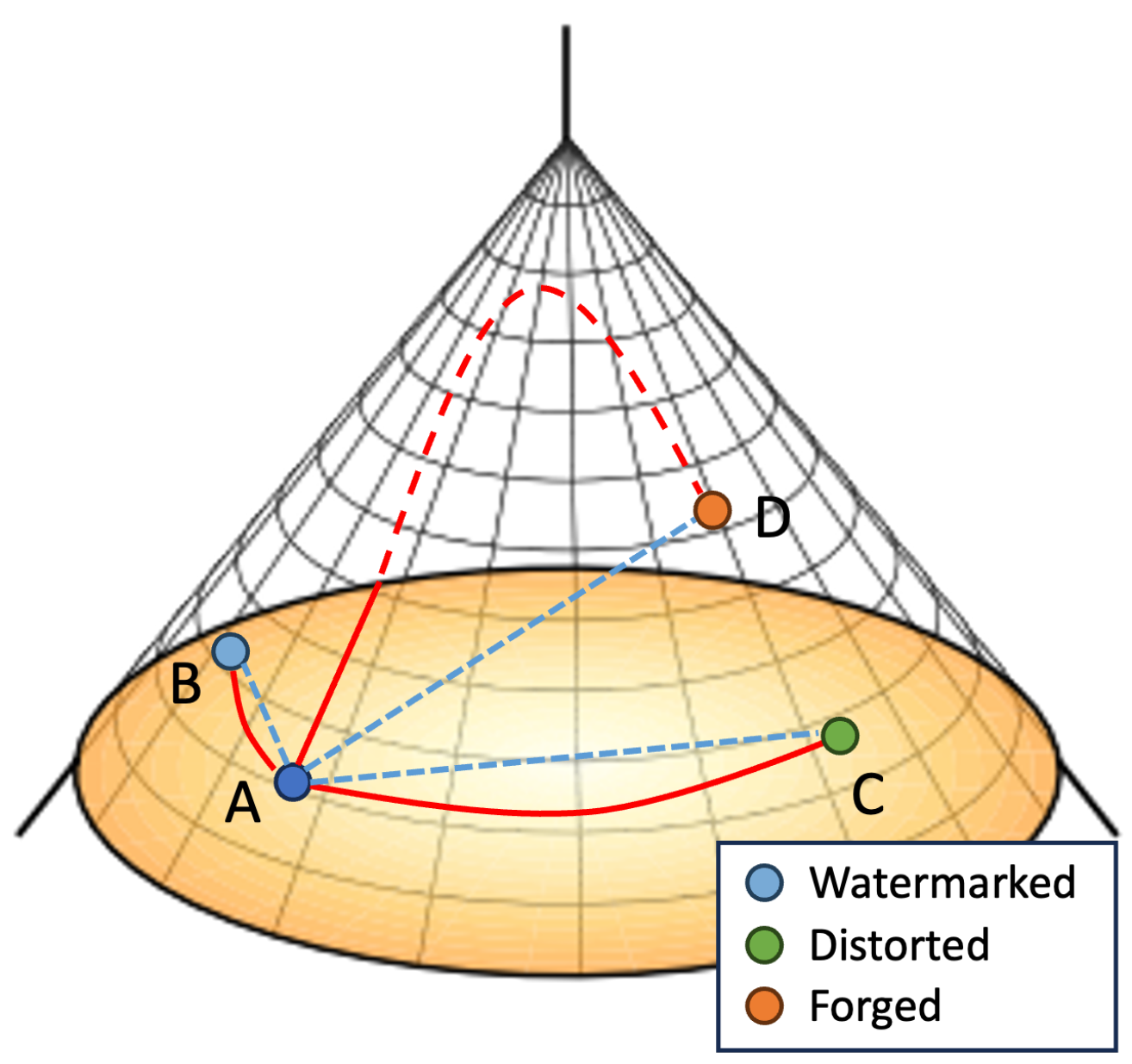}
        \caption{SPD Riemannian Manifold}
        \label{fig:3_spd}
    \end{subfigure}
    \vspace{-0.02in}
   % \fbox{\rule{0pt}{4cm}\rule{0.95\columnwidth}{0pt}}
   \caption{Geometric interpretation of latent discrepancy. (a) illustrates angular separation $\theta$ on a hypersphere; (b) depicts the contrast between the geodesic distance (red solid line) and the Euclidean distance (blue dashed line) on the SPD manifold, with the dark blue dot denoting the watermarked latent $z_T^{(w)}$.}
   \vspace{-0.1in}
   \label{fig:3}
\end{figure}

\subsubsection{From Distortion to Latent Geometry}
Under a Gaussian approximation, the latent space features a well-defined geometric structure, shared by both the recovered latent $\hat{z}_T$ and the reference latent $z_T$. In contrast, for forged samples, the distortion induced by a proxy model disrupts this intrinsic structure, resulting in systematic, measurable geometric deviations (see Fig.~\ref{fig:3}). Such a discrepancy enables the SP to distinguish forged samples by comparing a given recovered latent $\hat{z}_T$ with the reference watermarked latent $z_T^{(w)}$ from the target model (see Fig.~\ref{fig:8} in Appendix). In the following, we examine how such geometric distortion manifests along specific dimensions of the latent space.

% After the inversion process, both the recovered latent $\hat{z}_T$ and the reference latent $z_T$ reside in the same latent space, which admits a well-defined geometric structure under a Gaussian approximation. This consistency allows the SP to detect forgery by comparing $\hat{z}_T$ with a reference watermarked latent $z_T^{(w)}$ from the target model. 

\subsubsection{Directional Drift on the Hypersphere}
% (approximately) the same isotropic latent geometry, $\text{Var}(r) = O(1/N)$
% one-order, global, geodesic drift（cosine / angle）in directional manifestation hypersphere

With an isotropic Gaussian prior $\mathcal{N}(0, \mathbf{I}_N)$, where $N$ denotes the dimension of the latent space, any latent $\tilde{z}$ admits a polar decomposition $r \cdot u$, where the magnitude $r = \left \lVert \tilde{z} \right \rVert_2$ satisfies $r^2 \sim \chi^2(N)$ and the direction $u$ is uniformly distributed on the unit hypersphere $\mathbb{S}^{N-1}$. Since $r$ and $u$ are statistically independent, this decomposition effectively decouples the radial energy from the latent's geometric orientation. In high dimensions ($N \gg 1$), the radial component $r$ concentrates sharply around its mean, making magnitude variations negligible. Thus, magnitude variations become uninformative, and the discriminative geometric variation is almost entirely preserved in the angular component $u$. 

This rotational invariance further implies that semantic perturbations from forgeries cannot be absorbed by radial variation and instead manifest as angular distortions, as shown in Fig.~\ref{fig:3_cos}. We treat such perturbations as geometric noise that drives the recovered latent $\hat{z}_T$ to deviate from the original $z_T$ on the hypersphere. To characterize this discrepancy, we define the \textit{Spherical Angular Distortion} (SAD):

\begin{definition}[Spherical Angular Distortion (SAD)] Let $z_T, \hat{z}_T \in \mathbb{R}^N$ denote the original and recovered latent, respectively. The \emph{Spherical Angular Distortion} (SAD) between $z_T$ and $\hat{z}_T$ is defined as the geodesic distance between their normalized projections on the unit hypersphere $\mathbb{S}^{N-1}$,
\[
\mathrm{SAD}(z_T, \hat{z}_T) = \arccos\!\big( S_{\cos}(z_T, \hat{z}_T) \big),
% \arccos \left( \frac{\langle z_T, \hat{z}_T \rangle}{\| z_T \|_2 \| \hat{z}_T \|_2} \right).
\]
where $S_{\cos}(z_T, \hat{z}_T) = \frac{\langle z_T, \hat{z}_T \rangle}{\| z_T \|_2 \| \hat{z}_T \|_2}$ denotes the cosine similarity between $z_T$ and $\hat{z}_T$.
\end{definition}

% In Lemma~\ref{lem:1}, we show that black-box forgery induces a directional drift on the hypersphere, since semantic perturbations cannot be absorbed by radial variations. 

\begin{lemma}[Unavoidable Angular Distortion under Forgery Perturbations] \label{lem:1}
Let $z_T^{(w)} \in \mathbb{R}^N \sim \mathcal{N}(0, \mathbf{I}_N)$. For a recovered watermarked latent $\hat{z}_T^{(w)}$, the orientation remains well aligned, such that $\mathrm{SAD}(z_T^{(w)}, \hat{z}_T^{(w)}) \approx 0$. Conversely, for a forged recovered latent $\hat{z}_T^{(f)} = z_T^{(w)} + \delta$ stemming from a mismatched proxy model $\Theta_{\mathcal{A}}$, the distortion satisfies: $\mathbb{E}\!\left[\mathrm{SAD}(z_T^{(w)}, \hat{z}_T^{(f)})\right] > 0$. In high-dimensional regimes ($N \gg 1$), this angular displacement remains bounded away from zero, acting as a manifest signature of forgery.
\end{lemma}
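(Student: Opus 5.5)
The plan is to write $\hat z_T^{(f)} = z_T^{(w)} + \delta$ and to split the perturbation $\delta$ into its radial and tangential parts relative to $z_T^{(w)}$. With $u = z_T^{(w)}/\|z_T^{(w)}\|_2$, set
\[
\delta = \delta_\parallel u + \delta_\perp, \qquad \delta_\parallel = \langle \delta, u\rangle, \qquad \delta_\perp \perp u .
\]
A direct computation then gives $\tan \mathrm{SAD}(z_T^{(w)},\hat z_T^{(f)}) = \|\delta_\perp\|_2/(\|z_T^{(w)}\|_2+\delta_\parallel)$, valid whenever the denominator is positive. So the angle vanishes exactly when $\delta_\perp=0$, i.e.\ when the perturbation is purely radial. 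The genuine case then follows immediately. For $\hat z_T^{(w)}$ recovered with the target model itself, the only error is the intrinsic inversion error. We model it as a perturbation with $\|\delta\|_2 \ll \|z_T^{(w)}\|_2 \approx \sqrt N$; since the latter concentrates by the $\chi^2(N)$ argument above, $\mathrm{SAD}\approx \|\delta_\perp\|_2/\sqrt N \approx 0$.

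For the forged case I proceed in three steps.

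\textbf{Step 1: the perturbation carries nonvanishing energy.} By Theorem~\ref{thm:4.1} and Corollary~\ref{cor:1}, under Assumption~\ref{assumpt:2} (MSE), the per-coordinate distortion satisfies $\tfrac1N\mathbb E\|\delta\|_2^2 \ge D_{\min}(R) \ge \epsilon>0$. Hence $\mathbb E\|\delta\|_2^2 \ge \epsilon N$.

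\textbf{Step 2: most of that energy is tangential.} Here I use the rotational invariance of $z_T^{(w)}\sim\mathcal N(0,\mathbf I_N)$: its direction $u$ is uniform on $\mathbb S^{N-1}$ and independent of $\|z_T^{(w)}\|_2$. Since the proxy mismatch is generated by a different model, I assume $\delta$ is not aligned with $u$, in the sense that its conditional law given $u$ has no preferred orientation relative to $u$. Then $\mathbb E\,\delta_\parallel^2 = \tfrac1N\mathbb E\|\delta\|_2^2$, and therefore $\mathbb E\|\delta_\perp\|_2^2 = (1-\tfrac1N)\mathbb E\|\delta\|_2^2 \ge (1-\tfrac1N)\epsilon N$.

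\textbf{Step 3: conclude the angular lower bound.} Because $\|z_T^{(w)}\|_2 = \sqrt N(1+O_P(N^{-1/2}))$ and $\delta_\parallel = O_P(1)$, we get $\tan \mathrm{SAD} \approx \|\delta_\perp\|_2/\sqrt N$, which is of order at least $\sqrt{\epsilon}$. I expect $\mathbb E[\mathrm{SAD}] \gtrsim \arctan\sqrt{\epsilon(1-1/N)} > 0$, uniformly in $N$, which is the claimed bound away from zero. Passing from this second-moment statement to a bound on $\mathbb E[\mathrm{SAD}]$ itself should use concentration: when $\delta$ has spread-out energy, $\|\delta_\perp\|_2^2/N$ concentrates as well.

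The main obstacle is Step 2. Theorem~\ref{thm:4.1} controls only $\mathbb E\|\delta\|^2$, and a worst-case adversary could in principle put all of its error radially, giving $\mathrm{SAD}=0$. My first attempt would be to argue that radial errors are themselves detectable, because they shift $\|\hat z\|_2^2$ away from its $\chi^2(N)$ concentration window of width $O(\sqrt N)$. A purely radial error of size $\sqrt{\epsilon N}$ would therefore move the norm by $\Theta(\sqrt N)$, far outside that window, so the distortion floor cannot be absorbed radially, which is the content of the paper's rotational-invariance remark. Failing that, I would state the non-alignment of $\delta$ with $z_T^{(w)}$ explicitly as part of the model of the mismatched proxy.
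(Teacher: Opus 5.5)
Your argument has the same skeleton as the paper's: split $\delta$ along and orthogonal to $z_T^{(w)}$, use $\|z_T^{(w)}\|_2^2\approx N$, and assume the mismatch perturbation is not aligned with $z_T^{(w)}$. The quantitative step, however, differs in a way that matters.

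The paper Taylor-expands $\cos\theta\approx 1-\|\delta_\perp\|_2^2/(2\|z\|_2^2)$ under $\|\delta\|_2\ll\|z\|_2$. It then only asserts $\|\delta_\perp\|_2^2>0$ and never returns to Theorem~\ref{thm:4.1}. Read literally, this gives $\theta\approx\|\delta_\perp\|_2/\sqrt N\to 0$ unless $\|\delta_\perp\|_2^2$ grows linearly in $N$. The small-perturbation regime is also at odds with an $\Omega(1)$ angle.

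You instead use the exact identity for $\tan\mathrm{SAD}$, which avoids that expansion. You import the per-coordinate floor of Corollary~\ref{cor:1} to get $\mathbb{E}\|\delta\|_2^2\ge\epsilon N$, which is exactly the linear-in-$N$ energy that ``bounded away from zero for $N\gg1$'' requires. You also turn ``not aligned'' into a quantitative isotropy statement. Your route therefore gives a dimension-uniform bound where the paper's gives only positivity at fixed $N$. The price is a set of hypotheses you should state outright.

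(i) The normalized MSE of Assumption~\ref{assumpt:2} is written for $X_0$. You must identify it with $\frac1N\mathbb{E}\|\hat z_T^{(f)}-z_T^{(w)}\|_2^2$, as the paper does empirically in Sec.~\ref{sec:5.4}, and you must take $\epsilon$ independent of $N$.

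(ii) Isotropy is stronger than you need, and it is not very realistic, since $\hat z_T^{(f)}$ is computed from an image generated from $z_T^{(w)}$. A constant tangential fraction $\|\delta_\perp\|_2\ge\sqrt{c}\,\|\delta\|_2$ suffices, because $\mathrm{SAD}\ge\arctan(\|\delta_\perp\|_2/(\|z_T^{(w)}\|_2+\|\delta\|_2))$ holds whatever the sign of $\delta_\parallel$.

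(iii) Step~3 genuinely needs a concentration or moment condition. Since $x\mapsto\arctan\sqrt{x}$ is concave, Jensen runs the wrong way. For example, putting energy $\epsilon N/p$ on an event of probability $p$ meets your second-moment bound yet gives $\mathbb{E}[\mathrm{SAD}]\le\pi p$. A Paley--Zygmund condition $\mathbb{E}\|\delta_\perp\|_2^4\le K(\mathbb{E}\|\delta_\perp\|_2^2)^2$ is enough. It gives $\|\delta_\perp\|_2^2\ge\frac12\mathbb{E}\|\delta_\perp\|_2^2$ with probability at least $1/(4K)$, which together with the bound in (ii) finishes the step.

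Drop the norm-window fallback. It shows that a purely radial error would be caught by a norm test, not that $\mathrm{SAD}>0$, so stating non-alignment as a modeling assumption is the right choice.

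Finally, apply your own Step~3 to the intrinsic inversion error, whose per-coordinate MSE is roughly $0.2$--$0.6$ in Tab.~\ref{tab:5_distort}. Under the same hypotheses it gives a nonzero angle for genuine latents too. So $\mathrm{SAD}\approx 0$ in the genuine case is an assumption in both proofs. What the argument really supports is a separation driven by the excess distortion $D_{\mathrm{irr}}$.
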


Lem.~\ref{lem:1} establishes that the orientation of the watermarked latent is inherently sensitive to proxy-induced noise. From a geometric perspective, this angular drift can be quantified through the alignment of latent vectors on $\mathbb{S}^{N-1}$.

% hyperspherical
\begin{remark}[Implementation via Cosine Similarity] Since the geodesic (angular) distance $\theta = \mathrm{arccos}(S_{\cos})$ is a monotonically decreasing in $S_{\cos}$ on $[0, \pi]$, the angular distortion characterized in Lem.~\ref{lem:1} can be equivalently captured by $S_{\cos}$. Thus, we employ cosine similarity as a practical metric to evaluate global directional drift on the hypersphere.

% \[
% S_{\cos}(z_T, \hat{z}_T) = \frac{\langle z_T, \hat{z}_T \rangle}{\|z_T\|_2 \, \|\hat{z}_T\|_2}.
% \]
\end{remark}

% \insight{not yet, think}

\subsubsection{Deformation on the SPD Manifold}
Beyond point-wise drift, forgery perturbations compromise the local structural integrity of the latent manifold. As shown in Fig.~\ref{fig:3_spd}, we adopt a complementary perspective by mapping local statistics onto the manifold of Symmetric Positive Definite (SPD) matrices~\cite{bhatia2009positive}, $\mathcal{P}_N$. When endowed with the \textit{Affine-Invariant Riemannian Metric} (AIRM)~\cite{pennec2006riemannian}, $\mathcal{P}_N$ forms a smooth manifold where the geodesic distance between $P_1, P_2 \in \mathcal{P}_N$ (the red solid path in Fig.~\ref{fig:3_spd}) is:
\[
d_{\mathrm{AIRM}}(P_1, P_2)
= \bigl\| \log(P_1^{-1/2}P_2P_1^{-1/2}) \bigr\|_F,
\]
where $\|\cdot\|_F$ denotes the Frobenius norm. This metric is \textit{congruence-invariant}, ensuring that the intrinsic geometry of covariance representations remains preserved across various distortion attacks. By contrast, the Euclidean distance (the blue dashed line in Fig.~\ref{fig:3_spd}) ignores the intrinsic curvature of the SPD manifold and lacks the invariance properties for detection.

\begin{definition}[Local SPD Geometric Inconsistency (LGI)] \label{def:4.7} Let $\mathcal{B}(z_T) = \{t_i\}_{i=1}^M$ denote a local neighborhood of tokens around a latent $z$. We define the local covariance mapping $\mathcal{C}: z \mapsto P \in \mathcal{P}_N$ as \[
\mathcal{C}(z) := \frac{1}{M}\sum_{i=1}^{M} (t_i - \bar{t})(t_i - \bar{t})^\top + \varepsilon \mathbf{I}_N,
\]
where $\bar{t}$ is the sample mean and $\varepsilon > 0$. The \textit{Local SPD Geometric Inconsistency} is then defined using AIRM: \[ \mathrm{LGI}(z_T, \hat{z}_T) = d_{\mathrm{AIRM}}(\mathcal{C}(z_T), \mathcal{C}(\hat{z}_T)). \]
\end{definition}

% By analyzing second-order discrepancies, this metric captures structural manifold deformations that extend beyond individual point-wise deviations.
% the second-order statistic derived from $\mathcal{B}(z)$.
\begin{table*}[t]
\centering
%%% --- (Table 1) --- %%%
\begin{minipage}[c]{0.35\textwidth}
\centering
\caption{Watermark TPR under guidance-based forgery attacks (TPR@$X$-FPR).} \label{tab:1_attack}
\vspace{-0.02in}
\resizebox{0.99\columnwidth}{!}{
\begin{tabular}{clcccccc}
\toprule
    & & \multicolumn{4}{c}{Frequency-domain} & \multicolumn{2}{c}{Bitstream-level} \\
    \cmidrule(lr){3-6} \cmidrule(lr){7-8} 
    Proxy & Target & TR & RingID & HSTR & HSQR & GS & TAG \\
    \midrule
    \multirow{5}{*}{SD2.1}  
    & SD2.1 & 0.991 & 1.000 & 0.996 & 0.993 & 0.999 & 1.000 \\
    & SDXL & 0.976 & 1.000 & 0.991 & 0.989 & 0.999 & 0.999 \\
    & PixArt-$\Sigma$ & 0.855 & 1.000 & 0.656 & 0.614 & 0.943 & 0.946 \\
    & FLUX & 0.161 & 0.892 & 0.002 & 0.095 & 0.588 & 0.550 \\
    & SD3 & 0.657 & 0.187 & 0.000 & 0.478 & 0.819 & 0.868 \\
    \cmidrule(lr){1-8}
    \multirow{5}{*}{SD3}  
    & SD2.1 & 0.972 & 0.966 & 0.224 & 0.869 & 0.991 & 0.974 \\
    & SDXL & 0.957 & 0.994 & 0.612 & 0.974 & 0.992 & 0.992 \\
    & PixArt-$\Sigma$ & 0.887 & 0.994 & 0.507 & 0.968 & 0.943 & 0.958 \\
    & FLUX & 0.612 & 0.582 & 0.022 & 0.889 & 0.995 & 0.981 \\
    & SD3 & 0.849 & 0.836 & 0.343 & 0.973 & 0.996 & 0.997 \\
\bottomrule
\end{tabular}}
\end{minipage}
\hspace{0.01\textwidth}
%%% --- (Table 2) --- %%%
\begin{minipage}[c]{0.63\textwidth}
\centering
\caption{Our detection performance (AUC) under guidance-based attacks. ``G-Cos'' and ``L-SPD'' denote the global and local geometric deviation metric in Sec.~\ref{sec:4}, respectively.} \label{tab:2_reprompt}
\vspace{-0.02in}
\resizebox{0.99\columnwidth}{!}{
\begin{tabular}{clc >{\columncolor{gray!10}}cc>{\columncolor{gray!10}}cc>{\columncolor{gray!10}}cc >{\columncolor{gray!10}}cc >{\columncolor{gray!10}}cc >{\columncolor{gray!10}}c}
\toprule
    & & \multicolumn{8}{c}{Frequency-domain} & \multicolumn{4}{c}{Bitstream-level} \\
    \cmidrule(lr){3-10} \cmidrule(lr){11-14} 
    & & \multicolumn{2}{c}{TR} & \multicolumn{2}{c}{RingID} & \multicolumn{2}{c}{HSTR} & \multicolumn{2}{c}{HSQR} & \multicolumn{2}{c}{GS} & \multicolumn{2}{c}{TAG}\\
    \cmidrule(lr){3-4} \cmidrule(lr){5-6} \cmidrule(lr){7-8} \cmidrule(lr){9-10} \cmidrule(lr){11-12} \cmidrule(lr){13-14}
    Proxy & Target & G-Cos & L-SPD & G-Cos & L-SPD & G-Cos & L-SPD & G-Cos & L-SPD & G-Cos & L-SPD & G-Cos & L-SPD \\
    \midrule
    \multirow{5}{*}{SD2.1}  
    & SD2.1 & 0.955 & 0.929 & 0.982 & 0.968 & 0.983 & 0.966 & 0.958 & 0.933 & 0.975 & 0.948 & 0.974 & 0.964 \\
    & SDXL & 1.000 & 0.997 & 1.000 & 0.998 & 1.000 & 0.997 & 1.000 & 0.998 & 1.000 & 0.995 & 1.000 & 0.988 \\
    & PixArt-$\Sigma$ & 1.000 & 0.992 & 1.000 & 0.981 & 1.000 & 0.953 & 1.000 & 0.998 & 1.000 & 0.995 & 1.000 & 0.875 \\
    & FLUX & 1.000 & 0.993 & 1.000 & 0.996 & 1.000 & 0.988 & 1.000 & 0.999 & 1.000 & 1.000 & 1.000 & 0.994 \\
    & SD3 & 1.000 & 1.000 & 0.923 & 0.832 & 1.000 & 0.996 & 1.000 & 0.988 & 1.000 & 0.999 & 1.000 & 0.922 \\
    \cmidrule(lr){1-14}
    \multirow{5}{*}{SD3}  
    & SD2.1 & 1.000 & 0.989 & 1.000 & 0.997 & 1.000 & 0.999 & 1.000 & 0.985 & 1.000 & 0.999 & 1.000 & 0.993 \\
    & SDXL & 1.000 & 0.991 & 1.000 & 0.994 & 1.000 & 0.992 & 1.000 & 0.997 & 1.000 & 0.985 & 1.000 & 0.977 \\
    & PixArt-$\Sigma$ & 1.000 & 0.990 & 1.000 & 0.984 & 1.000 & 0.927 & 1.000 & 0.994 & 1.000 & 0.988 & 1.000 & 0.847 \\
    & FLUX & 0.997 & 0.995 & 0.999 & 0.994 & 0.911 & 0.880 & 0.997 & 0.998 & 0.998 & 1.000 & 0.999 & 0.993 \\
    & SD3 & 0.943 & 1.000 & 0.919 & 0.832 & 0.981 & 0.982 & 0.939 & 0.919 & 0.933 & 0.993 & 0.944 & 0.804 \\
\bottomrule
\end{tabular}}
\end{minipage}
% \vspace{-0.1in}
\end{table*}

\begin{lemma}[Inevitable Structural Deformation under Forgery Perturbations] \label{lem:2}
Let $z_T^{(w)} \in \mathbb{R}^N$ be the initial watermarked latent with its associated mapping $\mathcal{C}(z_T^{(w)}) \in \mathcal{P}_N$. For a recovered watermarked latent $\hat{z}_T^{(w)}$, the local structural coherence is preserved such that $\mathrm{LGI}(z_T^{(w)}, \hat{z}_T^{(w)}) \approx 0$. Conversely, consider a forged recovered latent $\hat{z}_T^{(f)} = \hat{z}_T^{(w)} + \delta$ that induces a non-congruent deformation of the local token neighborhood $\mathcal{B}$. Under such structural perturbations, the resulting covariance $\mathcal{C}(\hat{z}_T^{(f)})$ is no longer related to $\mathcal{C}(z_T^{(w)})$ by a congruent transformation, and thus the induced inconsistency satisfies $\mathbb{E}\!\left[ \mathrm{LGI}(z_T^{(w)}, \hat{z}_T^{(f)}) \right] = \Omega(1)$.
\end{lemma}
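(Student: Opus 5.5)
The argument splits into two cases, genuine and forged, and uses the congruence invariance of AIRM as the pivot. Recall that for any invertible $A$, $d_{\mathrm{AIRM}}(AP_1A^\top, AP_2A^\top) = d_{\mathrm{AIRM}}(P_1,P_2)$. The distance therefore only sees the "relative" matrix $Q := P_1^{-1/2}P_2P_1^{-1/2}$, since $d_{\mathrm{AIRM}}(P_1,P_2) = \|\log Q\|_F = \big(\sum_k \log^2\lambda_k(Q)\big)^{1/2}$. We will take $P_1=\mathcal{C}(z_T^{(w)})$ throughout.

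\emph{Genuine case.} We model the target-model inversion as reproducing the neighborhood up to a small error, $\hat t_i = t_i + \eta_i$ with $\|\eta_i\|=o(1)$. This is consistent with the $\mathrm{SAD}\approx 0$ regime of Lem.~\ref{lem:1} and with the fact that $D_{\mathrm{irr}}=0$ when $\Theta_{\mathcal A}=\Theta$ in Thm.~\ref{thm:4.1}. Under this model the empirical covariance changes by $\Delta$ with $\|\Delta\|_F=O(\max_i\|\eta_i\|)$. The regularizer gives $\mathcal{C}\succeq\varepsilon\mathbf{I}_N$, so $P_1^{-1/2}$ has operator norm at most $\varepsilon^{-1/2}$. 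Hence $Q=\mathbf{I}+P_1^{-1/2}\Delta P_1^{-1/2}$ satisfies $\|Q-\mathbf I\|_F\le\varepsilon^{-1}\|\Delta\|_F$. Using $|\log(1+x)|\le 2|x|$ for small $x$, this yields $\mathrm{LGI}=O(\varepsilon^{-1}\|\Delta\|_F)\approx 0$.

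\emph{Forged case.} Write $\hat t_i^{(f)}=\hat t_i+\delta_i$. The hypothesis that the deformation is non-congruent is formalized as follows: there is no $A$ with $\mathcal{C}(\hat z_T^{(f)}) = A\,\mathcal{C}(z_T^{(w)})A^\top$ for which $A$ is close to orthogonal, i.e.\ with $\|A^\top A-\mathbf I\|$ small. Equivalently, $Q$ stays bounded away from $\mathbf I$. The aim is to lower-bound $\mathbb{E}\|\log Q\|_F$ by a constant. The plan is:
\begin{enumerate}
\item Expand $\mathcal{C}(\hat z^{(f)}) = \mathcal{C}(\hat z^{(w)}) + \Sigma_\delta + X$. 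Here $\Sigma_\delta$ is the empirical covariance of the $\delta_i$, and $X$ is the cross term between the tokens and the perturbations.
\item Invoke Thm.~\ref{thm:4.1} and Cor.~\ref{cor:1}. They give $\mathbb{E}\|\delta\|^2\ge D_{\mathrm{irr}}>0$ per coordinate, hence $\mathbb{E}\,\mathrm{tr}\,\Sigma_\delta\gtrsim D_{\mathrm{irr}}$ up to the centering correction $(M-1)/M$.
\item Argue that the perturbation is structured, i.e.\ not a multiple of the identity that could be absorbed. This is exactly the non-congruence hypothesis. From it, at least one eigenvalue of $Q$ deviates from $1$ by $c\,D_{\mathrm{irr}}/(\lambda_{\max}(P_1))$.
\item Conclude with $\|\log Q\|_F\ge|\log\lambda_k(Q)|$ for that eigenvalue $\lambda_k$.
\end{enumerate}
The bound $\lambda_{\max}(P_1)=O(1)$ needed in step 3 holds with high probability because the tokens are approximately $\mathcal N(0,\mathbf I)$ (Assumption~\ref{assumpt:1}). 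Standard covariance concentration for $M$ Gaussian samples bounds the spectrum of $P_1$ between $\varepsilon$ and $O(1+N/M)$.

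\emph{Main obstacle.} Step 3 is the hard part: converting an MSE floor into an eigenvalue gap of $Q$. A pure isotropic inflation $\Sigma_\delta\propto\mathbf I$ still gives nonzero AIRM distance, but merely a congruent rescaling does not meet the lemma's premise, and the cross term $X$ could in principle cancel part of $\Sigma_\delta$. The first thing to try is to treat $\delta$ as independent of the tokens, conditional on the proxy mismatch. Then $\mathbb{E}X=0$, and $\mathbb{E}\,\mathcal{C}(\hat z^{(f)})\succeq\mathcal{C}(\hat z^{(w)})+c\,D_{\mathrm{irr}}\mathbf I$ on the support of the perturbation. Jensen's inequality applied to the convex map $Q\mapsto\|\log Q\|_F$ on the relevant cone would not directly apply, so instead we will lower-bound $\mathrm{tr}\,Q-N$. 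Using $\log\lambda\ge 1-1/\lambda$ together with Cauchy–Schwarz, $\|\log Q\|_F\ge N^{-1/2}\sum_k|\log\lambda_k|$. This gives $\mathbb{E}\,\mathrm{LGI}\ge c'D_{\mathrm{irr}}/\lambda_{\max}(P_1)=\Omega(1)$, with constants independent of $N$ whenever the perturbation affects a fixed fraction of directions.
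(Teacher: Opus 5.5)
\textbf{The forged case has a genuine gap.} Your genuine case is fine, and more quantitative than the paper's appeal to continuity of $d_{\mathrm{AIRM}}$. The problem lies where you go beyond the paper.

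The paper reads non-congruence as directly supplying a spectral gap. With some probability $p_0>0$, your $Q$ satisfies $\lambda_{\max}(Q)\ge e^{\eta_0}$ or $\lambda_{\min}(Q)\le e^{-\eta_0}$. On that event $\mathrm{LGI}=\|\log Q\|_F\ge\max_k|\ln\lambda_k(Q)|\ge\eta_0$, so $\mathbb{E}[\mathrm{LGI}]\ge\eta_0p_0$. Your sentence ``equivalently, $Q$ stays bounded away from $\mathbf I$'' would give this in one line, once the distance and the probability are uniform constants. Steps 1--4 are then unnecessary.

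The claimed equivalence is also false. Take $P_1=\mathrm{diag}(100,1)$ and $A$ the swap of the two eigendirections. Then $A^\top A=\mathbf I$, yet $P_2=AP_1A^\top=\mathrm{diag}(1,100)$ gives $d_{\mathrm{AIRM}}(P_1,P_2)=\sqrt2\ln 100$. Only the direction you need holds: take $A=P_2^{1/2}P_1^{-1/2}$, for which $A^\top A=Q$.

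\textbf{Extracting the gap from the rate--distortion floor does not work.} Grant that Theorem~\ref{thm:4.1} gives a floor on $\mathbb{E}\|\delta\|^2$; strictly, it bounds the error to $z_T^{(w)}$, not to $\hat z_T^{(w)}$. That floor is a second moment, whereas $\mathcal{C}$ is a \emph{centered} covariance. It ignores any part of $\delta$ that is constant over $\mathcal{B}$, and any permutation of the tokens.
\begin{itemize}
\item Taking $\delta_i\equiv c$ on $\mathcal{B}$ gives $\Sigma_\delta=X=0$. So $\mathrm{LGI}$ keeps its genuine value $\approx 0$ while the MSE is $\|c\|^2$, as large as you like.
\item This is not a pathology. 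Under Assumption~\ref{assumpt:1} the mismatch is exactly a posterior-mean shift $\mu_\Theta-\mu_{\mathcal A}$, so a locally smooth $\delta$ is the natural case.
\item Step 2's $\mathbb{E}\,\mathrm{tr}\,\Sigma_\delta\gtrsim D_{\mathrm{irr}}$ therefore silently assumes the $\delta_i$ are i.i.d., zero-mean and independent of the tokens.
\item Step 3's gap $c\,D_{\mathrm{irr}}/\lambda_{\max}(P_1)$ has no basis. Non-congruence constrains the shape of the deformation, not its size relative to $D_{\mathrm{irr}}$. A large constant shift plus a tiny non-congruent component satisfies the hypothesis.
\end{itemize}

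\textbf{Even granting independence, the last step is not carried out.}
\begin{itemize}
\item $\log\lambda\ge1-1/\lambda$ bounds $\sum_k\log\lambda_k$ below by $N-\mathrm{tr}\,Q^{-1}$, not by $\mathrm{tr}\,Q-N$.
\item Jensen for the convex map $P_2\mapsto\mathrm{tr}(P_1^{1/2}P_2^{-1}P_1^{1/2})$ runs the wrong way.
\item Working with $\mathrm{tr}\,Q-N$ instead needs pointwise control of the cross term $X$ and of $\lambda_{\max}(Q)$.
\item With $\lambda_{\max}(P_1)=O(1+N/M)$, your constant decays as $N$ grows with $M$ fixed, so it is not $N$-uniform as you claim.
\end{itemize}

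The missing idea is that the quantitative spectral gap, with a probability floor, must be taken as the content of the non-congruence hypothesis, as the paper does. It cannot be derived from $D_{\mathrm{irr}}$.
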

Instead of uniform scaling, forgery attacks induce non-uniform deformations that disrupt the relative positioning between neighboring tokens. By measuring distances on $\mathcal{P}_N$, we can distinguish between origin (target) and forgery-induced (proxy) recovered latent, as the latter exhibit anomalous geometric behavior.

\subsubsection{Geometric Implications of Distortion}
The established lemmas cast forgery detection as a manifold separability problem. By projecting recovered latents into the joint bi-metric space, the inherent geometric displacement implies that forged samples are statistically distinguishable from watermarked ones, as formalized below:

\begin{theorem}[Probabilistic Geometric Separability under Black-Box Forgery] \label{thm:4.2}
Let $z_T^{(w)} \in \mathbb{R}^N$ be the initial watermarked latent, and let $\hat{z}_T^{(w)}$ and $\hat{z}_T^{(f)}$ denote the recovered latents corresponding to the watermarked and forged samples, respectively. Given the conditions in Lemma~\ref{lem:1} and Lemma~\ref{lem:2}, there exists a rejection region $\mathcal{R} \subset \mathbb{R}_+^2$ such that the joint geometric distortion satisfies:
\begin{align*}
\Pr \Big[( \mathrm{SAD}(z_T^{(w)}, \hat{z}_T^{(f)}), \mathrm{LGI}(z_T^{(w)}, \hat{z}_T^{(f)})) & \in  \mathcal{R} \Big]  \\ 
&\ge 1 - e^{-\Omega(N)}.
\end{align*}
Conversely, for any watermarked recovered latent $\hat{z}_T^{(w)}$, the pair $(\mathrm{SAD}, \mathrm{LGI})$ remain concentrated outside $\mathcal{R}$ with high probability, as $(\mathrm{SAD}, \mathrm{LGI}) \to (0,0)$ for $\hat{z}_T^{(w)}$. Thus, as $N \to \infty$, forged and watermarked latents become asymptotically separable in the joint geometric space.
\end{theorem}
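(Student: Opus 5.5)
The plan is to build the rejection region explicitly from two thresholds and then use high-dimensional concentration for each coordinate separately. Fix constants $\tau_1,\tau_2>0$, to be chosen below, and set
\[
\mathcal{R}=\{(a,b)\in\mathbb{R}_+^2 : a\ge \tau_1\}\cup\{(a,b)\in\mathbb{R}_+^2: b\ge \tau_2\}.
\]
With this choice it suffices to show two things. First, $\mathrm{SAD}(z_T^{(w)},\hat z_T^{(f)})\ge\tau_1$ with probability $1-e^{-\Omega(N)}$. Second, the watermarked pair concentrates near $(0,0)$, hence lies in $[0,\tau_1)\times[0,\tau_2)$. The LGI coordinate enters only as a second, independent route into $\mathcal{R}$ through Lemma~\ref{lem:2}. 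Since $\Pr[\text{pair}\in\mathcal{R}]\ge\max$ of the two marginal probabilities, one route suffices for the forged bound, and the union bound handles the converse.

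\textbf{Forged case, angular coordinate.} Write $\hat z_T^{(f)}=z_T^{(w)}+\delta$, where $\delta$ carries the mismatch distortion. I would model $\delta$ as having per-coordinate energy at least the floor from Corollary~\ref{cor:1}, so that $\|\delta\|_2^2\ge \epsilon N$ with high probability, and as having a component orthogonal to $z_T^{(w)}$ of comparable size. By Theorem~\ref{thm:4.1}, $\mathbb{E}\|\delta\|^2/N\ge D_{\mathrm{irr}}>0$. Then
\[
S_{\cos}=\frac{\|z\|^2+\langle z,\delta\rangle}{\|z\|\,\|z+\delta\|}.
\]
Three concentration facts control this ratio. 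Chi-square concentration gives $\|z\|^2=N(1\pm\eta)$ with failure probability $e^{-cN\eta^2}$. The norm $\|\delta\|^2/N$ concentrates around its mean, either by sub-Gaussianity of the error or by Lipschitz concentration of the Gaussian measure, since $\delta$ is a Lipschitz function of $z$ through the proxy inversion map. The cross term $\langle z,\delta_\perp\rangle$ vanishes, and the parallel part only rescales $z$.

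Together these give $S_{\cos}\le 1/\sqrt{1+\epsilon'}+\eta$, so $\mathrm{SAD}\ge \arccos(1/\sqrt{1+\epsilon'})-\eta'=:\tau_1$, off an event of probability $e^{-\Omega(N)}$. This is the quantitative version of Lemma~\ref{lem:1}.

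\textbf{Watermarked case.} For $\hat z_T^{(w)}$, the residual $\hat z_T^{(w)}-z_T^{(w)}$ is only the intrinsic inversion error $\delta_0$, with $\|\delta_0\|^2/N\to0$, consistent with $\mathrm{SAD}\approx0$ in Lemma~\ref{lem:1}. The same ratio computation gives $\mathrm{SAD}<\tau_1$ with high probability. For LGI, $\mathcal{C}$ is locally Lipschitz from the token perturbation into $\mathcal{P}_N$ equipped with $d_{\mathrm{AIRM}}$. This holds because of the $\varepsilon\mathbf{I}_N$ regularizer: the eigenvalues are bounded below by $\varepsilon$, so $\log$ is Lipschitz on the relevant set. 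Small $\delta_0$ therefore gives $\mathrm{LGI}\to0<\tau_2$. Taking $\tau_2$ equal to half the $\Omega(1)$ constant in Lemma~\ref{lem:2} gives the LGI route for forged samples, and it also lets the forged bound hold when it is LGI rather than SAD that exceeds its threshold.

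\textbf{Main obstacle.} The hard step is turning the expectation-level statements into $1-e^{-\Omega(N)}$ tail bounds. Theorem~\ref{thm:4.1} and Lemmas~\ref{lem:1}--\ref{lem:2} only control $\mathbb{E}[\mathrm{SAD}]$, $\mathbb{E}[\mathrm{LGI}]$ and the MSE. I would first try to show that $\mathrm{SAD}$, as a function of $z_T^{(w)}$ with $\delta=\delta(z_T^{(w)})$, is $O(1/\sqrt N)$-Lipschitz on the high-probability shell $\|z\|\approx\sqrt N$. This would follow if the proxy inversion map is Lipschitz. Gaussian concentration, $\Pr[|f-\mathbb{E}f|>t]\le 2e^{-t^2/(2L^2)}$, then gives the exponential rate with $L^2=O(1/N)$.

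If that regularity cannot be justified, the fallback is to state explicitly, as an assumption, that the per-coordinate errors of $\delta$ are independent sub-Gaussian with variance bounded below by $D_{\mathrm{irr}}$. Bernstein's inequality then yields the same $e^{-\Omega(N)}$ rate, and the asymptotic separability as $N\to\infty$ follows immediately.
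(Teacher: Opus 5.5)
Your argument is sound at the paper's own level of rigor, but it is organized differently.

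\textbf{The paper's route.} The paper takes the \emph{intersection} $\mathcal{R}=\{(s,\ell): s>\tau_s \text{ and } \ell>\tau_\ell\}$, with $0<\tau_s<\mathbb{E}[\mathrm{SAD}]$ and $0<\tau_\ell<\mathbb{E}[\mathrm{LGI}]$. For forged latents it must therefore show both $\Pr(\mathrm{SAD}>\tau_s)\ge 1-e^{-\Omega(N)}$ and $\Pr(\mathrm{LGI}>\tau_\ell)\ge 1-e^{-\Omega(N)}$, then apply a union bound. It gets both tails from a blanket assumption that the metrics are Lipschitz in the latent, plus Gaussian concentration. The converse then needs only one coordinate to stay small.

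\textbf{Your route.} You take the \emph{union} of the two half-planes, which shifts the burden. The forged bound needs only the angular coordinate. The converse needs both coordinates small, which the statement already grants via $(\mathrm{SAD},\mathrm{LGI})\to(0,0)$.

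\textbf{What your route buys.} First, you never need an $e^{-\Omega(N)}$ tail for $\mathrm{LGI}$, which is the weakest step in the paper. $\mathrm{LGI}$ is a statistic of the $M$ tokens of a fixed neighborhood $\mathcal{B}$, so there is no reason for it to concentrate as $N\to\infty$. Moreover, Lemma~\ref{lem:2}'s proof only produces the spectral gap $\eta_0$ with a constant probability $p_0$. Second, you are more precise about where the exponential rate comes from. Gaussian concentration for an $L$-Lipschitz function gives $2e^{-t^2/(2L^2)}$, which is dimension-free unless $L=O(1/\sqrt N)$. You observe that $\mathrm{SAD}$ has exactly this Lipschitz constant on the shell $\|z\|\approx\sqrt N$, for an $O(1)$-Lipschitz $\delta(\cdot)$, with the off-shell event costing another $e^{-\Omega(N)}$. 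You also require $\|\delta_\perp\|^2=\Omega(N)$, rather than the paper's mere $\|\delta_\perp\|^2>0$ in Lemma~\ref{lem:1}. Together these are what actually yield an $N$-independent threshold with $e^{-\Omega(N)}$ tails.

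\textbf{What it costs.} $\mathrm{LGI}$ becomes logically inessential for the forged bound. Your ``LGI route'' via half the constant of Lemma~\ref{lem:2} gives no high-probability statement by itself. So the complementary bi-metric reading of the theorem is not exercised. By contrast, the paper's intersection region is the one that tolerates watermarked samples distorted in only one metric.

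\textbf{A small caveat.} $\|\delta_0\|^2/N\to0$ does not make the perturbation small on a fixed neighborhood $\mathcal{B}$. For $\mathrm{LGI}\to0$ on watermarked latents, cite Case~1 of Lemma~\ref{lem:2} (or the statement's premise) rather than deriving it from the averaged error.
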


\vspace{-0.05in}
The proof of Thm.~\ref{thm:4.2} can be found in Sec.~\ref{sec:b}. Thm.~\ref{thm:4.2} formalizes the complementary effects of global drift (Lem.~\ref{lem:1}) and local deformation (Lem.~\ref{lem:2}), ensuring robust detection even under partial perturbations. In practice, however, finite-dimensional overlap may persist if: ($i$) \textit{High Proxy Fidelity} minimizes trajectory drift; or ($ii$) \textit{Robustness Margins} allow heavily distorted watermarked samples to exhibit geometric features akin to those of forgeries. Therefore, perfect separability cannot be guaranteed in finite dimensions, despite high-probability separability.

% \vspace{-0.1in}

% First, certain watermark-preserving distortions designed for robustness may induce latent deviations that partially resemble forgery-induced drift. Second, when the proxy model closely approximates the target model, such as in same-model or near-identical model pairs, the external error becomes negligible, reducing geometric discrepancy.
\section{Experiments} \label{sec:5}

\begin{table*}[t]
\centering
\begin{minipage}[c]{0.35\textwidth}
  \centering
  \caption{Watermark TPR under optimization-based forgery attacks (TPR@$X$-FPR).}
  \label{tab:3_attack}
  \vspace{-0.02in}
  \resizebox{0.98\columnwidth}{!}{%
    \begin{tabular}{llcccccc}
      \toprule
      & & \multicolumn{4}{c}{Frequency-domain} & \multicolumn{2}{c}{Bitstream-level} \\
      \cmidrule(lr){3-6} \cmidrule(lr){7-8}
      Target & Step & TR & RID & HSTR & HSQR & GS & TAG\\
      \midrule
    \multirow{3}{*}{SD2.1} 
    & 20 & 0.95 & 0.98 & 0.67 & 0.99 & 1.00 & 1.00 \\
    & 50 & 0.99 & 1.00 & 0.99 & 1.00 & 1.00 & 1.00 \\
    & 100 & 1.00 & 1.00 & 1.00 & 1.00 & 1.00 & 1.00 \\
    \midrule
    \multirow{3}{*}{SDXL} 
    & 20  & 0.26 & 0.35 & 0.04 & 0.30 & 0.97 & 0.96 \\
    & 50  & 0.67 & 0.85 & 0.20 & 0.76 & 0.99 & 0.99 \\
    & 100 & 0.83 & 1.00 & 0.55 & 0.94 & 1.00 & 0.99 \\
    \midrule
    \multirow{3}{*}{PixArt-$\Sigma$} 
    & 20  & 0.18 & 0.21 & 0.01 & 0.02 & 0.43 & 0.33 \\
    & 50  & 0.42 & 0.59 & 0.03 & 0.20 & 0.94 & 0.92 \\
    & 100 & 0.73 & 0.85 & 0.14 & 0.42 & 0.97 & 0.99 \\
    \midrule
    \multirow{3}{*}{FLUX.1} 
    & 20  & 0.04 & 0.01 & 0.01 & 0.01 & 0.00 & 0.11 \\
    & 50  & 0.13 & 0.02 & 0.01 & 0.01 & 0.43 & 0.60 \\
    & 100 & 0.14 & 0.01 & 0.00 & 0.01 & 0.88 & 0.79 \\
    \midrule
    \multirow{3}{*}{SD3} 
    & 20  & 0.18 & 0.00 & 0.00 & 0.00 & 0.20 & 0.28 \\
    & 50  & 0.45 & 0.00 & 0.00 & 0.01 & 0.74 & 0.74 \\
    & 100 & 0.63 & 0.00 & 0.00 & 0.04 & 0.94 & 0.89 \\
    \bottomrule
    \end{tabular}}
\end{minipage}
\hspace{0.01\textwidth}
\begin{minipage}[c]{0.63\textwidth}
  \centering
  \caption{Our detection performance (AUC) under optimization-based attacks. ``G-Cos'' and ``L-SPD'' are used here as in the guidance-based attack scenarios (see Tab.~\ref{tab:2_reprompt}).}
  \label{tab:4_imprint}
  \vspace{-0.02in}
  \resizebox{0.99\columnwidth}{!}{
    \begin{tabular}{llc>{\columncolor{gray!10}}cc>{\columncolor{gray!10}}cc>{\columncolor{gray!10}}cc>{\columncolor{gray!10}}cc>{\columncolor{gray!10}}cc>{\columncolor{gray!10}}c}
      \toprule
    & & \multicolumn{8}{c}{Frequency-domain} & \multicolumn{4}{c}{Bitstream-level} \\
    \cmidrule(lr){3-10} \cmidrule(lr){11-14}
    & & \multicolumn{2}{c}{TR} & \multicolumn{2}{c}{RID} & \multicolumn{2}{c}{HSTR} & \multicolumn{2}{c}{HSQR} & \multicolumn{2}{c}{GS} & \multicolumn{2}{c}{TAG}\\
    \cmidrule(lr){3-4} \cmidrule(lr){5-6} \cmidrule(lr){7-8} \cmidrule(lr){9-10} \cmidrule(lr){11-12} \cmidrule(lr){13-14}
    Target & Step & G-Cos & L-SPD & G-Cos & L-SPD & G-Cos & L-SPD & G-Cos & L-SPD & G-Cos & L-SPD & G-Cos & L-SPD \\
    \midrule
    \multirow{3}{*}{SD2.1} 
    & 20  & 1.000 & 0.978 & 1.000 & 0.999 & 1.000 & 0.999 & 1.000 & 0.979 & 1.000 & 0.998 & 1.000 & 0.997 \\
    & 50  & 0.997 & 0.955 & 1.000 & 0.991 & 1.000 & 0.990 & 0.996 & 0.953 & 0.998 & 0.992 & 1.000 & 0.989 \\
    & 100 & 0.978 & 0.905 & 0.999 & 0.958 & 0.999 & 0.971 & 0.968 & 0.911 & 0.990 & 0.970 & 0.994& 0.961 \\
   \midrule
    \multirow{3}{*}{SDXL} 
    & 20  & 1.000 & 0.997 & 1.000 & 0.999 & 1.000 & 0.996 & 1.000 & 1.000 & 1.000 & 0.997 & 1.000 & 0.994 \\
    & 50  & 1.000 & 0.997 & 1.000 & 0.997 & 1.000 & 0.993 & 1.000 & 0.999 & 1.000 & 0.995 & 1.000 & 0.986 \\
    & 100 & 1.000 & 0.990 & 1.000 & 0.996 & 1.000 & 0.989 & 1.000 & 0.998 & 1.000 & 0.992 & 1.000 & 0.978 \\
    \midrule
    \multirow{3}{*}{PixArt-$\Sigma$} 
    & 20  & 1.000 & 0.956 & 1.000 & 0.999 & 1.000 & 0.995 & 1.000 & 1.000 & 1.000 & 0.957 & 1.000 & 0.985 \\
    & 50  & 1.000 & 0.952 & 1.000 & 0.998 & 1.000 & 0.995 & 1.000 & 1.000 & 1.000 & 0.954 & 1.000 & 0.983 \\
    & 100 & 1.000 & 0.954 & 1.000 & 0.995 & 1.000 & 0.992 & 1.000 & 1.000 & 1.000 & 0.950 & 1.000 & 0.976 \\
    \midrule
    \multirow{3}{*}{FLUX.1} 
    & 20  & 1.000 & 0.994 & 1.000 & 0.993 & 1.000 & 0.991 & 1.000 & 1.000 & 1.000 & 1.000 & 1.000 & 1.000 \\
    & 50  & 1.000 & 0.994 & 1.000 & 0.992 & 1.000 & 0.992 & 1.000 & 1.000 & 1.000 & 1.000 & 1.000 & 0.999 \\
    & 100 & 1.000 & 0.995 & 1.000 & 0.991 & 1.000 & 0.992 & 1.000 & 1.000 & 1.000 & 1.000 & 1.000 & 0.999 \\
    \midrule
    \multirow{3}{*}{SD3} 
    & 20  & 1.000 & 0.994 & 1.000 & 0.929 & 1.000 & 0.995 & 1.000 & 0.995 & 1.000 & 0.993 & 1.000 & 0.987 \\
    & 50  & 1.000 & 0.994 & 1.000 & 0.931 & 1.000 & 0.992 & 1.000 & 0.995 & 1.000 & 0.993 & 1.000 & 0.987 \\
    & 100 & 1.000 & 0.993 & 1.000 & 0.929 & 1.000 & 0.992 & 1.000 & 0.994 & 1.000 & 0.993 & 1.000 & 0.987 \\
\bottomrule
    \end{tabular}}
\end{minipage}
% \vspace{-0.1in}
\end{table*}

\subsection{Experimental Settings}
\textbf{Models and Datasets.} We consider two adversarial scenarios: guidance-based and optimization-based. For the guidance-based scenario, we focus on \emph{reprompting} attacks following \cite{muller2025black}, using Stable Diffusion 2.1 (SD2.1)~\cite{rombach2022high} and Stable Diffusion 3 (SD3)~\cite{esser2024scaling} as proxy models and five commonly used target models: SD2.1, SDXL~\cite{SDXL}, PixArt-$\Sigma$~\cite{chenpixart}, FLUX.1~\cite{flux2024}, SD3. This evaluation is conducted on 1,000 samples. Regarding the optimization-based scenario, we randomly select 100 cover images from the MS-COCO dataset~\cite{lin2014microsoft} and use SD2.1 as the proxy model with the same target set. All experiments generate images at a size of $512 \times 512$ using prompts from Stable-Diffusion-Prompt\footnote{\href{https://huggingface.co/datasets/Gustavosta/Stable-Diffusion-Prompts}{Stable-Diffusion-Prompts}}. Further details are provided in Sec.~\ref{sec:d}.

% and the 2M subset of DiffusionDB~\cite{wang2023diffusiondb}.

\noindent \textbf{Watermarking Methods.} We consider six state-of-the-art semantic watermarks: TR~\cite{wen2023tree}, RingID~\cite{ci2024ringid}, HSTR~\cite{lee2025semantic}, HSQR~\cite{lee2025semantic}, GS~\cite{yang2024gaussian} and TAG~\cite{chen2025tag}. The first four methods embed watermarks in the Fourier frequency domain, while the latter two encode the bitstreams as Gaussian perturbations within the initial latent noise. We refer the reader to Sec.~\ref{sec:a} for more details.

\noindent \textbf{Evaluation Metrics.} For watermark detection, we report the true positive rate (TPR) at fixed false positive rates (FPRs), using $10^{-3}$ for frequency-domain schemes and a stricter $10^{-6}$ for bitstream-level schemes. The corresponding detection thresholds are determined by $p$-values (for TR) and $\ell_1$ distance in frequency-domain methods; bitstream-level schemes are based on bit accuracy. Furthermore, we adopt the Area Under the Curve (AUC) as a threshold-independent metric to evaluate our detection performance.
% (See Sec.~\ref{sec:a}).

\noindent \textbf{Our Detection Details.} To quantify latent geometric discrepancies, we use global cosine similarity (G-Cos) as an empirical realization of SAD to capture directional drift, and localized SPD-based distance (L-SPD) as a measure of LGI to characterize structural deformation. For L-SPD, we partition the latent space into a regular $16\times16$ spatial grid and compute SPD covariance matrices for each patch. We then aggregate these localized discrepancies using a top-$k$ mean strategy (with $k=5$) to identify the most significant regional distortions.

\subsection{Experimental Results}

\begin{figure}[t]
    \centering
    % --- Row 1 --- %
    \begin{subfigure}[b]{0.34\columnwidth}
        \centering
        \includegraphics[width=\linewidth]{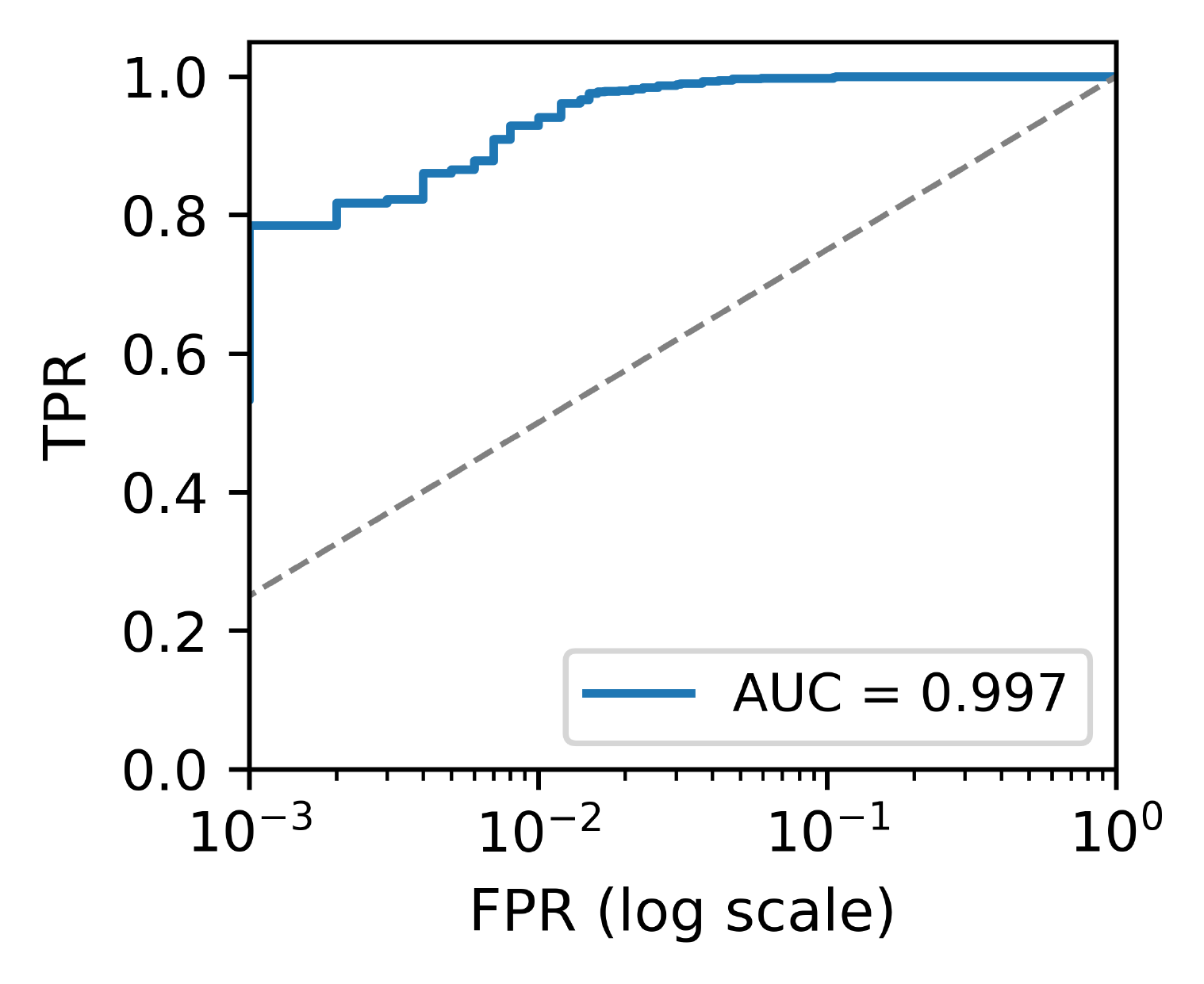}
        \caption{TR}
        \label{fig:roc_tr}
    \end{subfigure}
    \hspace{-0.04\columnwidth}
    \begin{subfigure}[b]{0.34\columnwidth}
        \centering
        \includegraphics[width=\linewidth]{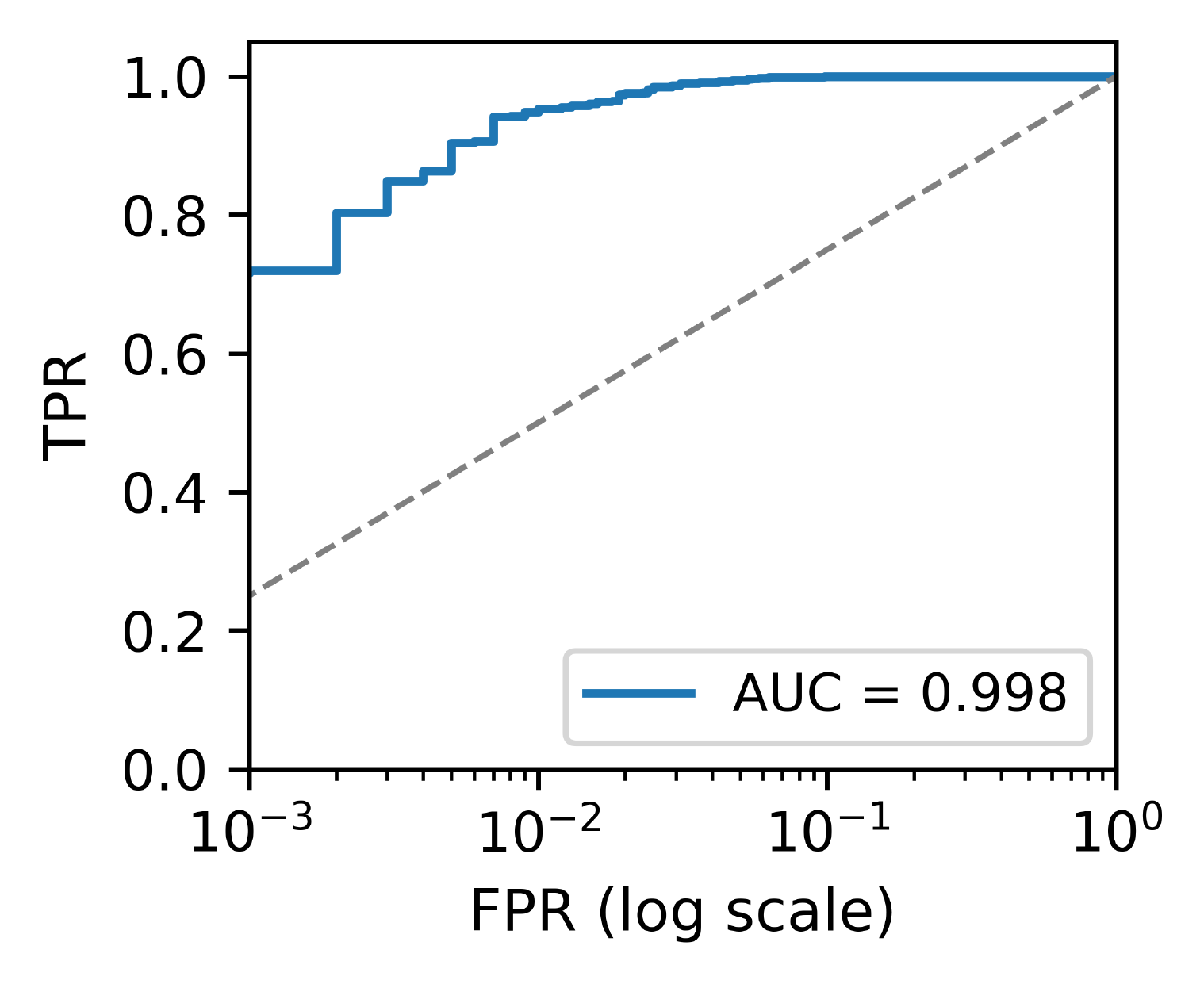}
        \caption{RID}
        \label{fig:roc_rid}
    \end{subfigure}
    \hspace{-0.04\columnwidth}
    \begin{subfigure}[b]{0.34\columnwidth}
        \centering
        \includegraphics[width=\linewidth]{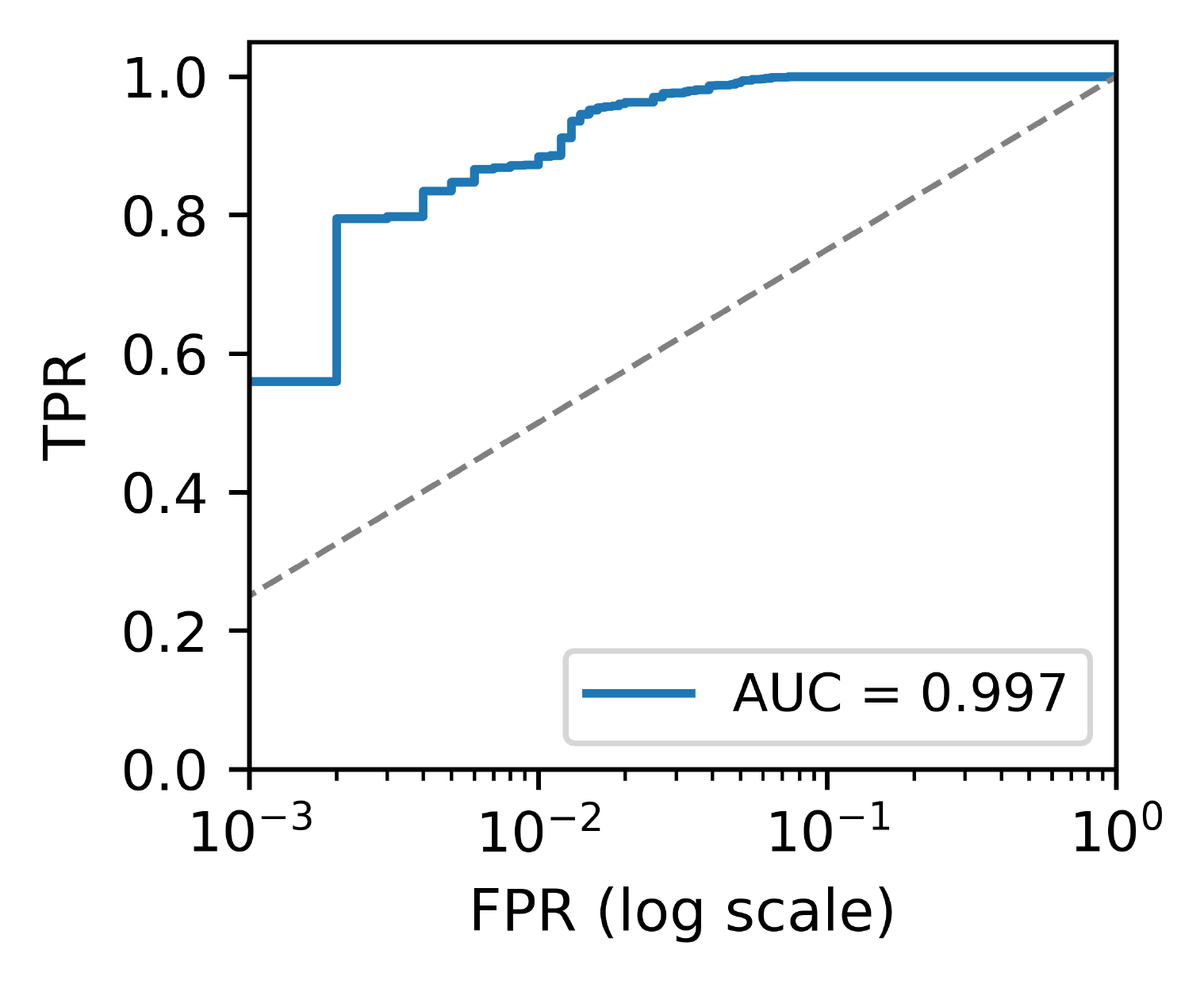}
        \caption{HSTR}
        \label{fig:roc_hstr}
    \end{subfigure}
    \vspace{0.3cm}
    % --- Row 2 --- %
    \begin{subfigure}[b]{0.34\columnwidth}
        \centering
        \includegraphics[width=\linewidth]{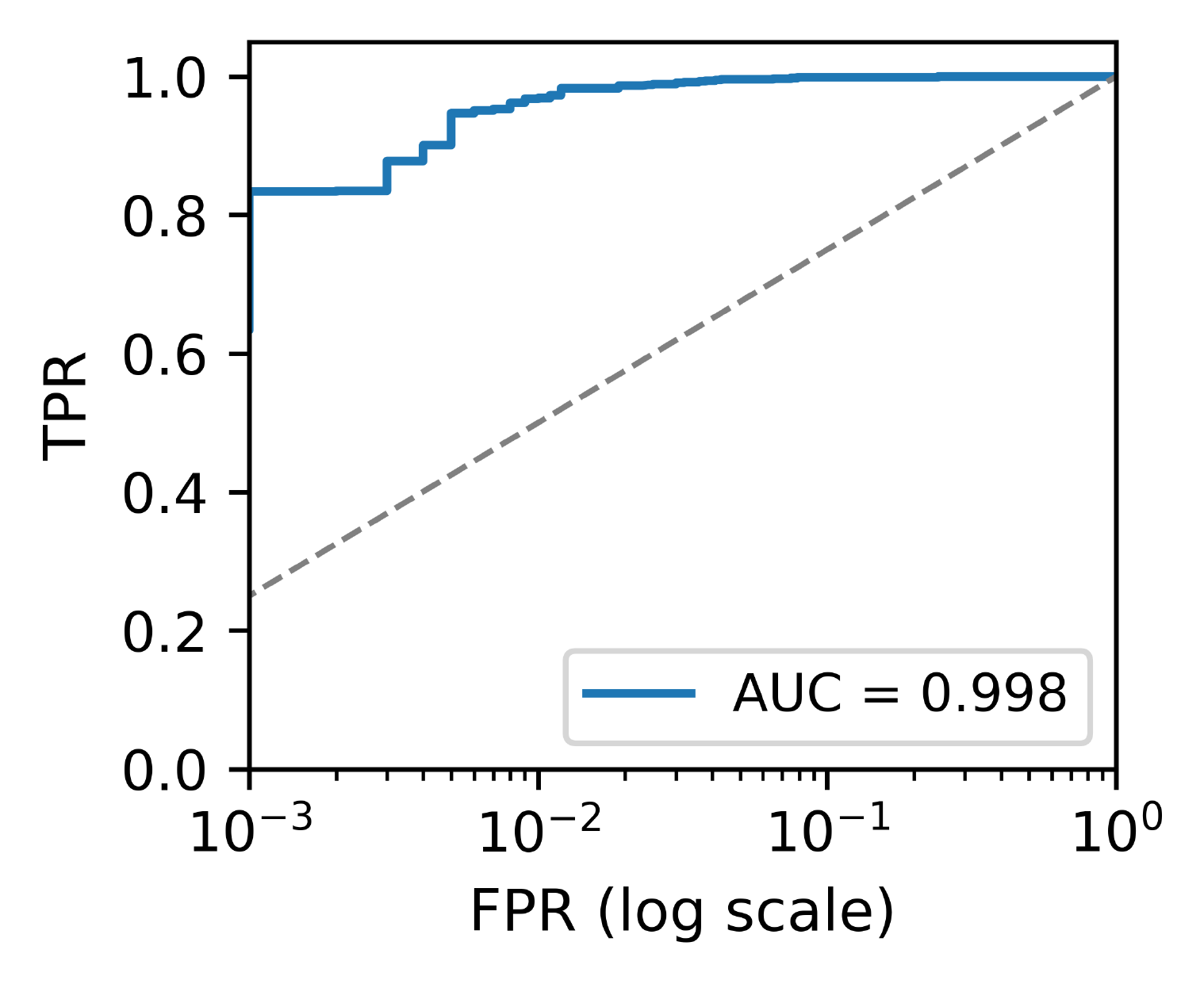}
        \caption{HSQR}
        \label{fig:roc_hsqr}
    \end{subfigure}
    \hspace{-0.04\columnwidth}
    \begin{subfigure}[b]{0.34\columnwidth}
        \centering
        \includegraphics[width=\linewidth]{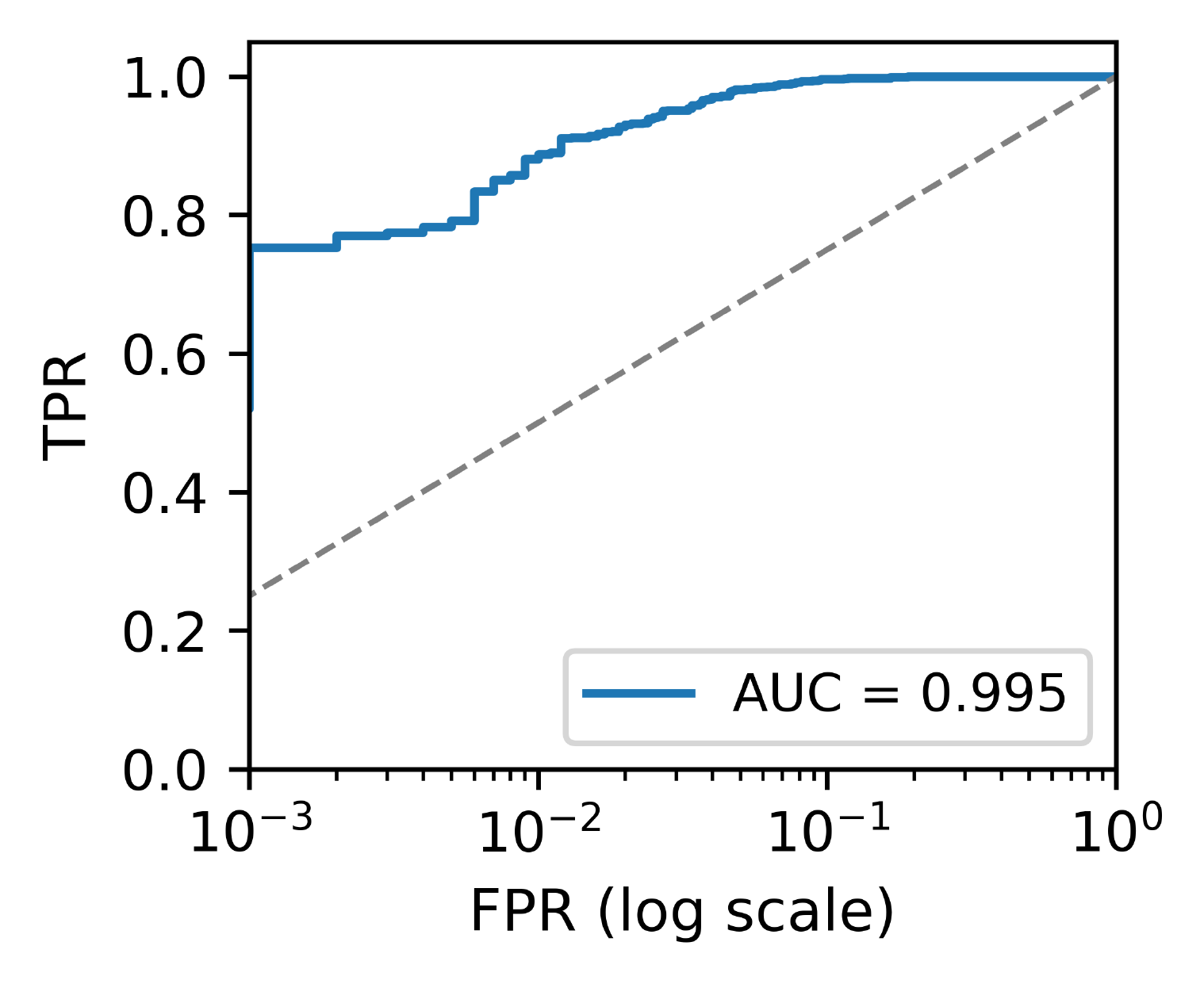}
        \caption{GS}
        \label{fig:roc_gs}
    \end{subfigure}
    \hspace{-0.04\columnwidth}
    \begin{subfigure}[b]{0.34\columnwidth}
        \centering
        \includegraphics[width=\linewidth]{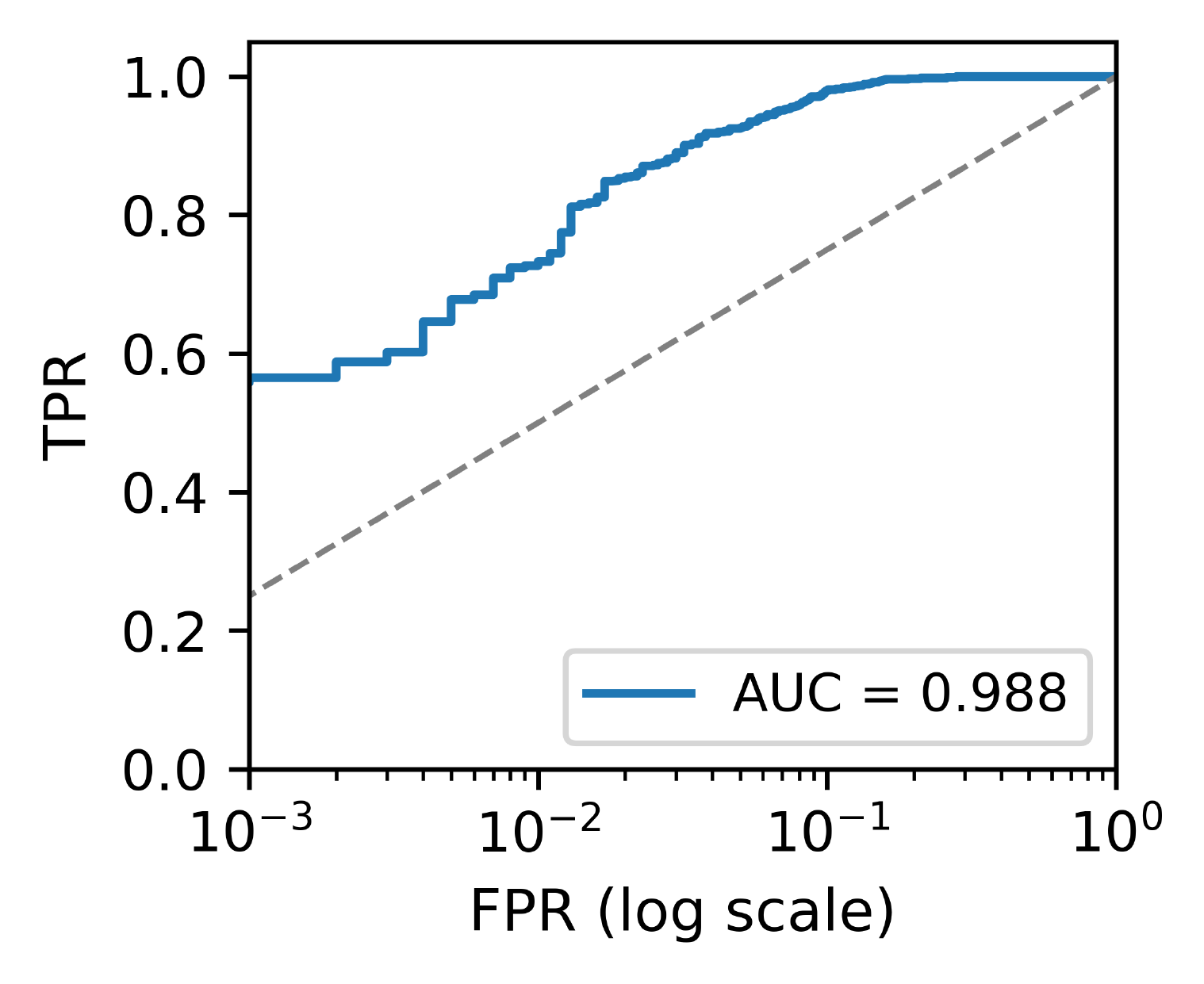}
        \caption{TAG}
        \label{fig:roc_tag}
    \end{subfigure}
    \vspace{-0.1in}
    \caption{Log-scale ROC curves for our local structural metric across watermarking schemes, evaluated on SDXL as the target model and SD2.1 as the proxy under guidance-based attacks.}
    % The x-axis (FPR) is shown on a logarithmic scale to highlight detection performance in low false-positive regimes.
    \label{fig:roc_results}
    \vspace{-0.15in}
\end{figure}

\noindent \textbf{Guidance-based Forgery Attacks.}
Tab.~\ref{tab:1_attack} illustrates the robustness of various watermarking schemes under guidance-based attacks. We find that forgery attacks are highly successful when the target and proxy models are identical (\textit{e.g.}, both are SD2.1); however, the attack efficacy is notably constrained in cross-model scenarios (\textit{e.g.}, an SD3 target with an SD2.1 proxy), consistent with the analysis in Sec.~\ref{sec:4.2}. 

Tab.~\ref{tab:2_reprompt} reports our detection performance measured by global and local geometric deviations. Both detection metrics, G-Cos and L-SPD, achieve near-perfect AUC values, exceeding $0.990$ in many cases, reflecting substantial semantic deviations induced by architectural discrepancies between target and proxy models. By contrast, detection performance degrades slightly in the most challenging scenarios where target and proxy models are identical (\textit{e.g.}, both are SD3) and approximation errors are minimized. Despite this, our method retains meaningful detection performance under these conditions (\textit{e.g.}, above $0.990$ in SD2.1). Figs.~\ref{fig:roc_results} and \ref{fig:cos_sim} present the detection performance using ROC curves and the semantic shift computed by cosine similarity, respectively.

% For instance, under SDXL as target and SD2.1 as proxy, both metrics achieve a perfect AUC of $1.000$ across nearly all watermark schemes.
% In such cases, L-SPD complements global metrics by capturing subtle local manifold distortions that are not reflected by G-Cos, particularly for frequency-domain schemes.

\begin{figure}[t]
    \centering
    \begin{subfigure}[c]{0.495\columnwidth}
        \centering
        \includegraphics[width=0.85\columnwidth]{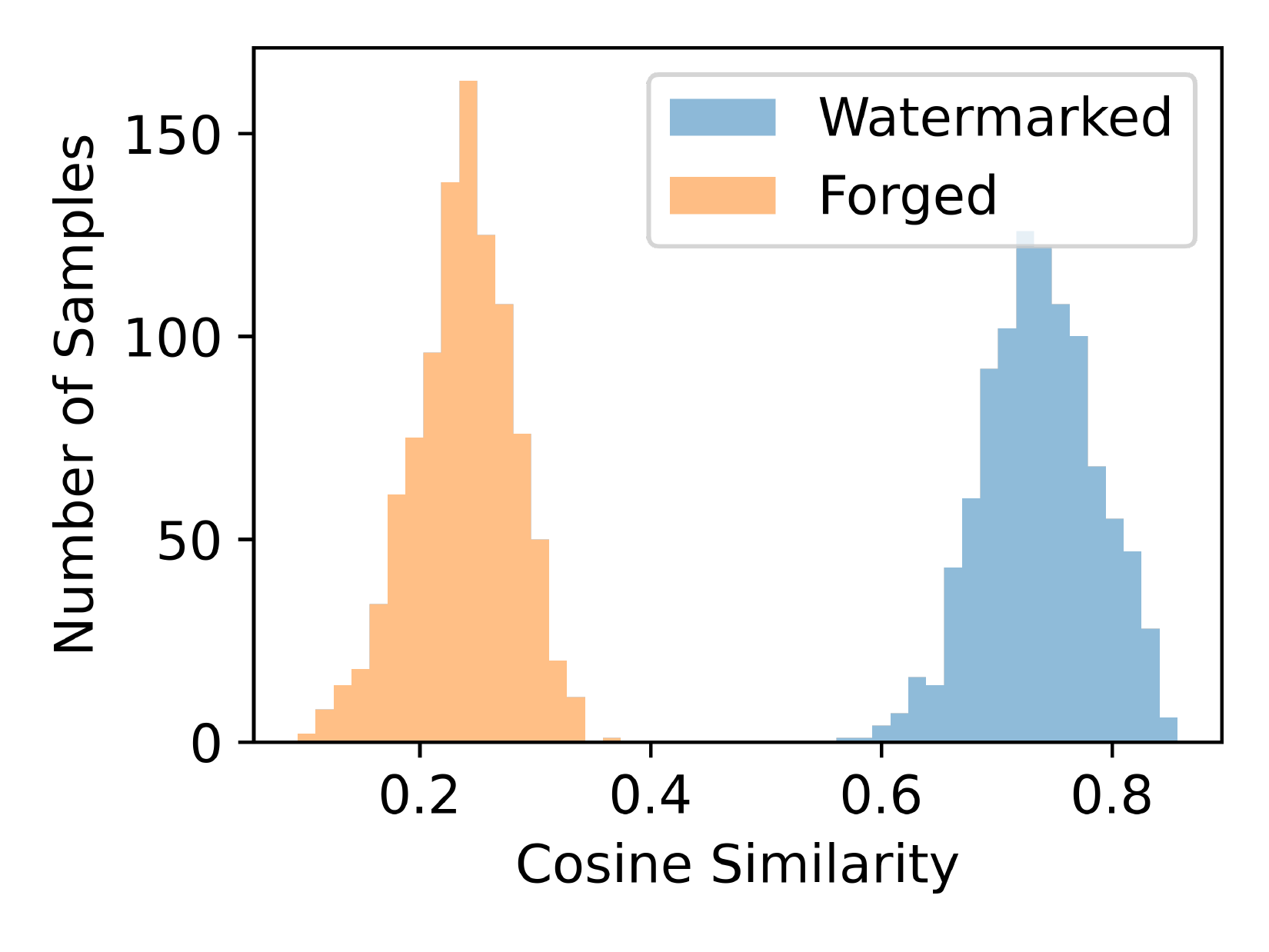}
        \vspace{-0.01in}
        \caption{TR}
        \label{fig:tr_cos}
    \end{subfigure}
    \hspace{-0.02\columnwidth}
    \begin{subfigure}[c]{0.495\columnwidth}
        \centering
        \includegraphics[width=0.85\columnwidth]{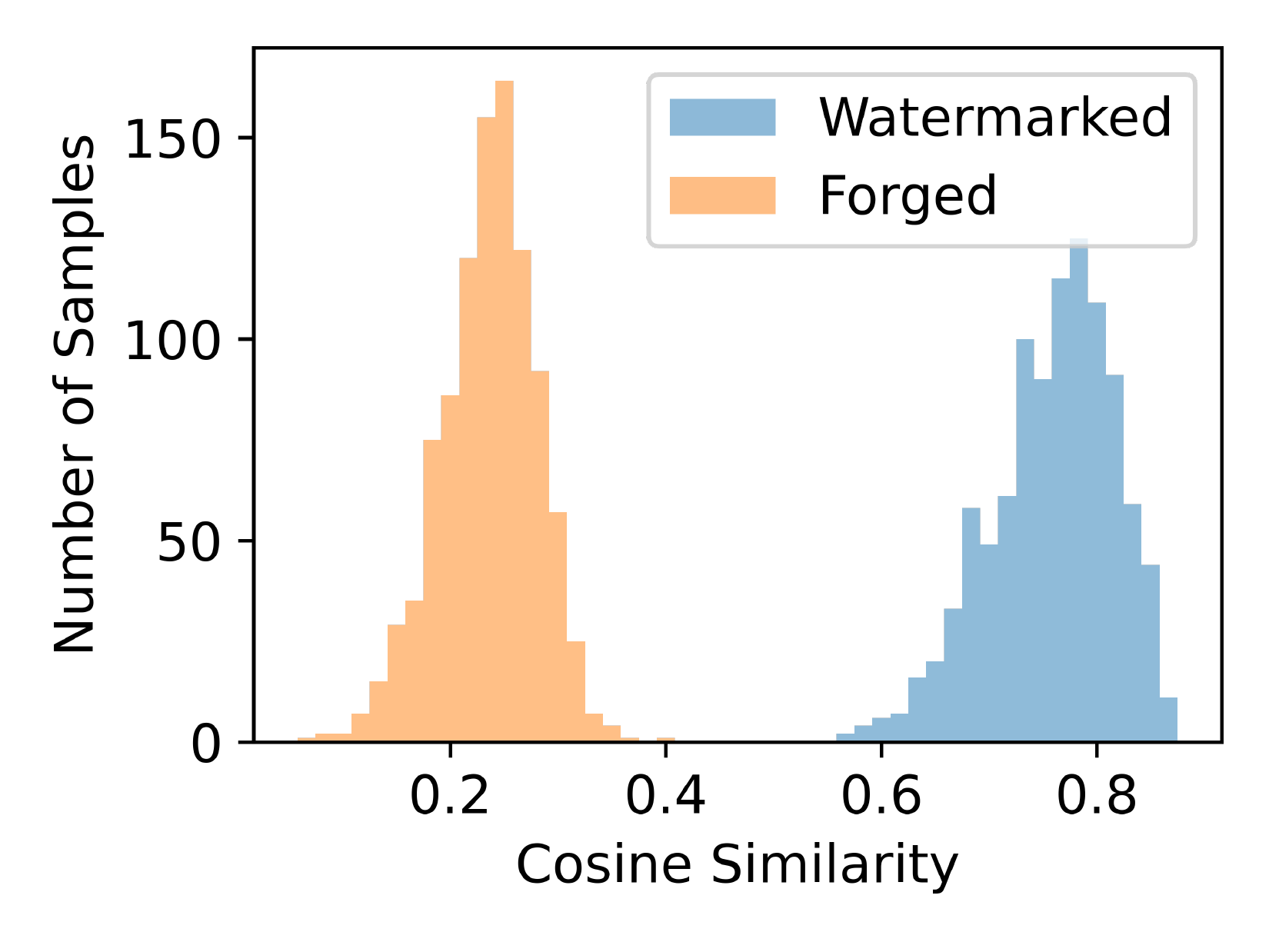}
        \vspace{-0.01in}
        \caption{GS}
        \label{fig:gs_cos}
    \end{subfigure}
    \vspace{-0.02in}
    \caption{Distributions of cosine similarity for watermarked and forged samples, under the same attack setting as in Fig.~\ref{fig:roc_results}.}
    \label{fig:cos_sim}
    \vspace{-0.05in}
\end{figure}
\vspace{-0.1in}

\vspace{0.1in}
\noindent \textbf{Optimization-based Forgery Attacks.} Tab.~\ref{tab:3_attack} reports the forgery success rates of watermarking schemes under optimization-based attacks, where higher values indicate greater attack effectiveness. When the target model matches the proxy (SD2.1), the adversary achieves near-perfect forgery success (\textit{i.e.}, $1.00$ TPR) even under limited steps. Conversely, for DiT-based target models (\textit{e.g.}, FLUX.1 and SD3), the attack performance degrades significantly across all watermarking schemes.

% Furthermore, in line with~\cite{muller2025black}, bitstream-level schemes are more susceptible to optimization-based attacks than frequency-domain methods.

Tab.~\ref{tab:4_imprint} presents our detection performance under diverse conditions, following the same metrics as in Tab.~\ref{tab:2_reprompt}. While G-Cos and L-SPD show a marginal decrease with more optimization steps, both retain strong detection capability, \textit{e.g.}, the L-SPD metric decreases slightly from $0.997$ to $0.990$ for TR on SDXL. A similar trend is also evident in Fig.~\ref{fig:cos_sim_steps}, which shows a clear separation between cosine similarity distributions at different optimization steps. Notably, a more significant performance drop is observed when the target model matches the proxy (\textit{i.e.}, SD2.1), as discussed in Sec.~\ref{sec:4.2}. However, we argue that such a matched-model scenario is less representative of real-world threats.

\begin{figure}[t]
    \centering
    % --- Step 20 ---
    \begin{subfigure}[b]{0.34\columnwidth}
        \centering
        \includegraphics[width=\columnwidth]{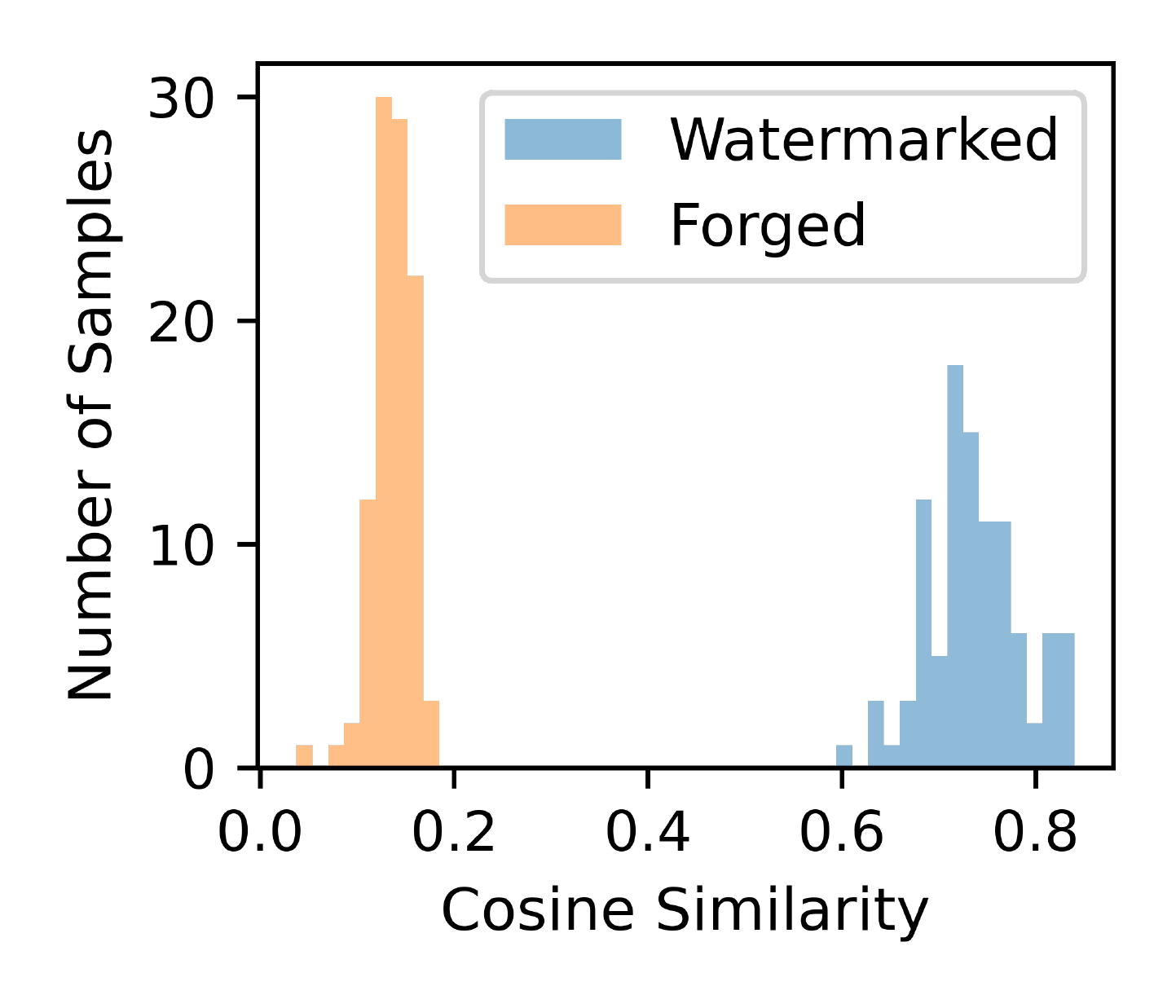}
        \caption{Step $T=20$}
        \label{fig:cos_sim_step20}
    \end{subfigure}
    \hspace{-0.04\columnwidth}
    % --- Step 50 ---
    \begin{subfigure}[b]{0.34\columnwidth}
        \centering
        \includegraphics[width=\columnwidth]{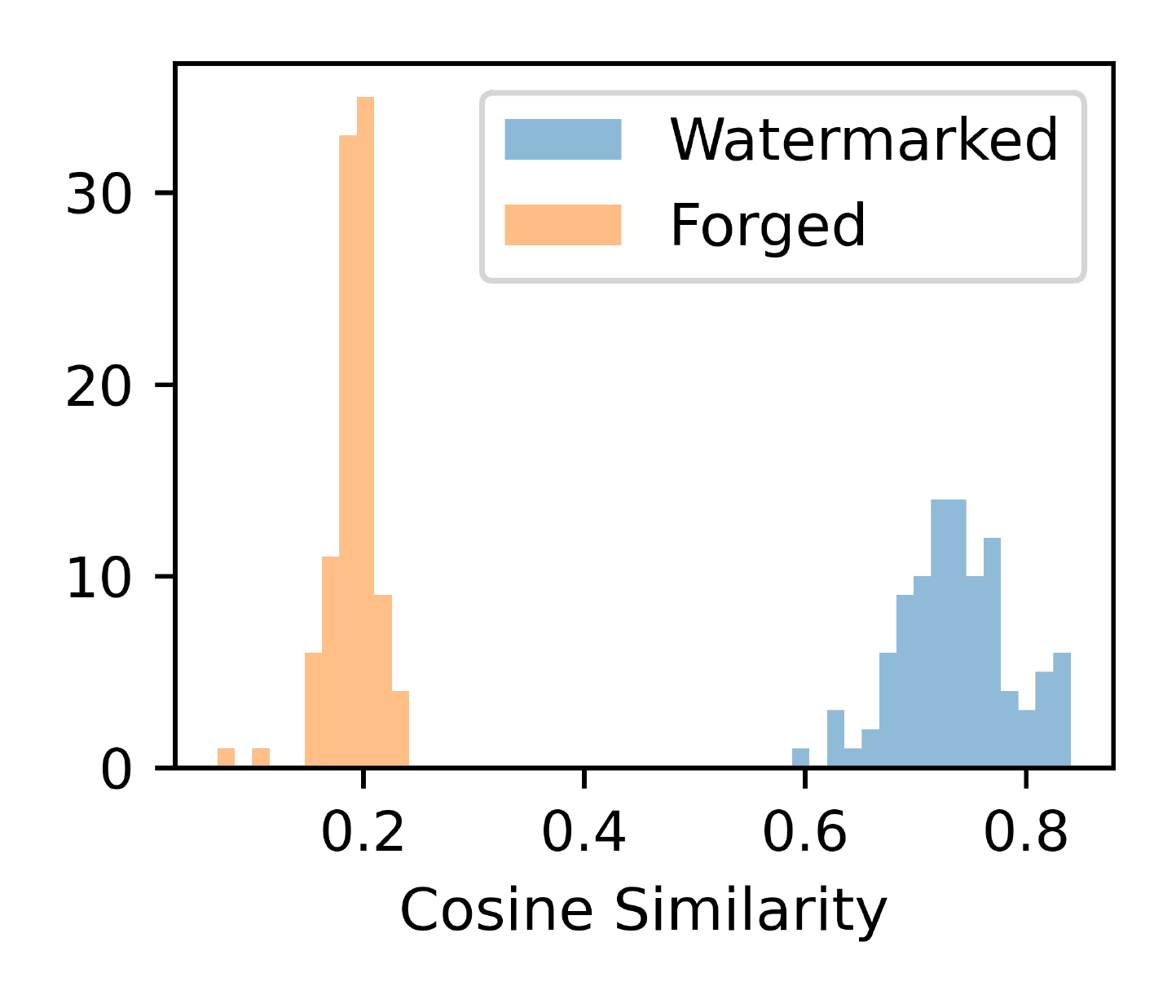}
        \caption{Step $T=50$}
        \label{fig:cos_sim_step50}
    \end{subfigure}
    \hspace{-0.04\columnwidth}
    % --- Step 100 ---
    \begin{subfigure}[b]{0.34\columnwidth}
        \centering
        \includegraphics[width=\columnwidth]{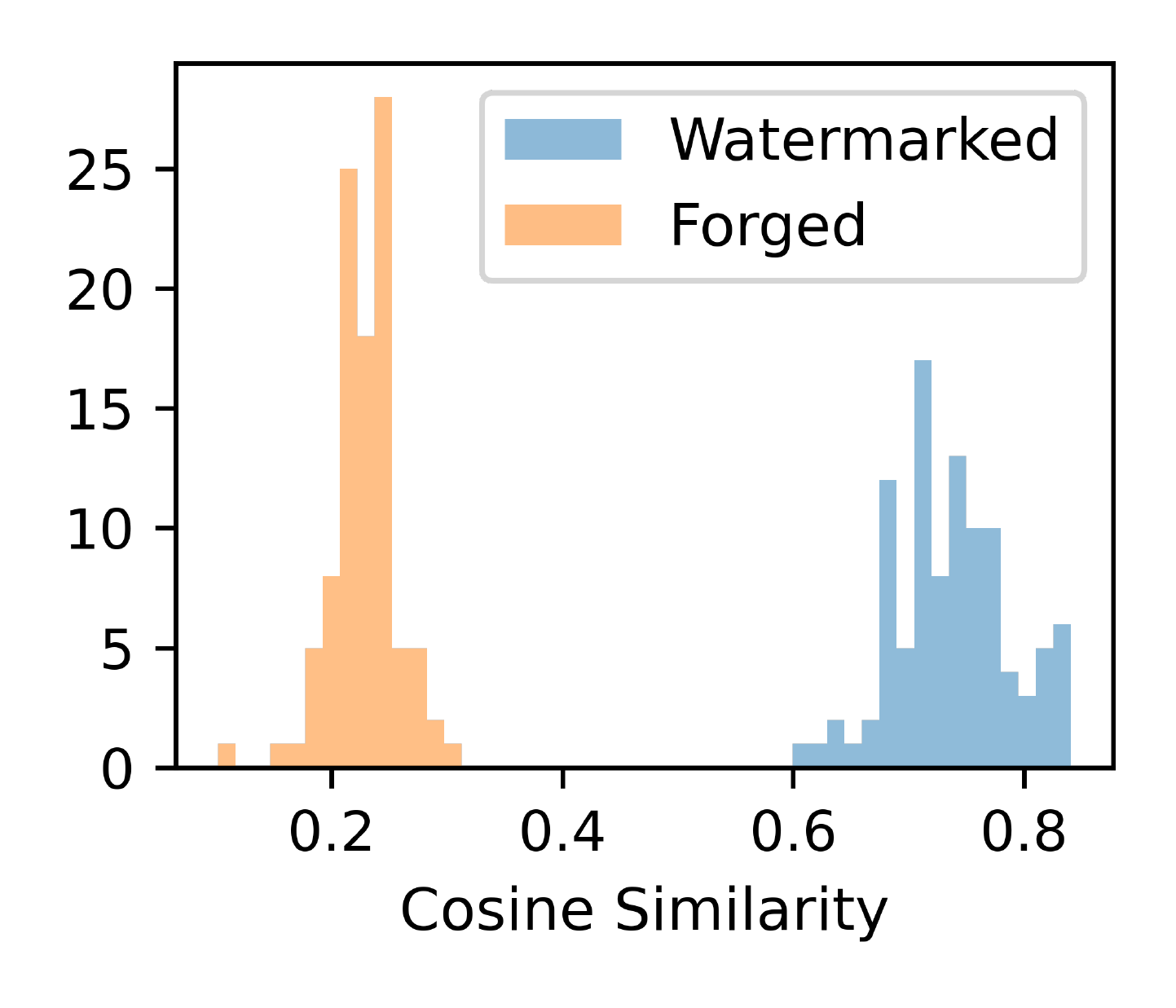}
        \caption{Step $T=100$}
        \label{fig:cos_sim_step100}
    \end{subfigure}
    \vspace{-0.02in}
    \caption{Cosine similarity distributions under optimization-based attacks, with increasing numbers of optimization iterations on the TR. The target and proxy models are consistent with Fig.~\ref{fig:roc_results}.}
    \label{fig:cos_sim_steps}
    % \vspace{-0.1in}
\end{figure}

\noindent \textbf{Comparison with Alternative Metrics.} 
Although our framework is not tied to a specific metric, G-Cos and L-SPD are chosen as empirical realizations of the geometric deviations characterized in Lemmas~\ref{lem:1} and~\ref{lem:2}. We further compare against a local cosine baseline (\textit{i.e.}, grid size 16, mean aggregation), which also achieves competitive performance in both cross-model and identical-model settings (\textit{e.g.}, AUC $=1.000$ for both TR and GS under SDXL $\rightarrow$ SD2.1, and $0.954$/$0.976$ in the identical-model setting). These results suggest that capturing geometric discrepancies is central to detection, while G-Cos and L-SPD provide principled geometric metrics.

\begin{figure}[!t]
  \centering
   \includegraphics[width=0.8\columnwidth]{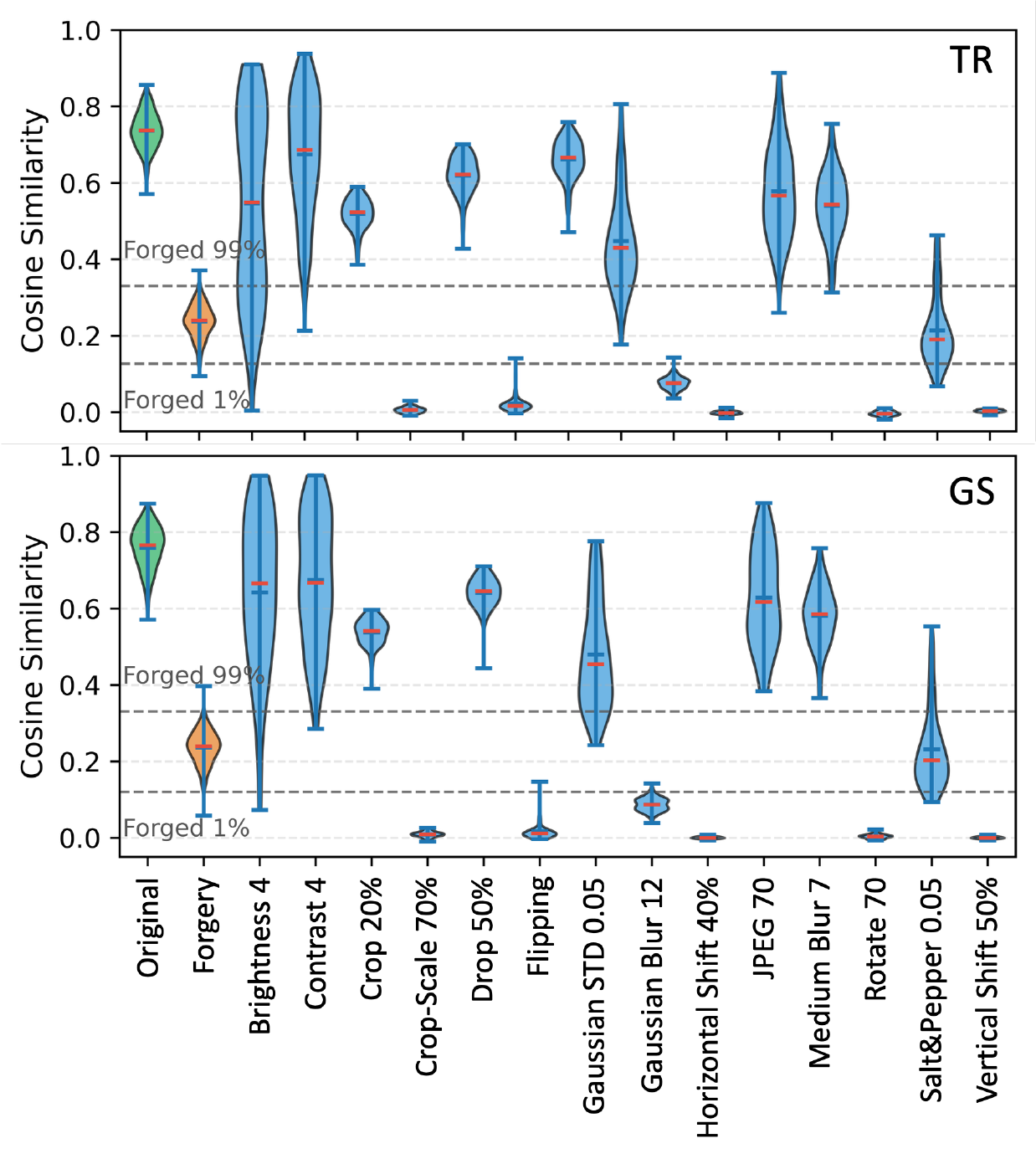}
   % \fbox{\rule{0pt}{4cm}\rule{0.95\columnwidth}{0pt}}
   \vspace{-0.02in}
   \caption{Similarity distributions under 14 types of image distortions for TR and GS, evaluated on SDXL as the target model.}
   \label{fig:7}
   % \vspace{-0.1in}
\end{figure}

\subsection{Robustness Against Image Distortion Attacks} 
Fig.~\ref{fig:7} evaluates the robustness of our method against various post-processing distortions. For each distortion and intensity level, we report the average performance over $100$ randomly selected samples. The results indicate that our method maintains high geometric consistency under most distortions across both TR and GS. Nevertheless, in a few cases involving stochastic noise and nonlinear signal degradations, the similarity distributions exhibit increased dispersion. Such distortions partially corrupt the latent structural coherence, degrading the intrinsic watermarked features and verification performance (see Sec.~\ref{sec:d} for details).

\subsection{Reconstruction Distortion Analysis} \label{sec:5.4}
To empirically quantify the additional reconstruction error associated with $D_{\mathrm{irr}}$, we compute the MSE between $z_T$ and $\hat{z}_T$, as reported in Tab.~\ref{tab:5_distort}. We compare $\mathrm{MSE}_{\mathrm{target}}$, obtained from the target model's standard inversion for watermark verification, with $\mathrm{MSE}_{\mathrm{proxy}}$, obtained after proxy-based inversion. We estimate the proxy-induced distortion as $\hat{D}_{\mathrm{irr}} = \mathrm{MSE}_{\mathrm{proxy}} - \mathrm{MSE}_{\mathrm{target}}$. For simplicity, the values in parentheses in Tab.~\ref{tab:5_distort} are reported as $D_{\mathrm{irr}}$, which denotes this empirical estimate.

% This empirical estimate measures the additional reconstruction error caused by proxy-target mismatch and serves as an operational proxy for the theoretical posterior mismatch $D_{\mathrm{irr}}$ in Sec.~\ref{sec:4.2}.

% To quantify distortion under model mismatch, we compute the MSE between $z_T$ and $\hat{z}_T$, as reported in Tab.~\ref{tab:5_distort}. This reconstruction error can be decomposed into two parts: the intrinsic reconstruction error of the target model and the additional distortion induced by proxy-target mismatch. The irreducible distortion term $D_\mathrm{irr}$, defined in Sec.~\ref{sec:4.2}, is computed as the MSE gap between the proxy-based reconstruction and the intrinsic baseline.

When the target and proxy models share the same architecture (\textit{e.g.}, SD2.1), both the reconstruction error and the estimated $\hat{D}_{\mathrm{irr}}$ remain small. For example, under TR with SD2.1 as both target and proxy, the MSE is $0.621$, with $\hat{D}_{\mathrm{irr}}=0.368$. In contrast, cross-model settings produce larger distortion, reaching an MSE of $1.496$ for TR when targeting SD3 with an SD2.1 proxy. This shows that proxy-induced distortion is measurable and increases with target-proxy discrepancy, consistent with our formulation, although MSE alone is not reliable for
detection under common image distortions (see Sec.~\ref{sec:d}).

% When the target and proxy models share the same architecture
% (\textit{e.g.}, SD2.1), both the reconstruction error and the estimated
% \(\widehat{D}_{\mathrm{irr}}\) remain small. For example, under TR with
% SD2.1 as both target and proxy, the MSE is \(0.621\), with
% \(\widehat{D}_{\mathrm{irr}}=0.368\). In contrast, cross-model settings
% produce substantially larger distortion, reaching an MSE of \(1.496\)
% for TR when targeting SD3 with an SD2.1 proxy. These results show that
% proxy-induced distortion is measurable and increases with target-proxy
% discrepancy, consistent with our formulation.

% Although MSE effectively quantifies mismatch-induced distortion, it remains insufficient for reliable detection. This is because various image distortions can induce reconstruction errors that are statistically indistinguishable from those caused by forgery  (see Sec.~\ref{sec:d} for empirical results).

% provides empirical validation for the distortion floor established in Thm.~\ref{thm:4.1}
\vspace{-0.1in}

% \begin{table}[t]
% \centering
% \caption{Distortion between watermarked and reconstructed latents. Baseline error (without proxy guidance) is denoted as ``\textit{w/o}''; the subscript $\mathcal{A}$ identifies the specific proxy model used for the attack.}
% \vspace{-0.05in}
% \label{tab:5_distort}
% \resizebox{0.75\columnwidth}{!}{
% \begin{tabular}{lcccccc}
% \toprule
% & \multicolumn{3}{c}{TR} & \multicolumn{3}{c}{GS}\\
% \cmidrule(lr){2-4} \cmidrule(lr){5-7}
% Target & \textit{w/o} & SD2.1$_\mathcal{A}$ & SD3$_\mathcal{A}$ & \textit{w/o} & SD2.1$_\mathcal{A}$ & SD3$_\mathcal{A}$ \\
% \midrule
%     SD2.1 & 0.253 & 0.621 & 1.285 & 0.217 & 0.534 & 1.262 \\
%     SDXL  & 0.405 & 1.222 & 1.229 & 0.430 & 1.239 & 1.251 \\
%     PixArt-$\Sigma$  & 0.473 & 1.244 & 1.257 & 0.480 & 1.252 & 1.271 \\
%     FLUX  & 0.422 & 1.528 & 1.322 & 0.399 & 1.552 & 1.333 \\
%     SD3   & 0.573 & 1.496 & 1.005 & 0.570 & 1.512 & 1.004 \\
% \bottomrule
% \end{tabular}}
% \vspace{-0.2in}
% \end{table}

\begin{table}[t]
\centering
\caption{Distortion between watermarked and reconstructed latents. Baseline error (\textit{w/o}) represents the intrinsic reconstruction error without proxy guidance, while values in parentheses denote the empirical $D_\mathrm{irr}$ estimate.}
\vspace{-0.02in}
\label{tab:5_distort}
\resizebox{\columnwidth}{!}{
\begin{tabular}{lcccccc}
\toprule
& \multicolumn{3}{c}{TR} & \multicolumn{3}{c}{GS}\\
\cmidrule(lr){2-4} \cmidrule(lr){5-7}
Target & \textit{w/o} & SD2.1$_\mathcal{A}$ $(D_\mathrm{irr})$ & SD3$_\mathcal{A}$ $(D_\mathrm{irr})$
& \textit{w/o} & SD2.1$_\mathcal{A}$ $(D_\mathrm{irr})$ & SD3$_\mathcal{A}$ $(D_\mathrm{irr})$ \\
\midrule
SD2.1 & 0.253 & 0.621 (0.368) & 1.285 (1.032)
       & 0.217 & 0.534 (0.317) & 1.262 (1.045) \\

SDXL & 0.405 & 1.222 (0.817) & 1.229 (0.824)
      & 0.430 & 1.239 (0.809) & 1.251 (0.821) \\

PixArt-$\Sigma$ & 0.473 & 1.244 (0.771) & 1.257 (0.784)
                 & 0.480 & 1.252 (0.772) & 1.271 (0.791) \\

FLUX & 0.422 & 1.528 (1.106) & 1.322 (0.900)
      & 0.399 & 1.552 (1.153) & 1.333 (0.934) \\

SD3 & 0.573 & 1.496 (0.923) & 1.005 (0.432)
    & 0.570 & 1.512 (0.942) & 1.004 (0.434) \\
\bottomrule
\end{tabular}}
\vspace{-0.1in}
\end{table}
\section{Discussions}
\subsection{Pseudo-randomness Property} \label{sec:6.1}
PRC watermark (PRCW)~\cite{gunn2025an} incorporates pseudo-random error-correcting codes (PRC)~\cite{christ2024pseudorandom} to ensure that an adversary cannot distinguish between watermarked and unwatermarked images, even under adaptive query access. As shown in Tab.~\ref{tab:9_prcw}, this pseudo-randomness significantly enhances robustness against forgery attacks. However, successful forgery can still occur in the special case where the proxy and target models are identical. In this setting, our approach serves as an orthogonal defense that remains effective when model mismatch no longer provides protection. We provide further descriptions in Sec.~\ref{sec:d}.

\subsection{Impact of Proxy Model Architecture on Forgery}
In Tab.~\ref{tab:1_attack}, we observe that the proxy model with higher latent channel dimension (\textit{i.e.}, SD3 with $16$ channels) exhibits stronger forgery capability compared to those with lower dimensions (\textit{i.e.}, SD2.1). For example, when SD2.1 serves as the target model, SD3 achieves a remarkably high detection rate (\textit{e.g.}, $0.972$ for TR and $0.991$ for GS). This indicates that increased channel capacity enables the proxy to capture and manipulate finer-grained latent structures, facilitating more accurate alignment with the watermarked latent during forgery. Nevertheless, the unavoidable proxy–target mismatch still induces a structured geometric deviation in latent space, which our method consistently exploits to distinguish forged samples from watermarked ones (see Tab.~\ref{tab:2_reprompt}).

\subsection{Practical Limitations and Trade-offs}
While capturing robust local descriptors (\textit{e.g.}, SIFT~\cite{lowe2004distinctive} or ORB~\cite{rublee2011orb}) or deep feature correspondences (\textit{e.g.}, DIFT~\cite{tang2023emergent}) at intermediate denoising or sampling steps could enhance forgery detection, such approaches incur prohibitive storage overhead for SPs. This requirement contradicts the zero-storage principles inherent in watermarking. Accordingly, our framework operates exclusively on the latent noise $z_T$ and $\hat{z}_T$. This design ensures that our approach remains lightweight and maintains seamless compatibility with existing schemes.

\subsection{Security Boundaries under Stronger Attackers}
The identical target-proxy setting represents an important security boundary in which the proxy-induced architectural mismatch ($\epsilon_{\mathrm{mis}}$) is minimized. However, the practical forgery boundary is constrained by an unavoidable internal accumulated error ($\epsilon_{\mathrm{int}}$), as discussed in Sec.~\ref{sec:c.1}. This error arises from the asymmetry between the forward generation and inversion processes and accumulates throughout the diffusion trajectory, even under identical architectures. Consequently, exact trajectory reversal remains difficult in practice, leading to measurable latent discrepancies.

Empirically, even under this challenging attacker setting, our method achieves robust detection performance, with AUCs above $0.955$ for G-Cos and $0.928$ for L-SPD. In addition, the proposed pre-verification stage serves as an early warning mechanism by flagging samples with anomalous geometric drift as potential forgeries, even when watermark verification succeeds.
\section{Conclusion}
In this work, we demonstrate that the intrinsic limitations of black-box forgery attacks stem from proxy-target model mismatch, as formalized through the lens of rate–distortion theory. We reveal that this mismatch manifests as structured geometric deviations on the intrinsic latent manifold, characterized by global drift and local deformation. These geometric deviations provide a principled basis for identifying forged samples. Extensive experimental results validate the effectiveness of our approach across diverse black-box forgery scenarios. Our findings offer a new perspective for the design of more robust semantic watermarking schemes.

% \section*{Accessibility}
% Authors are kindly asked to make their submissions as accessible as possible for everyone including people with disabilities and sensory or neurological differences. Tips of how to achieve this and what to pay attention to will be provided on the conference website \url{http://icml.cc/}.

% \section*{Software and Data}
% If a paper is accepted, we strongly encourage the publication of software and data with the camera-ready version of the paper whenever appropriate. This can be done by including a URL in the camera-ready copy. However, \textbf{do not} include URLs that reveal your institution or identity in your submission for review. Instead, provide an anonymous URL or upload
% the material as ``Supplementary Material'' into the OpenReview reviewing
% system. Note that reviewers are not required to look at this material
% when writing their review.

% Acknowledgements should only appear in the accepted version.
\section*{Acknowledgements}
% \vspace{-0.05in}
This work was supported by the National Science and Technology Council (NSTC) with Grants NSTC 114-2221-E-001-010-MY2, 114-2634-F-001-001-MBK, 114-2634-F-002-004, and AS-IAIA-114-M10 for Academia Sinica.

% \textbf{Do not} include acknowledgements in the initial version of the paper submitted for blind review. If a paper is accepted, the final camera-ready version can (and usually should) include acknowledgements.  Such acknowledgements should be placed at the end of the section, in an unnumbered section that does not count towards the paper page limit. Typically, this will include thanks to reviewers who gave useful comments, to colleagues who contributed to the ideas, and to funding agencies and corporate sponsors that provided financial support.

\section*{Impact Statement}
% \vspace{-0.05in}
This paper enhances the security of generative models by identifying forged watermarks through the lens of geometric distortion. Our approach helps mitigate the threats posed by black-box forgery attacks, thereby preventing unauthorized use and malicious dissemination. Overall, this work contributes to ongoing efforts toward the responsible and secure deployment of generative AI systems.

% Authors are \textbf{required} to include a statement of the potential broader impact of their work, including its ethical aspects and future societal consequences. This statement should be in an unnumbered section at the end of the paper (co-located with Acknowledgements -- the two may appear in either order, but both must be before References), and does not count toward the paper page limit. In many cases, where the ethical impacts and expected societal implications are those that are well established when advancing the field of Machine Learning, substantial discussion is not required, and a simple statement such as the following will suffice:
% The above statement can be used verbatim in such cases, but we encourage authors to think about whether there is content which does warrants further discussion, as this statement will be apparent if the paper is later flagged for ethics review.

\clearpage
\newpage

\bibliography{reference}
\bibliographystyle{icml2026}

%%%%%%%%%%%%%%%%%%%%%%%%%%%%%%%%%%%%%%%%%%%%%%%%%%%%%%%%%%%%%%%%%%%%%%%%%%%%%%%
%%%%%%%%%%%%%%%%%%%%%%%%%%%%%%%%%%%%%%%%%%%%%%%%%%%%%%%%%%%%%%%%%%%%%%%%%%%%%%%
% APPENDIX
%%%%%%%%%%%%%%%%%%%%%%%%%%%%%%%%%%%%%%%%%%%%%%%%%%%%%%%%%%%%%%%%%%%%%%%%%%%%%%%
%%%%%%%%%%%%%%%%%%%%%%%%%%%%%%%%%%%%%%%%%%%%%%%%%%%%%%%%%%%%%%%%%%%%%%%%%%%%%%%
\newpage
\appendix
\onecolumn

\clearpage
\setcounter{page}{1}

The content of Supplementary Material is summarized as follows: 1) In Sec.~\ref{sec:a}, we provide the background information to facilitate a better understanding of our methodology; 2) In Sec.~\ref{sec:b}, we state the underlying assumptions and provide formal proofs for Theorem~\ref{thm:4.1} and Theorem~\ref{thm:4.2}; 3) In Sec.~\ref{sec:c}, we offer an in-depth analytical extension of our main findings, focusing on the theoretical underpinnings of detection errors and the practical scope of our evaluation; 4) In Sec.~\ref{sec:d}, we elaborate on our implementation, including the datasets, the model architectures, and an extended set of experimental evaluations; 5) Finally, Sec.~\ref{sec:e} showcases the visual examples of forged images generated by guidance- and optimization-based black-box forgery attacks.

\section{Background} \label{sec:a}
\subsection{Semantic Watermarking Methods} \label{sec:A.1}
\noindent \textbf{Tree-Ring (TR).} Tree-Ring~\cite{wen2023tree} introduces the concept of inversion-based watermarking, which can be divided into two phases: ($i$) \textit{Generation} and ($ii$) \textit{Detection}.

$\bullet$ \textit{Generation}. During image generation, TR injects a concentric circular pattern into the frequency representation of a clean initial latent $z_T \sim \mathcal{N}(0, \mathbf{I}_\mathrm{N})$, generating a watermarked noise $z_T^{(w)}$. This latent then serves as the starting point for the diffusion process, which involves denoising and decoding to generate the final watermarked image $x^{(w)}$.

$\bullet$ \textit{Detection}. We first reconstruct the initial noise $\hat{z}_T^{(w)} = \mathcal{I}_{0 \to T}(\mathcal{E}(x^{(w)}); \mathcal{U})$ and analyze its frequency spectrum. A statistical test aggregates the squared absolute differences between observed and expected frequency values across all rings; a resulting $p$-value below the threshold $\tau$ indicates the presence of the watermarked pattern.

In this paper, we use a ring pattern with a radius of $10$ and apply zero-bit watermarking. Following the prior work~\cite{muller2025black}, we adopt the same detection thresholds derived from statistics on 5,000 watermarked and 5,000 clean images to achieve the target false positive rate (FPR).

\noindent \textbf{RingID.} As a subsequent advancement of TR, RingID~\cite{ci2024ringid}, ensures that the distribution of $z_T^{(w)}$ more closely aligns with $\mathcal{N}(0, \mathbf{I}_N)$. Furthermore, it introduces support for multi-bit watermarking, enabling the embedding of multi-bit messages for user identification. Note that we follow the default parameters of RingID for implementation.

\noindent \textbf{Symmetric Fourier-based Watermarks.} Compared to existing frequency-domain methods, Symmetric Fourier-based watermarks~\cite{lee2025semantic} ensure frequency integrity by enforcing Hermitian symmetry (HS). Based on this principle, the authors introduce two semantic variants, HSTR and HSQR, both of which exhibit superior performance and robustness over established baselines such as TR and RingID. To ensure a fair comparison, our experimental settings align with the original~\cite{lee2025semantic} (\textit{i.e.}, central-aware embedding with a $44 \times 44$ frequency mask, and HSQR configured with QR version 1 and a box size of $2$).

\noindent \textbf{Gaussian Shading (GS).} Gaussian Shading~\cite{yang2024gaussian} is the first bit-stream level semantic watermarking by incorporating the cryptographic primitive to achieve preserving generation. We describe the two phases of GS as follows:

$\bullet$ \textit{Generation}. Given a message $s$ of length $k$, we first replicate it $\rho$ times to obtain $s^d$, which is then encrypted using the symmetric stream cipher ChaCha20~\cite{bernstein2008chacha}. The cipher takes a secret key and a unique random seed as input for each image, generating an encrypted bitstream $m$. The encrypted message $m$ is then used to modulate the sampling process of the initial latent $z_T^{(w)}$. Specifically, GS partitions the Gaussian distribution into $2^\ell$ bins with equal probability. In this work, we adopt the standard setting with $\ell=1$, which effectively bisects the Gaussian distribution into two regions corresponding to negative and positive values. Each corresponding latent bit of $m[i] \in \{0,1\}$ determines whether $z_T^{(w)}[i]$ is sampled from the negative or the positive region of the Gaussian distribution. In addition, the encryption ensures that the bits in $m$ are uniformly distributed, thereby preserving the Gaussian distribution of $z_T^{(w)}$. The watermarked image is then generated by continuing the standard sampling process: $x^{(w)} = \mathcal{G}_{T\rightarrow0}(z_T^{(w)}; \mathcal{U})$.

$\bullet$ \textit{Verification}. To verify the watermark, GS first applies the inversion process $\mathcal{I}_{0\rightarrow T}$ to obtain the estimated $\hat{x}_T^{(w)}$. The inverted latent $\hat{x}_T^{(w)}$ is then quantized to retrieve the encrypted message bits $m'$. Specifically, we set $m'[i] = 0$ if $\hat{z}_T^{(w)}[i] < 0$ and $m'[i] = 1$ otherwise. The recovered bitstream $m'$ is then decrypted to reconstruct the repeated message $s'^d$. Finally, the original message of $s'$ is determined via majority voting over $\rho$ replicated bits, which corrects bit errors and improves robustness to noise.

Following the settings in~\cite{muller2025black}, we use an encoding window of $\ell = 1$, with a unique random key and message per image. The message length $k$ is $256$, resulting in 1024 bits. The repetition factor $\rho$ is $64$ for SD2.1, and $256$ for FLUX.1 and SD3, which uses $16$-channel latents compared to four channels in the other models. In the detection scenario, we count a true positive if $r(s, s')$ exceeds a threshold calibrated to achieve a specified FPR ($\textit{i.e.}, 10^{-6}$). The threshold is $0.70703$. In addition, since our objective is centered on detection, we exclude attribution scenarios from this study.

% For attribution, we compute $r(s, s')$ between the recovered bit string $s'$ and each bit string $s$ in a pool of 100k users.

In addition, Videoshield~\cite{hu2025videoshield} adapts the GS scheme to both text-to-video and image-to-video diffusion models by incorporating intrinsic tamper localization. This training-free approach preserves spatio-temporal integrity while ensuring the latent trajectory remains consistent throughout the temporal domain. Crucially, our detection framework can be extended to assess the authenticity of each recovered latent frame, effectively distinguishing between watermarked recovered latents and forged recovered latents before watermark extraction. The extension of our method to such video-based scenarios is deferred to future work.
 
\noindent \textbf{TAG Watermarks.} Building on GS, TAG watermark~\cite{chen2025tag} employs a dual-mark joint sampling strategy to co-embed copyright and localization watermarks without quality loss. By leveraging the sensitivity of the inversion process, it can localize manipulated regions to guide tamper-aware decoding, which excludes compromised bits and robustly restores messages under tampering. For our experiments, we embed the $256$-bit watermark into the initial latent noise $z_T$ using the TAG scheme. To maximize error resilience, we implement an adaptive repetition strategy, where the redundancy factor $N_{rep}$ is dynamically determined by $N_{rep} = \lfloor \frac{\mathrm{Dim}(z_T)}{256} \rfloor$, thereby ensuring the watermark payload fully populates the available latent dimensions $\mathrm{Dim}(z_T)$. The threshold follows the same setting as GS.

\noindent \textbf{PRC Watermarks.} The PRC Watermark (PRCW)~\cite{gunn2025an} embeds information into the latent representation of diffusion models by using pseudorandom error-correcting codes (PRC)~\cite{christ2024pseudorandom}. Instead of stream cipher techniques, PRCW focuses on achieving statistical undetectability alongside robustness, ensuring that an adversary cannot distinguish between watermarked and unwatermarked images, even after making many adaptive queries. However, recent studies~\cite{lee2025removal, wang2025cryptanalysis} have exposed emerging threats in such schemes, ranging from direct signal removal to cryptanalytic exploits against the PRC mechanism. In our implementation, following the original settings in~\cite{gunn2025an}, we set the undetectable parameter $t=3$ and maintain a FPR of $10^{-5}$.

% generate pseudorandom bitstreams, which are then mapped to latent representations that follow a standard normal distribution. 

% This pseudo-randomness property enhances the PRC watermark’s resilience against forgery attacks by making it extremely difficult for adversaries to estimate or replicate the watermark signal. Nevertheless, this increased robustness introduces a trade-off, as the detection performance is generally lower than that of TR and GS.

% While their approach eliminates the need to store a separate key for each image, it introduces heavyweight cryptographic complexity.
% The use of encoding and belief-propagation decoding (Pearl, 2014) incurs substantial computational cost and latency during watermark embedding and extraction. We present the experimental results in Appendix F.1. Moreover, their inherent uncertainty leads to them being prone to hitting a performance ceiling, failing to recover the watermark under certain lossy conditions.

\noindent \textbf{Other Related Works.} Beyond the schemes discussed above, several studies~\cite{li2025gaussmarker, lishallow} have proposed alternative approaches to enhance the consistency of semantic watermarks. GaussMarker~\cite{li2025gaussmarker} embeds watermark signals in both spatial and frequency domains, forming a dual-domain representation. To handle geometric transformations, it employs a learnable Gaussian Noise Restorer (GNR) that maps distorted latents back to a Gaussian distribution, mitigating the effects of rotation and cropping. Similarly, Shallow Diffuse~\cite{lishallow} adopts a decoupling strategy that exploits the low-dimensional subspace of the diffusion process. By embedding the watermark primarily in the null space of this subspace, the method separates the watermark signal from the sampling trajectory. In practice, GaussMarker can be viewed as a combination of TR and GS, while Shallow Diffuse is derived from the TR scheme. Accordingly, in this work, we focus our comparisons on representative frequency-domain and bitstream-level watermarking schemes. 

\subsection{Black-box forgery Attacks} \label{sec:A.2}
\noindent \textbf{Guidance-based Methods.} An adversary performs guidance-based forgery \cite{muller2025black, zhu2025optimization} by first applying DDIM inversion with a proxy model $\Theta_{\mathcal{A}}$ to estimate a latent noise vector $\hat{z}_T^{(w)}$ from the public watermarked image $x^{(w)}$. This estimated latent serves as the initialization for a guided reverse diffusion process. Guidance can be introduced either through a textual prompt $t$ or via a controllable model such as ControlNet \cite{zhang2023adding}, which conditions generation on a cover image $x^{(c)}$.

In the latter case, a trainable control module $\mathcal{F}_{\mathcal{A}}$ extracts structural information from $x^{(c)}$ (\textit{i.e.}, edge or depth representations) and an encoder $\mathcal{E}_{\mathcal{A}}$ maps this information into a visual condition embedding \cite{zhu2025optimization}. This embedding guides the generation process jointly with a textual embedding $\mathcal{T}_{\mathcal{A}}(t^{(c)})$, which is derived from a descriptive prompt $t^{(c)}$ associated with the cover image and conditioned on a pretrained, frozen U-Net ($\mathcal{U}_{\mathcal{A}}$).

We denote the resulting guided reverse diffusion process by the operator $\mathcal{G}$, which is parameterized by the U-Net and the control module,
\begin{align*}
    \hat{z}_0' = \mathcal{G}_{\mathcal{A}, T\rightarrow0}(\hat{z}_T^{(w)} \mid \mathcal{E}_{\mathcal{A}}(x^{(c)}), \mathcal{T}_{\mathcal{A}}(t^{(c)}); \mathcal{U}_{\mathcal{A}}, \mathcal{F}_{\mathcal{A}}),
\end{align*}
where $\mathcal{E}_{\mathcal{A}}(x^{(c)})$ and $\mathcal{T}_{\mathcal{A}}(t^{(c)})$ provide the visual and textual conditions, respectively. Finally, the decoder $\mathcal{D}$ maps the refined latent vector $\hat{z}_0'$ back to the pixel space, producing the forged image $\hat{x}^{(w)}$.

\noindent \textbf{Optimization-based Methods.} In contrast to guidance-based methods that manipulate the reverse diffusion process, optimization-based forgery operates by directly solving for an optimal latent variable. The core objective is to find a minimal perturbation to a clean image's latent state which, upon forward diffusion to timestep $T$, aligns with the known latent representation of a target watermarked image. As demonstrated in~\cite{muller2025black}, performing this optimization in the near-noiseless latent space at $t = 0$ is an effective strategy.

The process begins by encoding the cover image $x^{(c)}$ to obtain its latent representation $\hat{z}^{(c)}_0 = \mathcal{E}_{\mathcal{A}}(x^{(c)})$. The optimization objective is then formulated as minimizing the squared $L_2$ distance between the diffused perturbed latent and the target, as defined by the loss function:
\begin{align*}
    L_{\text{forgery}}(\delta) = \left\| \mathcal{I}_{0 \rightarrow T}(\hat{z}^{(c)}_0 + \delta; u_{\mathcal{A}}) - \hat{z}^{(w)}_T \right\|_2,
\end{align*}
where $\mathcal{I}_{0\rightarrow T}$ denotes the deterministic forward diffusion process that applies noise according to the predefined schedule up to timestep $T$, and $u$ represents the denoising backbone model. The adversary applies gradient descent for up to $N$ steps to minimize this loss $\textit{w.r.t.}$ $\delta$. Once the optimal perturbation $\delta^*$ is found, the forged latent $\hat{z}^{(c)}_0 + \delta^*$ is decoded using the proxy model to generate the final forged image $\hat{x}^{(c)}$.

\subsection{Rate-distortion Theory} \label{sec:A.3}
Rate-distortion theory~\cite{shannon1959coding,koga2013information} characterizes the fundamental trade-off between the rate used to represent samples from a data source $X \sim p_X$ and the expected distortion incurred in decoding those samples from their compressed representations. Formally, the relation between the input $X$ and output $\hat{X}$ of an encoder-decoder pair is a (possibly stochastic) mapping defined by some conditional distribution $p_{\hat{X}|X}$. The expected distortion is given by
\begin{align}
\mathbb{E}[\Delta(X,\hat{X})],
\end{align}
where the expectation is over the joint distribution $p_{X,\hat{X}}=p_{\hat{X}|X}p_X$, and $\Delta:\mathcal{X} \times \hat{\mathcal{X}} \rightarrow \mathbb{R}^+$ is any full-reference distortion measure such that $\Delta(X,\hat{X})=0$ if and only if $X=\hat{X}$.

A fundamental result states that for an i.i.d. source $X$, the minimum achievable rate under a distortion constraint $D$ is given by the rate-distortion function:
\begin{equation}\label{eq:RD}
R(D) = \min_{p_{\hat{X}|X}} \, I(X,\hat{X}) \quad \textrm{s.t.} \quad \mathbb{E}[\Delta(X,\hat{X})] \le D,
\end{equation}
where $I$ denotes mutual information \cite{cover2012elements}. 
Closed-form expressions for the rate-distortion function $R(D)$ exist only for a few source distributions and simple distortion measures (\textit{e.g.}, mean-squared error or Hamming distance), but in general, $R(D)$ is known to be non-increasing, convex, and continuous. 

\subsection{Symmetric Positive-Definite (SPD) Manifold and AIRM}
The space of SPD matrices forms a differentiable manifold known as the SPD manifold, which has been successfully applied in various fields~\cite{huang2017riemannian, brooks2019riemannian, you2021re}. Since SPD matrices lie on a curved, non-Euclidean space, standard similarity measures (\textit{e.g.}, Euclidean distance or Pearson correlation) fail to capture their intrinsic geometry. To account for the manifold structure, we adopt the Affine-Invariant Riemannian Metric (AIRM)~\cite{pennec2006riemannian} to measure the distance between two SPD matrices $P_1$ and $P_2$. The AIRM distance is defined as: 
\begin{align}
d(P_1, P_2) = \sqrt{\sum_{i=1}^N \ln^2 \lambda_i},
\end{align}
where $\lambda_i$ are the eigenvalues of $P_1^{-1}P_2$. The AIRM is characterized by its invariance under affine transformations and inversion. This metric provides a principled framework for analyzing manifold-valued data and effectively mitigates the ``\textit{swelling effect},'' a common artifact where the determinant of an intermediate matrix exceeds that of its endpoints in Euclidean space.

\section{Proofs of Theorems and Lemmas in Sec.~\ref{sec:4}}
\label{sec:b}

\subsection{Assumptions} \label{sec:b.1}
\begin{assumption}[Gaussian Approximation] \label{assumpt:1}
We assume the adversary's inversion task can be modeled by conditional probability distributions that are approximately multivariate Gaussian. Specifically, the true posterior distribution (from the target model, $\Theta$) and the adversary's assumed posterior (from the proxy model, $\Theta_{\mathcal{A}}$) are given by:
\[
P_{\Theta}(X_0 | Y_w) \approx \mathcal{N}(\mu_{\Theta}(Y_w), \sigma^2 I_d), \quad
P_{\Theta_{\mathcal{A}}}(X_0 | Y_w) \approx \mathcal{N}(\mu_{\mathcal{A}}(Y_w), \sigma^2 I_d),
\]
where $Y_w$ is the observed watermarked output, and $\sigma^2$ is a shared noise variance. The means $\mu_{\Theta}$ and $\mu_{\mathcal{A}}$ differ due to the model mismatch.
\end{assumption}

The shared-variance assumption in Assumption~\ref{assumpt:1} reflects a simplified setting in which the adversary has access to a generative model with similar uncertainty behavior. While variance may differ across architectures in practice, adopting a shared variance allows us to isolate the effect of posterior mean deviation, which constitutes the dominant source of mismatch in latent representations. Moreover, this assumption establishes a conservative analytical baseline: any additional variance mismatch would further increase the reconstruction distortion, reinforcing the theoretical limits derived in Sec.~\ref{sec:4}.

To further justify the shared-variance premise, we consider the underlying score-matching objective. According to Tweedie's Formula \citep{efron2011tweedie,luo2022understanding}, the posterior mean of clean data $x_0$ is determined by the score function:
\begin{equation}
\mathbb{E}[x_0 \mid x_t] = \frac{1}{\sqrt{\alpha_t}} \big(x_t + (1 - \alpha_t)\nabla \log p_t(x_t)\big).
\end{equation}
Crucially, the posterior covariance is related to the local curvature of the log-density, \textit{i.e.}, $\nabla^2 \log p_t(x_t)$. Because these quantities are defined by the data distribution and noise schedule rather than the specific model architecture, modern backbones (\textit{e.g.}, U-Net, DiT) trained to approximate the same target score $\nabla \log p_t(x_t)$ exhibit statistically consistent estimation errors and uncertainty structures. 

We note that modern diffusion/flow-matching frameworks (e.g., EDM, FLUX) increasingly adopt straightened transport paths and deterministic sampling (\textit{i.e.}, solving ODEs via Euler or Heun methods) for stable generation. Under such deterministic ODE trajectories, inversion and reconstruction are primarily governed by the learned vector field (drift) rather than stochastic noise injection. Consequently, the sensitivity to exact covariance structures is significantly reduced, mitigating residual variance discrepancies across different models. Thus, Assumption~\ref{assumpt:1} is not merely a mathematically convenient simplification, but a realistic reflection of practical model standardization.

Second, we define a distortion metric to quantify representation quality. In rate-distortion theory, the choice of distortion measure is critical for determining the theoretical limits of compression. Under the Gaussian assumption, mean squared error (MSE) is both standard and analytically tractable. We thus adopt MSE to assess the fidelity of recovered latents.

\begin{assumption}[Mean Squared Error Distortion] \label{assumpt:2}
The latent recovery quality is measured by the normalized mean squared error (MSE) between the original latent variable $X_0$ and the adversary's estimated latent $\hat{X}_0$:
\[
D(\hat{X}_0, X_0) = \mathbb{E}\left[\frac{1}{d}|| X_0 - \hat{X}_0 ||_2^2\right],
\]
where the expectation is over the joint distribution of ($X_0$, $\hat{X}_0$).
\end{assumption}

\subsection{Proof of Theorem~\ref{thm:4.1}} \label{sec:b.2}

\begin{proof}
We derive the lower bound under the Gaussian latent approximation and MSE distortion.

\paragraph{Step 1: Proxy-target Posterior Mismatch.}
Let $P = P_{\Theta}(\cdot\mid Y_w), Q = P_{\Theta_{\mathcal A}}(\cdot\mid Y_w)$ denote the target and proxy posteriors. By Def.~\ref{def:4.1}, the mismatch-induced irreducible information loss is $D_{\mathrm{irr}} = \mathbb{E}_{Y_w} \left[ D_{\mathrm{KL}}(P\|Q) \right]$. When $D_{\mathrm{irr}}>0$, the proxy posterior differs from the target posterior. This posterior mismatch captures an irreducible source of reconstruction error that cannot be removed by increasing the information rate. Under our Gaussian rate-distortion surrogate, this mismatch contributes to the additive term $D_{\mathrm{irr}}$ in the distortion lower bound.

% \textbf{Step 1}: Bounding statistical distance via Pinsker's Inequality. Let $P=P_{\Theta}(X|Y_w)$ be the true target posterior and $Q=P_{\Theta_{\mathcal{A}}}(X|Y_w)$ be the proxy posterior, with a proxy-induced mismatch $D_\mathrm{KL}(P||Q)=\epsilon\equiv D_\mathrm{irr}$. By Pinsker's Inequality, the total variation distance between $P$ and $Q$ satisfies: $\|P-Q\|_{\mathrm{TV}}\leq \sqrt{\frac{1}{2} D_\mathrm{KL}(P||Q)}=\sqrt{2\epsilon}$, which quantifies the deviation between the two posteriors.

\paragraph{Step 2: Effective Information Rate.}
Since the adversary reconstructs the latent through the proxy posterior rather than the target posterior, only part of the nominal information rate $R$ is usable for target-consistent reconstruction. Under the Gaussian latent approximation, we model this loss by the effective rate penalty defined in Def.~\ref{def:4.1}, 
\begin{align*}
I_{\mathrm{pen}} = \frac{1}{2} \log_2\left(1+\frac{D_{\mathrm{irr}}}{\sigma^2}\right).
\end{align*}
Thus, the adversary operates at an effective rate $ R_{\mathrm{eff}} \le R-I_{\mathrm{pen}}$.

% \textbf{Step 2}: Information degradation under posterior mismatch. Any reconstruction $\hat{x}$ generated by the black-box adversary depends on samples from the mismatched proxy $Q$. The discrepancy between $P$ and $Q$ induces a degradation in the achievable mutual information under the true distribution. In particular, by standard continuity bounds of entropy and mutual information with respect to total variation distance, there exists a constant $\alpha > 0$ such that the effective rate satisfies $R_{\mathrm{eff}}\leq R_{\max}-\alpha \|P-Q\|_{\mathrm{TV}}$. Combined with Step 1, we obtain $R_{\mathrm{eff}}\leq R_{\max}-\alpha \sqrt{2\epsilon}.$

\paragraph{Step 3: Gaussian Rate-distortion Lower Bound.}
For a Gaussian source with variance $\sigma^2$ under MSE distortion, the inverse rate-distortion function is $D(R) = \sigma^2 2^{-2R}$. Since $D(R)$ is monotonically decreasing in $R$, substituting $R_{\mathrm{eff}}\le R-I_{\mathrm{pen}}$ gives
\begin{align*}
D(R_{\mathrm{eff}}) \ge \sigma^2 \cdot 2^{-2(R-I_{\mathrm{pen}})}.
\end{align*}
Since $D_{\mathrm{irr}}>0$, the additive mismatch term prevents the lower bound from vanishing, while $I_{\mathrm{pen}}>0$ further reduces the effective rate. This establishes the theorem.
\end{proof}

% \textbf{Step 3}: Distortion lower bound via RD function. By Definition~\ref{df:RD_GS}, under our Assumptions~\ref{assumpt:1} and~\ref{assumpt:2} (Gaussian source with variance $\sigma^2$ and MSE distortion), the inverse RD function is given by $D(R)=\sigma^2 2^{-2R}$. Substituting the penalized rate $R_{\mathrm{eff}}$ gives 

% \begin{equation*}
% \mathbb{E}\big[\|x-\hat{x}\|^2\big] \geq D_{\mathrm{irr}} +\sigma^2 \cdot 2^{-2\left(R_{\max}-\alpha \sqrt{2\epsilon}\right)}.
% \end{equation*}

% Crucially, the term $\alpha \sqrt{2\epsilon}$ provides a concrete realization of the information penalty $I_{\mathrm{pen}}$ defined in Theorem~\ref{thm:4.1}. It shows that any strict posterior mismatch ($\epsilon>0$) induces a strictly positive penalty on the effective rate. As a result, the expected distortion cannot vanish, explicitly recovering the formula in Theorem~\ref{thm:4.1} and establishing the existence of an irreducible distortion floor under the stated assumptions (Corollary~\ref{cor:1}).

\paragraph{Relation to Total Variation.}
The KL mismatch above also implies distributional separation between the target and proxy posteriors. When $D_{\mathrm{irr}}>0$, the target and proxy posteriors are distinct on a set of nonzero measure. Pinsker's inequality provides the standard upper control $\|P-Q\|_{\mathrm{TV}} \le \sqrt{\frac{1}{2}D_{\mathrm{KL}}(P\|Q)}$. This relation is used only to interpret posterior mismatch as a distributional discrepancy. The rate penalty in Thm.~\ref{thm:4.1} is the Gaussian surrogate defined in Def.~\ref{def:4.1}.

\subsection{Proof of Theorem~\ref{thm:4.2}} \label{sec:b.3}
The proof of Theorem~\ref{thm:4.2} leverages two complementary geometric perspectives. Lemma~\ref{lem:1} establishes the existence of a persistent global angular shift on the latent hypersphere, whereas Lemma~\ref{lem:2} accounts for the local structural inconsistency on the SPD manifold.
\vspace{-0.02in}

\begin{proof}[Proof of Lemma~\ref{lem:1}]
We analyze the geometric effect of a proxy-induced perturbation on the hyperspherical representation of the latent. Let $z \equiv z_T^{(w)} \sim \mathcal{N}(0, \textbf{I}_N)$ and consider a forged recovered latent $\hat z = z + \delta$.

We begin by decomposing the perturbation $\delta$ into components parallel and orthogonal to $z$:
\[ \delta = \delta_{\parallel} + \delta_{\perp}, \qquad \delta_{\parallel} = \alpha z, \quad \delta_{\perp} \perp z. \]
By construction, the parallel component affects only the radial magnitude, while the orthogonal component governs orientation changes. The angular distortion between $z$ and $\hat z$ is given by
\[ \cos \theta = \frac{\langle z, z+\delta\rangle} {\|z\|_2 \, \|z+\delta\|_2}. \]
Using the above decomposition and noting that $\langle z, \delta_{\perp}\rangle = 0$, we obtain \[ \langle z, z+\delta\rangle = \|z\|_2^2 + \langle z, \delta_{\parallel}\rangle, \]
and
\[ \|z+\delta\|_2^2 = \|z\|_2^2 + 2\langle z, \delta_{\parallel}\rangle + \|\delta_{\parallel}\|_2^2 + \|\delta_{\perp}\|_2^2. \]

Assuming $\|\delta\|_2 \ll \|z\|_2$, a first-order Taylor expansion of the denominator yields
\[ \cos \theta \approx 1 - \frac{\|\delta_{\perp}\|_2^2}{2\|z\|_2^2}. \]
This approximation shows that the induced angular deviation depends solely on the orthogonal component of the perturbation.

Finally, since $z \sim \mathcal{N}(0, \textbf{I}_N)$, its norm concentrates sharply around $\|z\|_2^2 \approx N$ as $N \to \infty$. Under the proxy-target model mismatch, the perturbation $\delta$ cannot be
consistently aligned with $z$ in expectation, implying that the orthogonal energy
$\|\delta_{\perp}\|_2^2$ remains strictly positive. Consequently, the angular distortion is bounded away from zero, acting as a manifest signature of forgery.
Substituting into the above expression yields
\[ \mathbb{E}[\mathrm{SAD}(z, \hat z)] = \mathbb{E}[\theta] =
\Omega(1), \]
which completes the proof of Lemma~\ref{lem:1}.
\end{proof}

\begin{proof}[Proof of Lemma~\ref{lem:2}] In this proof, we exploit the \textit{congruence invariance} of the AIRM, which states that for any $P, Q \in \mathcal{P}_N$ and any invertible matrix
$G \in \mathbb{R}^{N \times N}$, the distance satisfies $d_{\mathrm{AIRM}}(P, Q) = d_{\mathrm{AIRM}}(GPG^\top, GQG^\top)$.

Let $P \coloneqq \mathcal{C}(z_T^{(w)}) \in \mathcal{P}_N$ denote the local covariance associated with the reference latent $z_T^{(w)}$. To derive the probability bound, we analyze the geometric deviation in two distinct cases:

\noindent \textbf{Case 1: Watermarked recovered latents.} For a watermarked recovered latent $\hat{z}_T^{(w)}$ produced by the target model, the relative configuration of tokens within the local neighborhood $\mathcal{B}$ is preserved up to negligible estimation error. Thus, the induced covariance $Q^{(w)} \coloneqq \mathcal{C}(\hat{z}_T^{(w)})$ remains close to $P$ in the SPD manifold.

Since the regularization $\varepsilon \mathbf{I}$ ensures strict positive definiteness, the AIRM metric is continuous on $\mathcal{P}_N$. Therefore, such structural preservation under an affine (congruent) transformation yields: \[\mathrm{LGI}(z_T^{(w)}, \hat{z}_T^{(w)}) = d_{\mathrm{AIRM}}(P, Q^{(w)}) \approx 0.\]

\noindent \textbf{Case 2: Forged recovered latents.} Consider a forged recovered latent $\hat z_T^{(f)}=\hat z_T^{(w)}+\delta$ that induces a \textit{non-congruent deformation} of the local token neighborhood $\mathcal{B}$. As a result, the induced covariance $Q^{(f)}$ cannot be expressed as $GPG^\top$ for any invertible matrix $G$ sufficiently close to the identity, and thus lies outside the local congruence orbit of $P$ on the SPD manifold.

Mathematically, let $S \coloneqq P^{-1/2}Q^{(f)}P^{-1/2}$. The structural mismatch implies that $S$ cannot be close to the identity. Specifically, there exists a constant $\eta_0 > 0$ such that, with non-vanishing probability $p_0 > 0$, the spectrum of $S$ satisfies:
\[
    \lambda_{\max}(S) \ge e^{\eta_0} \quad \text{or} \quad \lambda_{\min}(S) \le e^{-\eta_0}.
\]
Recall that the AIRM geodesic distance is characterized by the log-eigenvalues of $S$. Thus, by the property of the Frobenius norm:
\[
\mathrm{LGI}(z_T^{(w)},\hat z_T^{(f)}) = \|\log S\|_F = \sqrt{\sum_i \ln^2 \lambda_i(S)} \ge \max_i |\ln \lambda_i(S)| \ge \eta_0. 
\]
Finally, combining the probability bound $p_0$ with the spectral gap $\eta_0$, and taking the expectation over the dataset, we obtain:
\begin{align*}
    \mathbb{E}\!\left[ \mathrm{LGI}(z_T^{(w)}, \hat{z}_T^{(f)}) \right]  \;\ge\; \eta_0 \, p_0 \;=\; \Omega(1).
\end{align*}
This completes the proof of Lemma~\ref{lem:2}.
\end{proof}

\begin{proof}[Proof of Theorem~\ref{thm:4.2}]
The proof relies on a concentration-of-measure argument in high-dimensional spaces, combining the global geometric drift established in Lemma~\ref{lem:1} and the local structural deformation from Lemma~\ref{lem:2}.

First, we define a rejection region $\mathcal{R} \subset \mathbb{R}_+^2$ in the joint metric space defined by SAD and LGI:
\[
\mathcal{R} := \{ (s, \ell) \in \mathbb{R}_+^2 \mid s > \tau_s \text{ and } \ell > \tau_{\ell} \},
\]
where $\tau_s$ and $\tau_{\ell}$ are positive constants.

$\bullet$ \textit{Forged recovered latents}: By Lemma~\ref{lem:1}, the expected angular distortion induced by the orthogonal perturbation component $\|\delta_{\perp}\|$ is bounded away from zero (\textit{i.e.}, $\mathbb{E}[\mathrm{SAD}] = \mu_s = \Omega(1)$). Similarly, Lemma~\ref{lem:2} demonstrates that non-congruent structural deformation leads to a non-vanishing local geometric inconsistency (\textit{i.e.}, $\mathbb{E}[\mathrm{LGI}] = \mu_{\ell} = \Omega(1)$).

However, divergence in expectation alone is insufficient to imply separability. To bridge this gap, we invoke concentration-of-measure results in high-dimensional spaces ($N \gg 1$). Under the assumption that the considered geometric metrics are Lipschitz continuous \textit{w.r.t} the latent variables, it is a well-established result that for isotropic Gaussian distributions, and their normalized projections onto the hypersphere, such quantities concentrate sharply around their expectations, with tails that decay exponentially in the dimension. Consequently, the probability that the observed distortions fall significantly below their expectations decays exponentially with the dimension $N$.

For thresholds chosen such that $0 < \tau_s < \mu_s$ and $0 < \tau_{\ell} < \mu_{\ell}$, we obtain:
\[
\Pr(\mathrm{SAD} > \tau_s) \ge 1 - e^{-\Omega(N)}, \quad \text{and} \quad \Pr(\mathrm{LGI} > \tau_{\ell}) \ge 1 - e^{-\Omega(N)}.
\]
Applying a union bound, the forged sample falls within the rejection region $\mathcal{R}$ with high probability:
\[
\Pr((\mathrm{SAD}, \mathrm{LGI}) \in \mathcal{R}) \ge 1 - 2e^{-\Omega(N)} = 1 - e^{-\Omega(N)}.
\]
$\bullet$ \textit{Watermarked recovered latents}:
For a watermarked recovered latent $\hat{z}_T^{(w)}$ generated by the target model, the recovery error is minimal and structurally coherent. As $N \to \infty$, both SAD and LGI converge to zero in probability. Therefore, the probability that a watermarked sample falls within the rejection region $\mathcal{R}$ vanishes asymptotically.

Combining these results, we find that as $N \to \infty$, the overlap probability between the distributions of watermarked and forged latents in the joint geometric space vanishes. Thus, there exists a decision boundary that separates the two classes with probability approaching one.
\end{proof}

\section{Additional Analysis}
\label{sec:c}
\subsection{Intrinsic and External Errors in Diffusion Models} \label{sec:c.1}
\noindent \textbf{Internal Accumulated Error.} Internal accumulated error $\varepsilon_{\mathrm{int}}$ arises from systematic mismatches between the forward generation and backward inversion processes. In practice, forward diffusion is conditioned on prompts and amplified by classifier-free guidance, while inversion is typically performed under unconditional settings with approximate solvers. Although each stepwise discrepancy is small, such biases accumulate over the diffusion horizon, resulting in a non-negligible estimation error,  \emph{internal latent drift}:
\begin{align*}
\varepsilon_{\mathrm{int}} \;:=\; dist~\!\big(\mathcal{I}_{\Theta}(z_0^{(w)}), z_T^{(w)}\big),
\end{align*}
where $dist(\cdot, \cdot)$ denotes a distance metric (\textit{e.g.}, $L_2$ norm) in the latent space, measuring the divergence of the reconstructed latent $\mathcal{I}_\Theta(z_0^{(w)})$ from the ground-truth $z_T^{(w)}$. This intrinsic error is unavoidable and establishes a baseline distortion level (lower bound) for evaluating watermark robustness within the latent domain.

\noindent \textbf{External Inevitable Perturbations.} Beyond internal drift, we consider two primary external sources of degradation: ($i$) common image-domain distortions $\varepsilon_{\mathrm{img}}$, and ($ii$) black-box forgery-induced model mismatch $\varepsilon_{\mathrm{mis}}$. The latter occurs when a watermarked image is manipulated by an unknown proxy model $\Theta_{\mathcal{A}}$. We define this mismatch error $\varepsilon_{\mathrm{mis}}$ as the latent deviation arising from the internal processing of $\Theta_{\mathcal{A}}$:
\begin{align*}
\varepsilon_{\mathrm{mis}}:= \big(\mathcal{I}_{\Theta_{\mathcal{A}}} \circ \mathcal{G}_{\Theta_{\mathcal{A}}}\big)(z_T^{(w)}) - z_T^{(w)},
\end{align*}
which characterizes the effective latent deviation induced by $\Theta_{\mathcal{A}}$. The magnitude of $\varepsilon_{\mathrm{mis}}$ is correlated with the architectural discrepancy between $\Theta_{\mathcal{A}}$ and $\Theta$: as this discrepancy increases, the induced inversion–sampling mapping departs further from identity, resulting in a larger latent mismatch. Therefore, $\varepsilon_{\mathrm{mis}}$ constitutes an intrinsic, architecture-induced deviation that is unavoidable without access to the exact target parameters $\Theta$.

\subsection{Comparison of Forgery Paradigms and Threat Models}
Our study focuses on black-box forgery scenarios, in which an adversary exploits proxy models to regenerate watermarked content. We distinguish this setting from alternative forgery paradigms, such as statistical forgery~\cite{yang2024can} or watermark copying~\cite{dong2025wmcopier}, which are based on different assumptions and threat models. Specifically, the statistical averaging method~\cite{yang2024can} requires access to a collection of watermarked samples in order to estimate the underlying watermark distribution. Such requirements are often impractical in realistic settings, where an adversary typically has limited access to watermarked data. Moreover, existing evaluations primarily focus on frequency-based semantic watermarks (\textit{e.g.}, TR), and their effectiveness against bitstream-level schemes (\textit{e.g.}, GS) remains unclear. In contrast, WMCopier~\cite{dong2025wmcopier} transplants invisible signatures onto arbitrary images through unconditional diffusion and iterative optimization. However, its evaluation mainly targets post-processing watermarks~\cite{cox2008digital}, rather than the semantic watermarking schemes (\textit{e.g.}, TR and GS) considered in our work.

To the best of our knowledge, black-box forgery~\cite{muller2025black} remains the only approach demonstrated to produce forged outputs against semantic watermarking schemes successfully.

\subsection{Discussion on Adaptive Attacks}
The adversary may be aware of the deployment of anti-forgery detection and could develop adaptive attacks to circumvent it. We consider one possible such adaptive attack, \textit{Reforge}, described below.

\noindent \textbf{Reforge.} \textit{Reforge} adopts a two-stage adaptive strategy: cleansing and re-imprinting. The adversary first removes the original watermark through purification and then imprints an estimated watermark pattern using the proxy model. This two-stage process aims to deceive the watermark detector while preserving the semantic content of the original image.

In practice, such operations inevitably lead to perceptual degradation (\textit{e.g.}, reduced PSNR and SSIM), as removal attacks typically require additive noise to drive the latent representation away from the reference watermarked latent $z_T$. The subsequent re-imprinting further shifts the latent toward the watermarked latent, but this manipulation compounds the structural damage, severely compromising image fidelity with noticeable visual artifacts.

We exclude \textit{Reforge} from our comparison due to its prohibitive computational overhead and poor image fidelity. Furthermore, this two-stage adaptive strategy, which relies on a proxy model, significantly distorts the underlying geometric structure of the original latent manifold. Thus, our metric remains effective against such adaptive attempts.

\subsection{Our Detection Pipeline}
As shown in Fig~\ref{fig:8}, the detection metrics are computed between the reference watermarked latent $z_T$ and the recovered latent $\hat{z}_T$ before watermark verification. Importantly, our detection framework is agnostic to the specific watermarking scheme and can be generalized to any latent-based (\textit{i.e.}, semantic) watermarking that utilizes deterministic reverse diffusion trajectories.

\begin{figure}[ht]
    \centering
    \includegraphics[width=0.4\columnwidth]{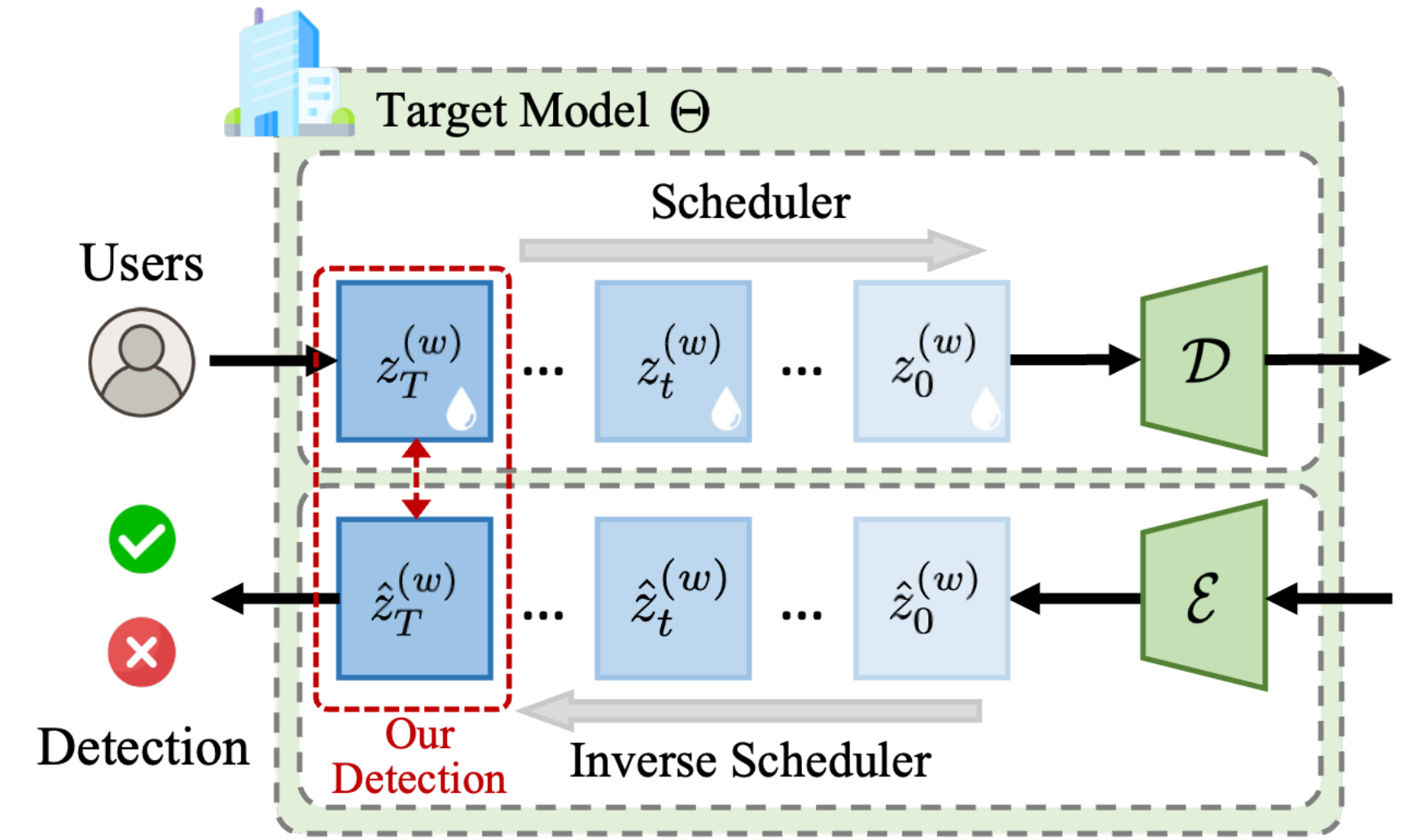}
    \vspace{-0.01in}
    \caption{Illustration of the proposed detection framework.}
    \label{fig:8}
    \vspace{-0.1in}
\end{figure}

\subsection{Connection between Removal Guarantees and Distortion Bounds}
Recent work~\cite{zhao2024invisible} theoretically analyzes the removability of invisible watermarks under stochastic regeneration perturbations in diffusion models. Although that work focuses on post-hoc watermarking, whereas our work studies in-generation watermark schemes, both are related through the distortion introduced during watermark manipulation. 

Specifically,~\cite{zhao2024invisible} shows that successful watermark removal requires sufficient stochastic perturbation, which progressively degrades reconstruction fidelity. In contrast, our analysis reveals that black-box forgery remains subject to $D_\mathrm{irr}$ caused by proxy-target mismatch and inversion-generation asymmetry, even without explicit stochastic perturbation. We further find that this distortion manifests geometrically as directional drift on the hypersphere and localized deformation on the SPD manifold. These observations indicate a fundamental trade-off among watermark suppression, forgery fidelity, and geometric consistency in generative models.

\section{Experimental Details}
\label{sec:d}
In this section, we provide a comprehensive overview of experiments. We elaborate on diffusion model architectures, the datasets, and the computational runtime. Besides, we present additional experiments to provide further insights.

\subsection{Target and Proxy Models}
We evaluate our method across diverse diffusion model architectures, with target models, including SD2.1, SDXL, PixArt-$\Sigma$, SD3, and FLUX.1. Tab.~\ref{tab:6} summarizes the sampling configurations and key hyperparameters for all models used in our experiments. Notably, FLUX.1 and SD3, which are based on rectified-flow matching, employ fewer inference steps (\textit{i.e.}, $20$ for FLUX.1) and a lower guidance scale (\textit{i.e.}, $7.0$ for SD3).
\begin{table}[!ht]
\centering
\caption{Overview of model settings used in the experiments. All images are generated at a $512 \times 512$ resolution. \textit{L. Ch.} denotes the number of latent channels; \textit{Scheduler} indicates the algorithm used for generation and inversion; \textit{Steps} refers to the number of inference steps; \textit{G. Scale} represents the guidance scale during generation.}
\vspace{-0.02in}
\label{tab:6}
\scalebox{0.85}{
    \begin{tabular}{lllclcc}
    \toprule
    Model & Hugging Face ID & Type & L. Ch. & Scheduler & Steps & G. Scale \\ 
    \midrule
    SD2.1 & stabilityai/stable-diffusion-2-1-base & UNet & 4 & DDIM & 50 & 7.5 \\
    SDXL & stabilityai/stable-diffusion-xl-base-1.0 & UNet & 4 & DDIM & 50 & 7.5 \\
    Pixel-$\Sigma$ & PixArt-alpha/PixArt-Sigma-XL-2-512-MS & DiT & 4 & DPM & 20 & 4.5 \\
    FLUX.1 & black-forest-labs/FLUX.1-dev & DiT & 16 & FlowMatchEuler & 20 & 3.5 \\
    SD3 &  stabilityai/stable-diffusion-3-medium & DiT & 16 & FlowMatchEuler & 28 & 7.0 \\
    \bottomrule
    \end{tabular}}
    \vspace{-0.1in}
\end{table}

\subsection{Prompting and Cover Image Datasets}
We employ the Stable-Diffusion-Prompt (SDP)\footnote{\href{https://huggingface.co/datasets/Gustavosta/Stable-Diffusion-Prompts}{Stable-Diffusion-Prompts}} dataset to generate the watermarked images from all target models. For guidance-based forgery attacks, we draw prompts from the same dataset but select a disjoint set to re-guide image generation, ensuring that the forged images are visually distinct from the corresponding watermarked ones. Importantly, we intentionally maintain a consistent prompt distribution between the service provider and the adversary, as this setting provides the adversary with a stronger and more realistic forgery capability. Accordingly, all experiments are conducted within the SDP dataset, rather than using an external prompt source (\textit{i.e.}, the Inappropriate Image Prompts (I2P)\footnote{\href{https://huggingface.co/datasets/AIML-TUDA/i2p}{Inappropriate Image Prompts (I2P)}} dataset adopted in~\cite{muller2025black}). For optimization-based attacks, we adopt the MS-COCO-2017 Dataset~\cite{lin2014microsoft} as cover images, consistent with~\cite{muller2025black}.

\subsection{Runtime of Attack Algorithms}
The experiments described in Sec.~\ref{sec:5} are conducted on a single A6000 GPU, and all methods are evaluated under identical system conditions. To evaluate computational overhead, we report the approximate per-sample execution time for each attack, conducted on a single GPU in a single-batch configuration:
 
$\bullet$ The optimization-based forgery attacks (Sec.~\ref{sec:A.2}) require approximately between $15$ and $20$ minutes to complete $100$ steps. The most time-consuming part of these algorithms is the gradient-based optimization done by the adversary's model (SD2.1 by default). In contrast, verification by the target model, conducted after $20$, $50$, or $100$ optimization steps, is comparatively fast and incurs negligible overhead.

$\bullet$ The guidance-based attack (Sec.~\ref{sec:A.2}) requires between $30$ seconds for smaller models (\textit{i.e.}, SD2.1, SDXL, PixArt-$\Sigma$), and $4$ minutes for larger models such as FLUX.1 and SD3. This time accounts for all stages of the attack, including generating a watermarked image with the target model, inverting and regenerating it using the adversary's model, and verifying the presence of the watermark with the target model.

\subsection{Distortion Methods and Parameters}
To evaluate the robustness of our detection framework, we apply 14 commonly used post-processing distortions to the watermarked images before the inversion process. These distortions are designed to emulate realistic image modifications that may interfere with watermark extraction. Visual examples of each attack type are provided in Fig.~\ref{fig:5}.

\begin{figure}[t]
    \centering
    \begin{subfigure}[b]{0.48\textwidth}
        \centering
        \includegraphics[width=0.8\linewidth]{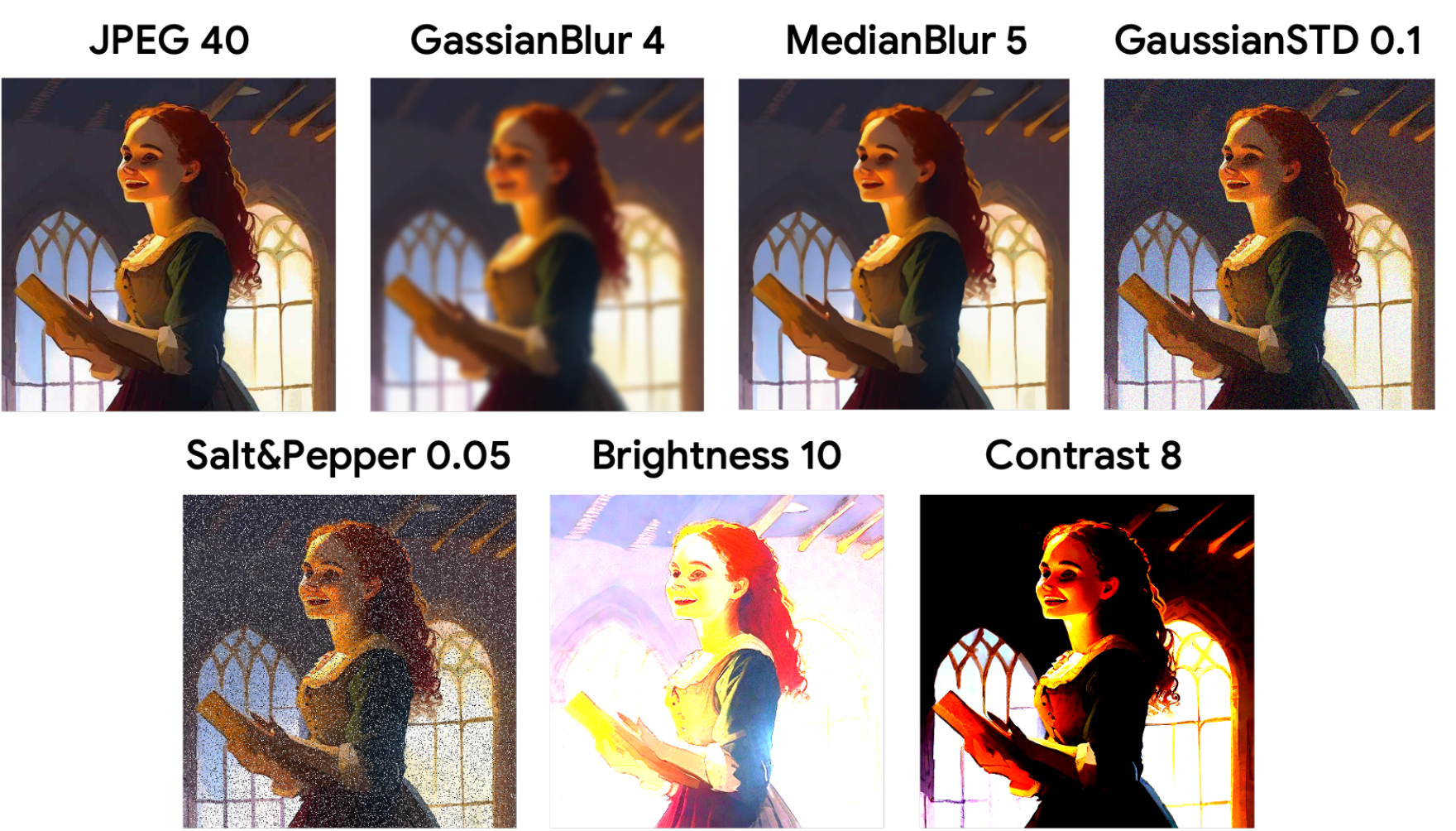}
        \vspace{0.02in}
        \caption{Signal Processing Attacks}
        \label{fig:5a}
    \end{subfigure}
    \hspace{-0.02\textwidth}
    \begin{subfigure}[b]{0.48\textwidth}
        \centering
        \includegraphics[width=0.8\linewidth]{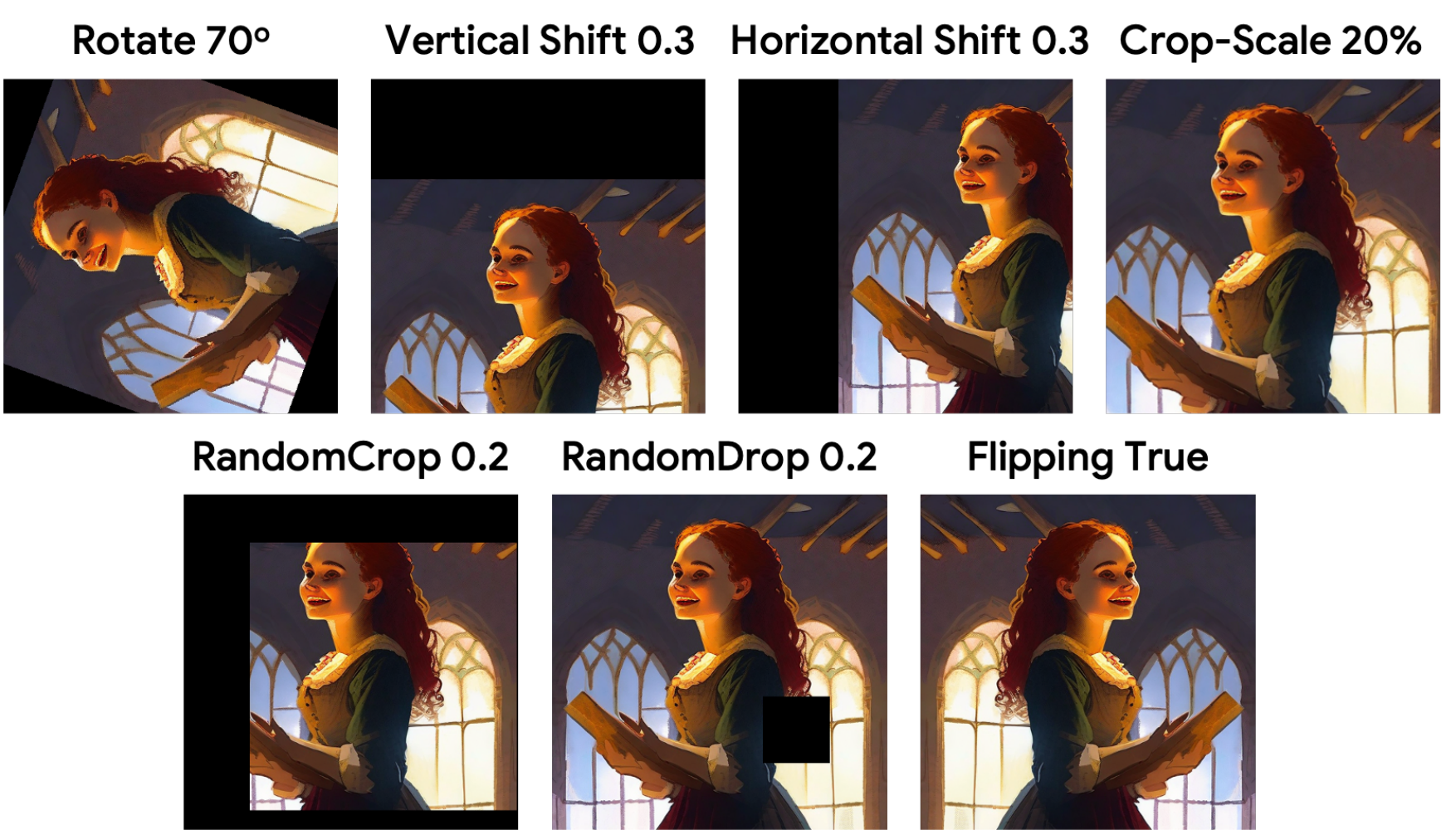}
        \vspace{0.02in}
        \caption{Geometric Attacks}
        \label{fig:5b}
    \end{subfigure}
    \vspace{-0.02in}
    \caption{Visual examples of the post-processing distortions used in our robustness evaluation. ($a$) illustrates signal processing distortions such as JPEG compression levels and noise intensities, while ($b$) shows geometric and erasure-based attacks.} \label{fig:5}
    \vspace{-0.1in}
\end{figure}

$\bullet$ \textit{Signal Processing Attacks}: We consider a broad spectrum of signal distortions with varying intensities to simulate realistic transmission and editing artifacts. These include brightness and contrast adjustments, JPEG compression, Gaussian and Median blurring, and additive noise (\textit{i.e.}, both Gaussian and salt-and-pepper types). We examine these attacks across the parameter ranges specified in Fig.~\ref{fig:7} of the main text.

$\bullet$ \textit{Geometric Attacks}: We further consider a range of geometric transformations that alter the spatial structure of images while largely preserving semantic content. These include scaling, cropping, horizontal/vertical translation, flipping, rotation, and random region dropping (occlusion). Parameter ranges for each geometric attack are specified in Fig.~\ref{fig:7} of the main text.

\subsection{Empirical Validation of Theoretical Assumptions}
\label{sec:d5}

We empirically examine Assumption~\ref{assumpt:1} by analyzing the inversion residuals \((\hat{z}_T^{(f)}-z_T^{(w)})\) under the reprompting forgery attack with the TR scheme. To assess the Gaussian latent approximation, we apply a random-projection normality test to the residuals. As shown in Tab.~\ref{tab:normality_test}, across different target models with SD2.1 as the proxy, the average skewness and kurtosis remain close to $0$ and $3$, respectively. This indicates that the residual statistics are broadly consistent with a Gaussian approximation.

% We empirically examine Assumption~\ref{assumpt:1} by analyzing the inversion residuals $(\hat{z}_T^{(f)} - z_T^{(w)})$ under the reprompting forgery attack using the TR scheme. To evaluate the Gaussianity assumption, we perform a random-projection normality test on the residuals. As shown in Tab.~\ref{tab:normality_test}, across different target models (with SD2.1 as the proxy), the average skewness and kurtosis remain consistently close to $0$ and $3$, respectively, indicating strong agreement with Gaussian statistics.

\begin{table}[ht]
\centering
\caption{Normality test of inversion residuals (Proxy: SD2.1).}
\vspace{-0.02in}
\label{tab:normality_test}
\scalebox{0.8}{
\begin{tabular}{lcc}
\toprule
\textbf{Target} & \textbf{Skewness} & \textbf{Kurtosis} \\
\midrule
SD2.1 & 0.0092 & 3.1694 \\
SDXL & 0.0015 & 3.0272 \\
PixArt-$\Sigma$ & -0.0083 & 3.0050 \\
FLUX & -0.0040 & 3.0612 \\
SD3 & -0.0106 & 3.0539 \\
\bottomrule
\end{tabular}}
\vspace{-0.05in}
\end{table}

We further examine the shared-variance premise by comparing latent posterior variances across model architectures. The measured variance ratios are $1.2265$ for SDXL/SD2.1 and $0.9508$ for PixArt-$\Sigma$/SD2.1, both close to the ideal ratio of $1.0$. We also observe nontrivial per-dimension correlations of $0.4421$ and $0.3496$, respectively. These results suggest that, although the architectures differ, their latent uncertainty structures remain sufficiently aligned to support the tractable shared-variance approximation used in our analysis.

% To examine the shared-variance premise, we compare latent posterior variances across different architectures. The measured variance ratios are $1.2265$ for SDXL/SD2.1 and $0.9508$ for PixArt-$\Sigma$/SD2.1, both remaining close to the ideal value of $1.0$. We additionally observe substantial per-dimension correlations ($0.4421$ and $0.3496$, respectively), suggesting that the latent uncertainty structures remain statistically consistent across diffusion models.

\subsection{Additional Experiments}
\noindent \textbf{Distributional Analysis under Guidance-based Attacks.} We analyze the distributions of unwatermarked, watermarked, and forged samples under guidance-based attacks to understand the geometric separability induced by proxy-based generation.

$\bullet$ SDXL (Target) $\to$ SD2.1 (Proxy): As shown in Fig.~\ref{fig:local_spd_dist}, the distributions of unwatermarked (blue) and watermarked samples (green) are nearly identical, indicating that the semantic watermark preserves the original latent distribution, \textit{i.e.}, a Gaussian distribution, and maintains high perceptual fidelity. However, forged images generated by the proxy model inevitably introduce structural perturbations in the latent space, enabling forged samples to be clearly distinguished from watermarked ones, even under a similar target–proxy model setting.

$\bullet$ FLUX (Target) $\to$ SD3 (Proxy): We further report the distribution between watermarked and forged samples under the flow matching model. As shown in Fig.~\ref{fig:spd_flux_sd3}, the two distributions are clearly separable under the local SPD distance. When FLUX is used as the target model, the distribution is broader, whereas the SD3-based distribution is more concentrated. A similar separation is observed in Fig.~\ref{fig:cos_flux_sd3} using cosine similarity. Notably, FLUX-generated samples exhibit a wider cosine similarity range (\textit{i.e.}, $0.4$ to $0.8$), forming a longer tail compared to the range (\textit{i.e.}, $0.6$ to $0.8$) reported in Tab.~\ref{fig:cos_sim}.
\vspace{-0.03in}

\begin{figure}[htb]
    \centering
    \begin{subfigure}[c]{0.24\columnwidth}
        \centering
        \includegraphics[width=\columnwidth]{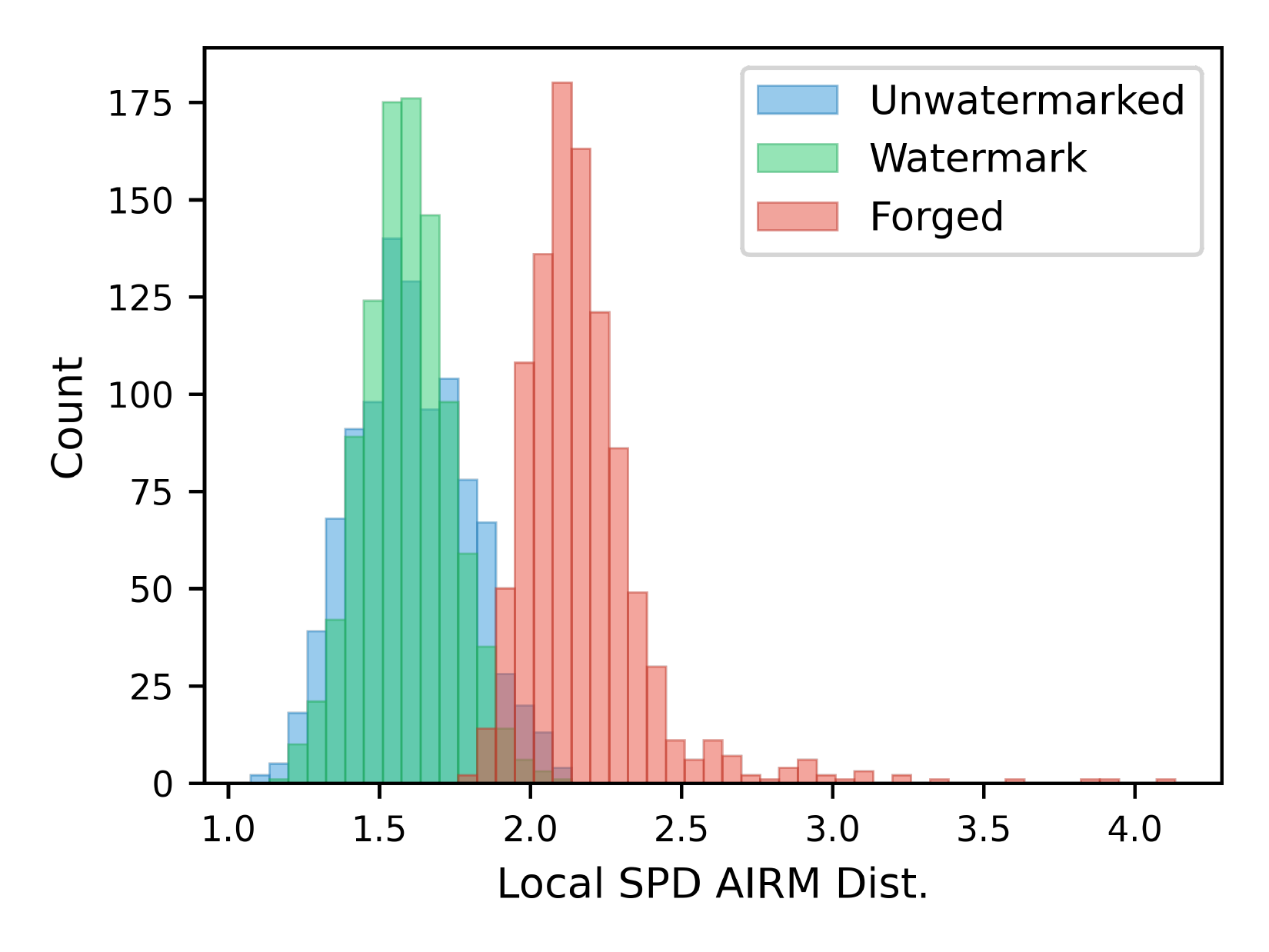}
        \caption{TR}
        \label{fig:tr_sdxl_sd21}
    \end{subfigure}
    \hspace{-0.02\columnwidth}
    \begin{subfigure}[c]{0.24\columnwidth}
        \centering
        \includegraphics[width=\columnwidth]{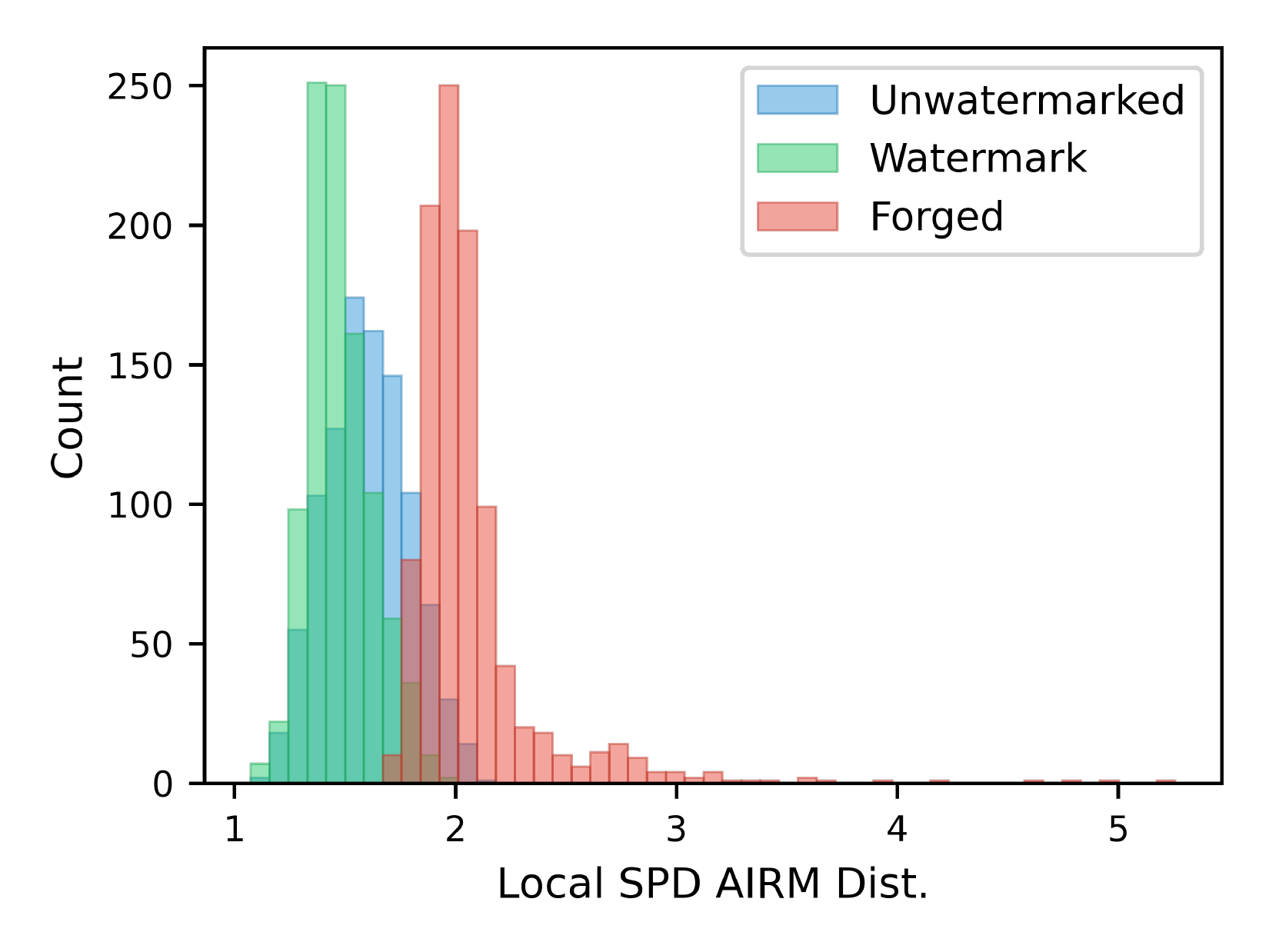}
        \caption{HSTR}
        \label{fig:hstr_sdxl_sd21}
    \end{subfigure}
    \hspace{-0.02\columnwidth}
    \begin{subfigure}[c]{0.24\columnwidth}
        \centering
        \includegraphics[width=\columnwidth]{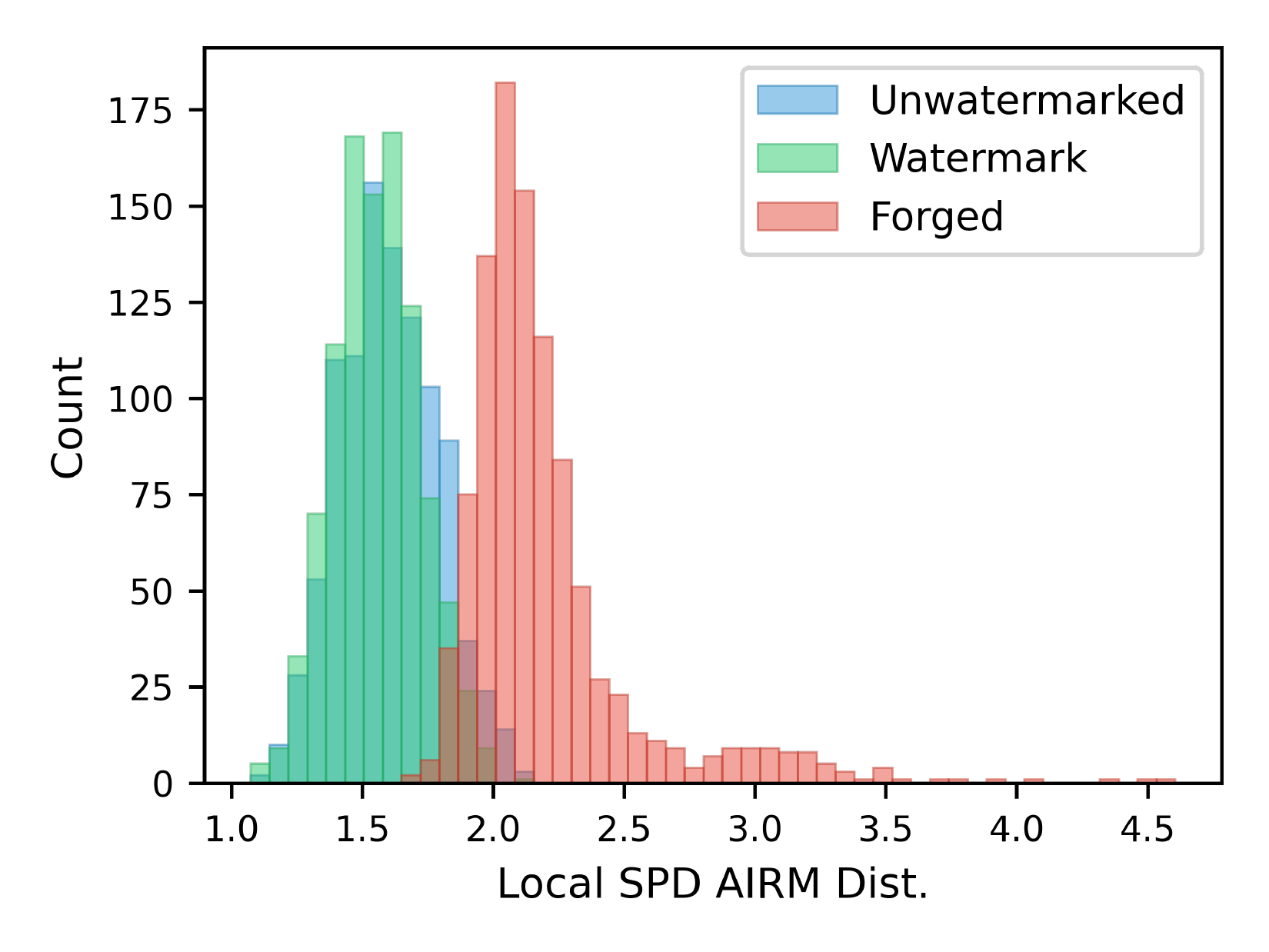}
        \caption{GS}
        \label{fig:gs_sdxl_sd21}
    \end{subfigure}
    \hspace{-0.02\columnwidth}
    \begin{subfigure}[c]{0.24\columnwidth}
        \centering
        \includegraphics[width=\columnwidth]{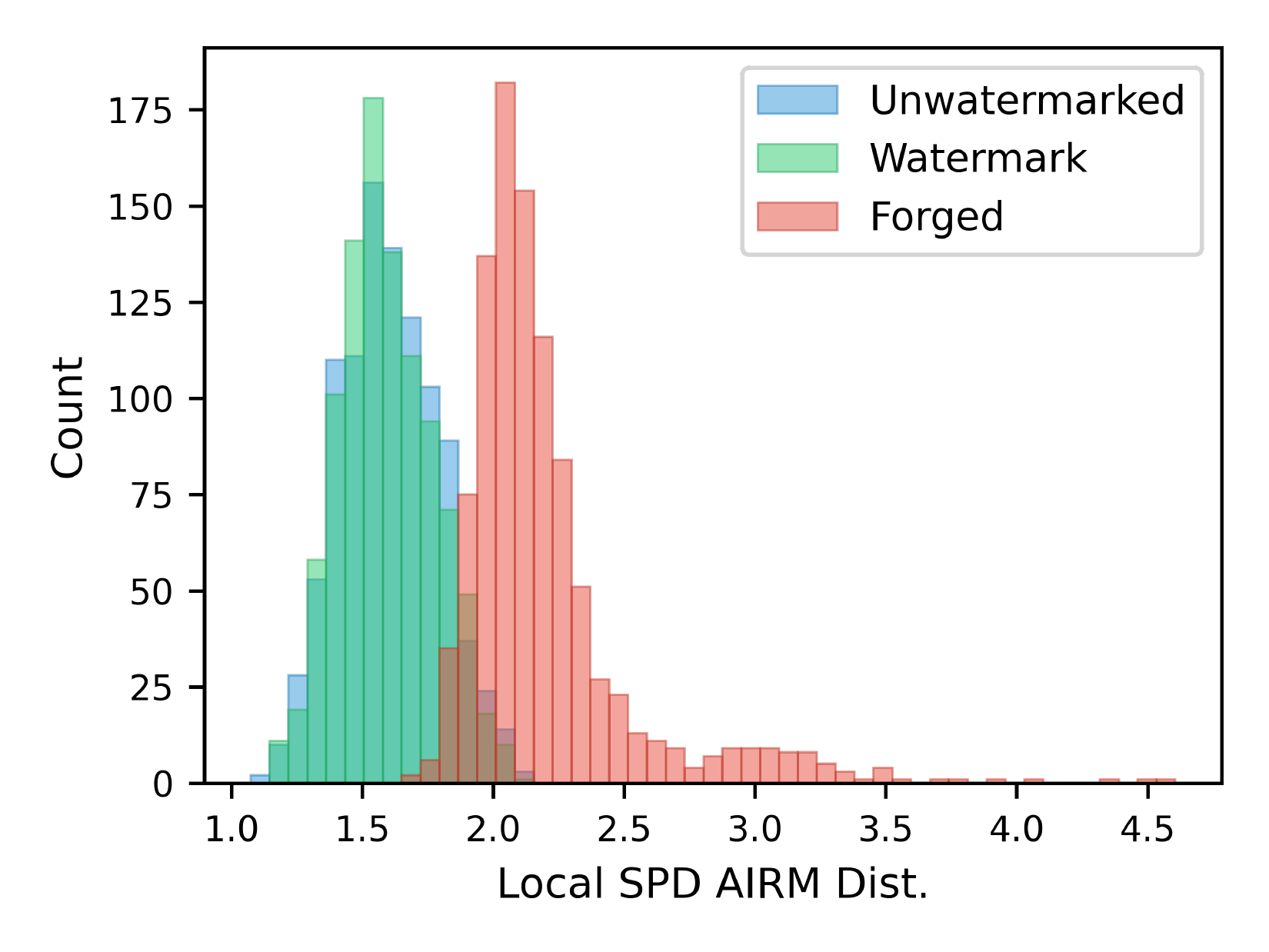}
        \caption{TAG}
        \label{fig:tag_sdxl_sd21}
    \end{subfigure}
    \vspace{-0.02in}
    \caption{Distributions of local SPD (AIRM) distances for unwatermarked, watermarked, and forged samples. The results are shown for SDXL as the target model and SD2.1 as the proxy.}
    \label{fig:local_spd_dist}
\end{figure}

\begin{figure*}[ht]
    \centering
    \begin{minipage}[c]{0.48\columnwidth}
        \centering
        \begin{subfigure}[c]{0.48\columnwidth}
            \centering
            \includegraphics[width=\linewidth]{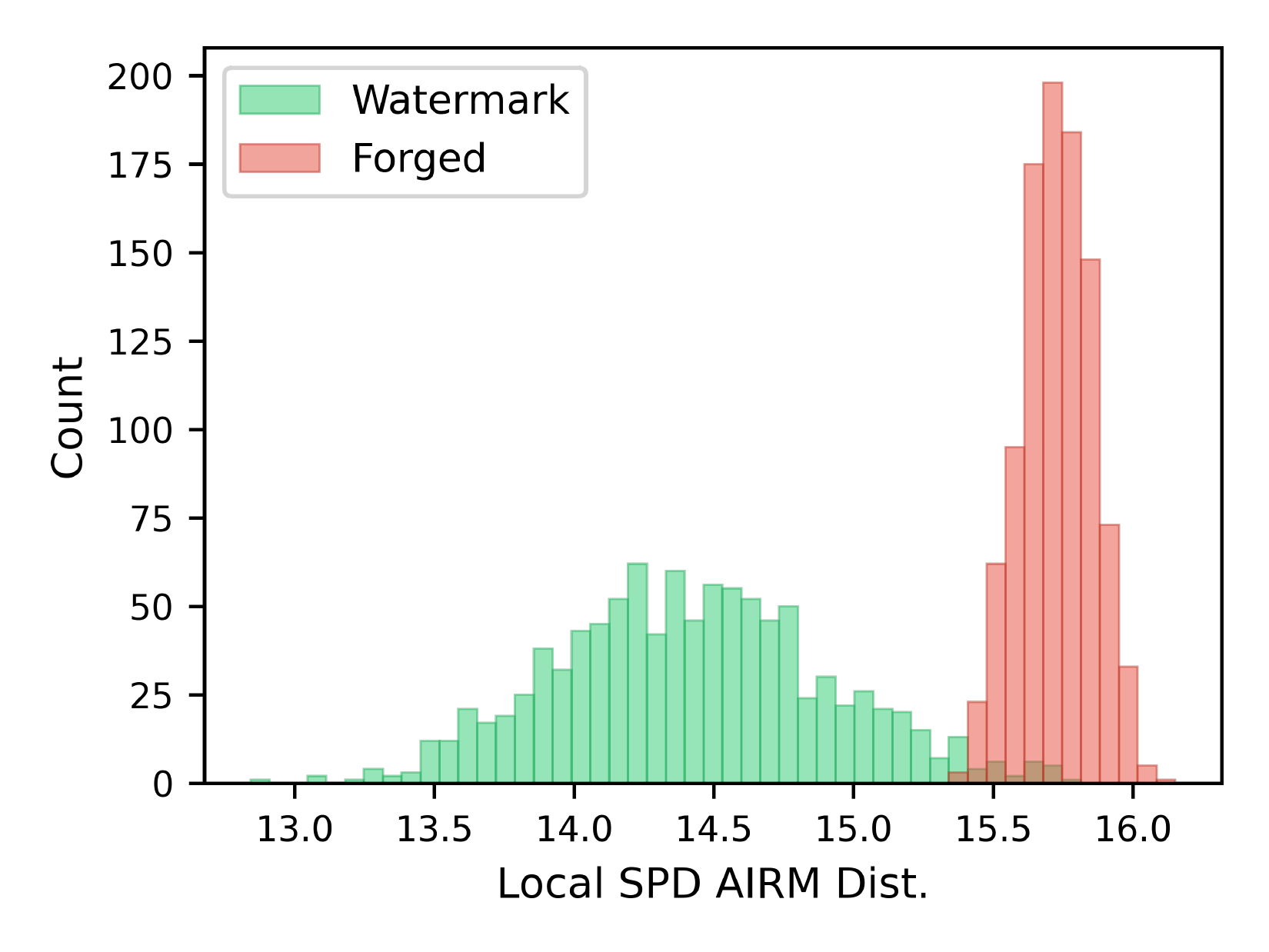}
            \caption{TR}
        \end{subfigure}
        \hspace{-0.02\columnwidth}
        \begin{subfigure}[c]{0.48\columnwidth}
            \centering
            \includegraphics[width=\linewidth]{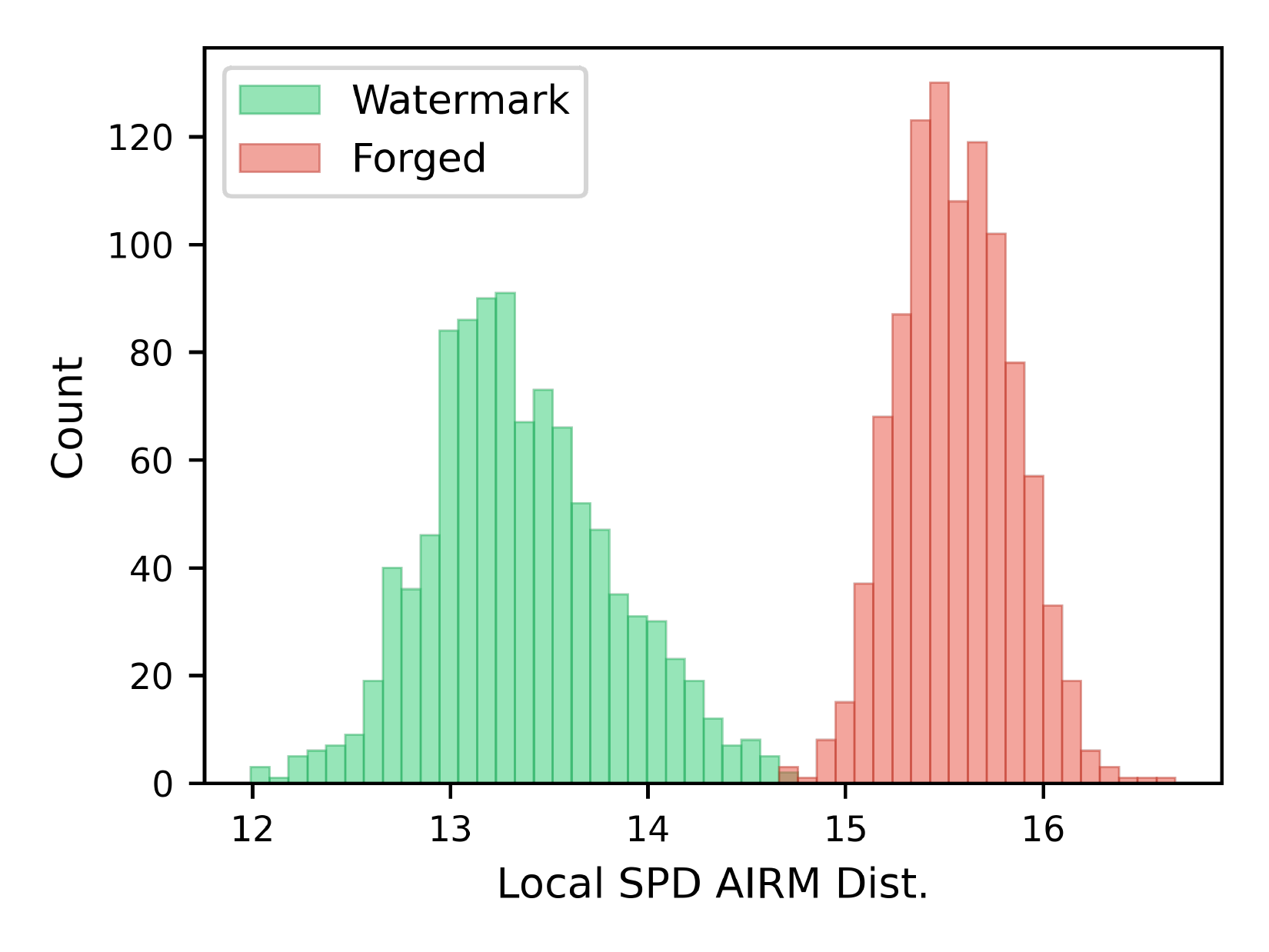}
            \caption{GS}
        \end{subfigure}
        \vspace{-0.02in}
        \caption{Distributions of local SPD distances for watermarked and forged samples, under the FLUX (target) and SD3 (proxy).}
        \label{fig:spd_flux_sd3}
    \end{minipage}
    \hspace{0.01\columnwidth}
    \begin{minipage}[c]{0.48\columnwidth}
        \centering
        \begin{subfigure}[c]{0.48\columnwidth}
            \centering
            \includegraphics[width=\linewidth]{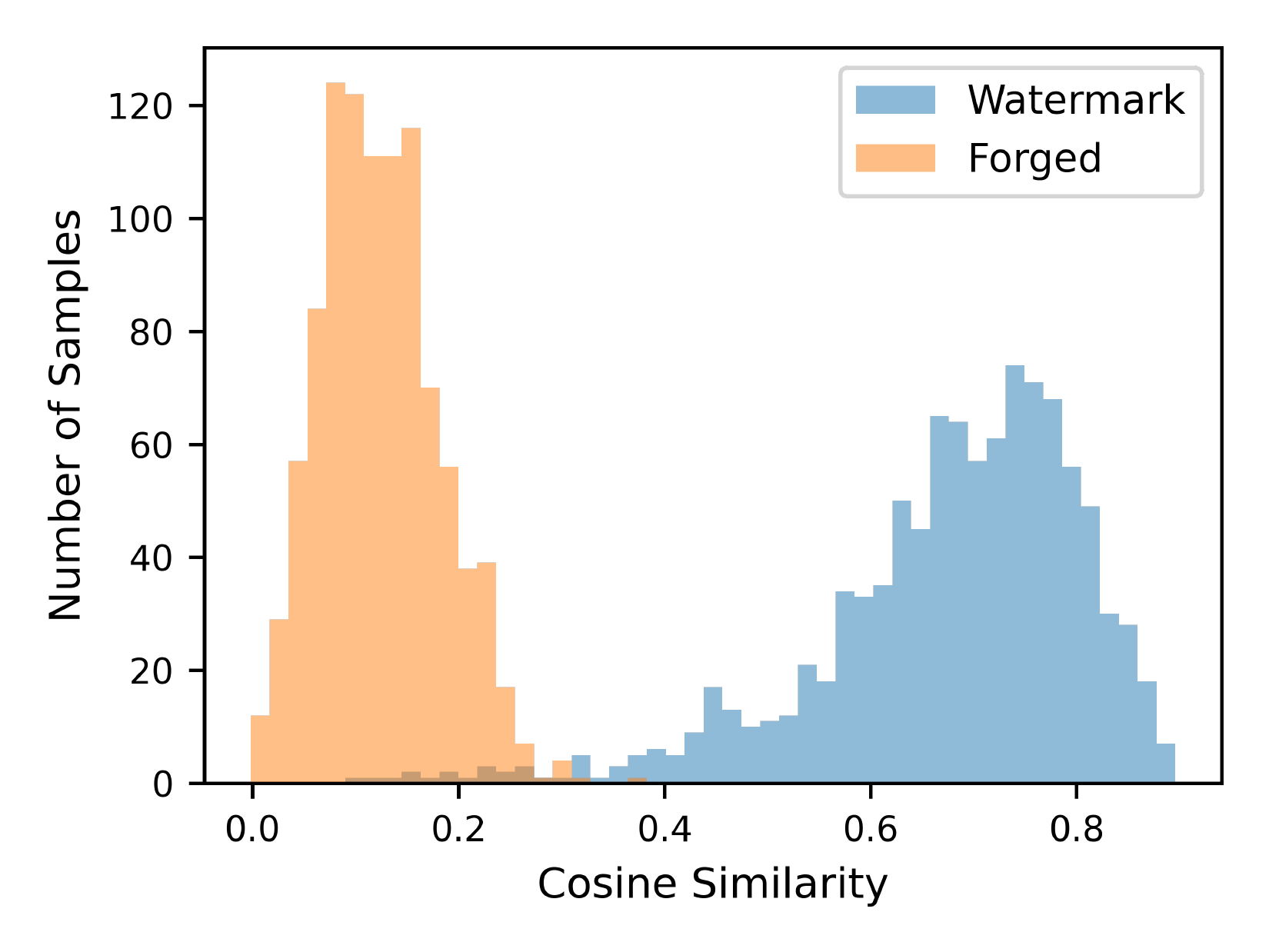}
            \caption{TR}
        \end{subfigure}
        \hspace{-0.02\columnwidth}
        \begin{subfigure}[c]{0.48\columnwidth}
            \centering
            \includegraphics[width=\linewidth]{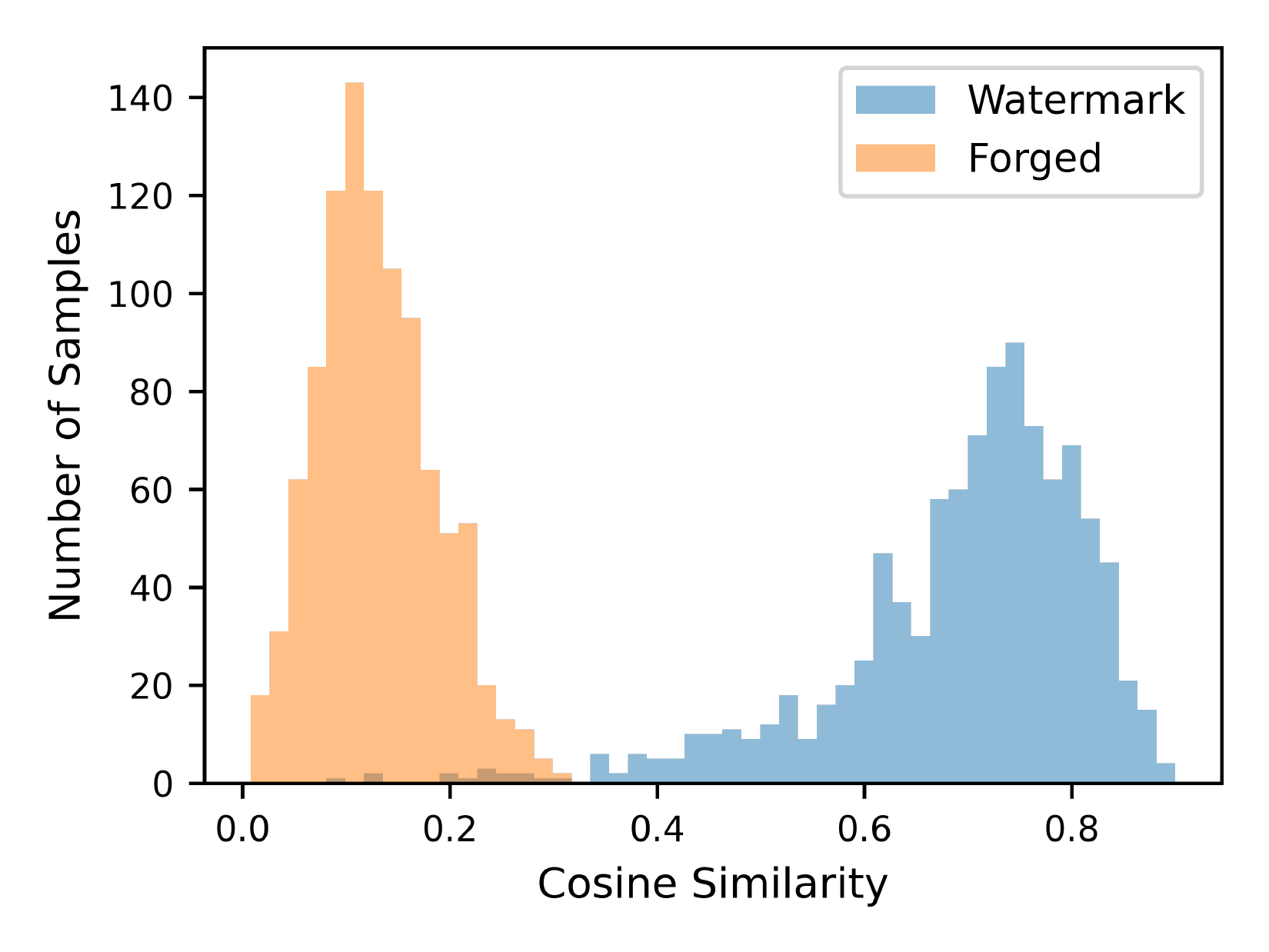}
            \caption{GS}
        \end{subfigure}
        \vspace{-0.02in}
        \caption{Distributions of cosine similarity for watermarked and forged samples. The target-proxy setup is consistent with  Fig.~\ref{fig:spd_flux_sd3}.}
        \label{fig:cos_flux_sd3}
    \end{minipage}
\end{figure*}
% \vspace{0.1in}

\noindent \textbf{Hyperparameter Sensitivity of L-SPD.} To evaluate the robustness of L-SPD to hyperparameter choices, we conduct a sensitivity analysis under the SDXL (target) and SD2.1 (proxy) setting. As shown in Table~\ref{tab:8}, L-SPD exhibits high stability across different grid sizes ($G \in \{4, 8, 16\}$), top-$k$ selections ($k \in \{5, 8, 16\}$), and aggregation methods (Mean, Top-$k$ Mean, and Std). The AUC consistently remains above $0.932$ and reaches up to $1.000$, indicating that the proposed metric captures intrinsic geometric deformation rather than relying on specific hyperparameter configurations.
\vspace{-0.03in}
\begin{table}[ht]
\centering
\caption{Hyperparameter sensitivity analysis of L-SPD.}
\vspace{-0.02in}
\label{tab:8}
\scalebox{0.8}{
\begin{tabular}{@{}llll@{}}
\toprule
Group & Parameter Variation & TR (AUC) & GS (AUC) \\ \midrule
Default & $G=16, k=5$, Top-$k$ Mean & 0.997 & 0.995 \\
Grid Size & $G \in \{4, 8, 16\}$ & 0.932/0.983/0.997  & 0.963/0.995/0.995  \\
Patch Top-$k$  & $k \in \{5, 8, 16\}$ & 0.997/0.999/1.000 & 0.995/0.998/0.999  \\
Aggr. & \{Mean, Top-$k$ Mean, Std\} & 1.000/0.997/0.997 & 1.000/0.995/0.994 \\ \bottomrule
\end{tabular}}
\end{table}

\noindent \textbf{Detection Performance via MSE Metric.}
As discussed in Sec.~\ref{sec:5.4}, the MSE metric can be regarded as a quantitative indicator of feature-level distortion, but remains insufficient for robust detection across a wide range of image distortions. As shown in Fig.~\ref{fig:mse_distort}, the latent MSE distributions exhibit substantial overlap under standard distortions (\textit{e.g.}, Cropping, Gaussian Blurring). Moreover, increasing discrepancies between proxy and target models further shift the MSE distributions toward higher values (red arrow), complicating the selection of reliable detection thresholds.

\vspace{-0.05in}
\begin{figure}[ht]
  \centering
   \includegraphics[width=0.37\linewidth]{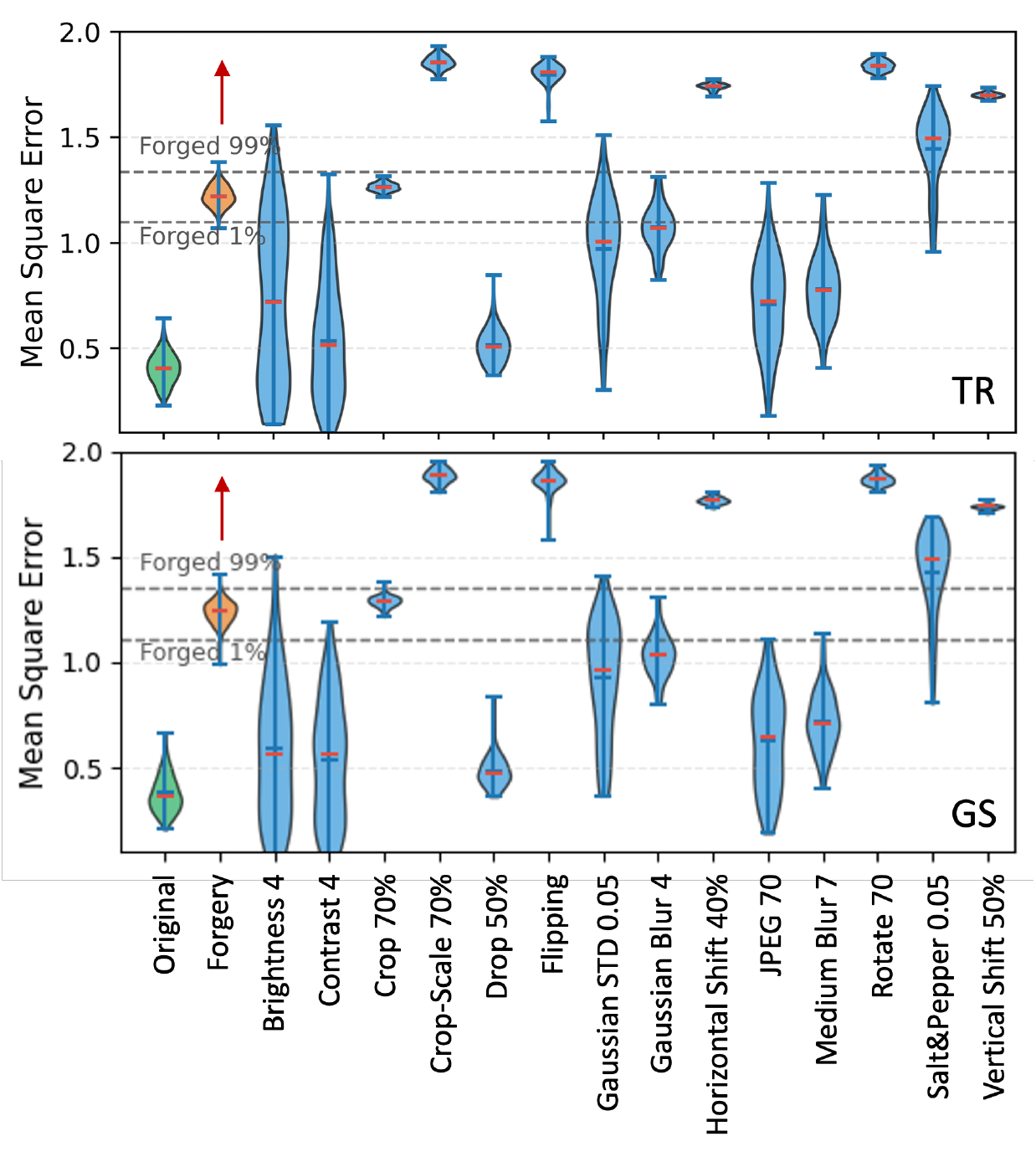}
   \vspace{-0.02in}
   \caption{MSE distributions under 14 types of image distortions for TR and GS, evaluated on SDXL as the target model.}
   \label{fig:mse_distort}
   % \vspace{-0.1in}
\end{figure}

\noindent \textbf{Robustness Analysis on PRCW.} As discussed in Sec.~\ref{sec:6.1}, the PRCW scheme exhibits inherent robustness against guidance-based forgeries in cross-model settings, stemming from its undetectable design, which prevents the adversary from accurately estimating the watermarked latent through a proxy model. However, this robustness degrades substantially when the adversary employs a proxy model identical to the target (\textit{i.e.}, to $0.974$ for SD2.1 and $0.527$ for SD3); in such cases, PRCW becomes highly susceptible to successful forgery, as reported in Tab.~\ref{tab:9_prcw}. In contrast, our proposed method remains effective (\textit{i.e.}, G-Cos $=0.977$ for SD2.1 and $0.931$ for SD3) even under this stronger threat model, as shown in Tab.~\ref{tab:10_prcw_our}. This implies that our latent-drift analysis successfully captures the fundamental discrepancy between forged and watermarked samples, enabling robust detection across a broader range of attack scenarios. We omit the analysis of optimization-based attacks here, as they exhibit marginal attack efficacy compared to guidance-based methods.

% \begin{table}[ht]
% \centering
% \caption{Robustness of PRCW against guidance-based forgery attacks. We report the detection rates for various target models when subjected to SD2.1 and SD3 proxy models. Evaluation metrics follow the same as the reprompting attacks (see Tab.~\ref{tab:1_attack}).}
% \vspace{-0.05in}
% \label{tab:7_prcw}
% \renewcommand{\arraystretch}{1.1}
% \scalebox{0.75}{
% \begin{tabular}{lcc}
% \toprule
% & SD2.1 (Proxy) & SD3 (Proxy) \\
% \cmidrule(lr){2-2} \cmidrule(lr){3-3}
% Target & Det. & Det. \\
% \midrule
% SD2.1           & 0.974 & 0.117 \\
% SDXL            & 0.003 & 0.012 \\
% PixArt-$\Sigma$ & 0.024 & 0.027 \\
% FLUX            & 0.000 & 0.008 \\
% SD3             & 0.000 & 0.527 \\
% \bottomrule
% \end{tabular}}
% \end{table}

\begin{table}[ht]
    \centering
    % --- Guidance-based forgery ---
    \begin{minipage}[c]{0.48\textwidth}
        \centering
        \caption{PRCW TPR under guidance-based attacks (TPR@$X$-FPR). The metric is consistent with those described in Tab.~\ref{tab:1_attack}.}
        \vspace{-0.02in}
        \label{tab:9_prcw}
        \renewcommand{\arraystretch}{1.1}
        \scalebox{0.7}{
        \begin{tabular}{lcc}
            \toprule
            & SD2.1 (Proxy) & SD3 (Proxy) \\
            \cmidrule(lr){2-2} \cmidrule(lr){3-3}
            Target & Det. & Det. \\
            \midrule
            SD2.1           & 0.974 & 0.117 \\
            SDXL            & 0.003 & 0.012 \\
            PixArt-$\Sigma$ & 0.024 & 0.027 \\
            FLUX            & 0.000 & 0.008 \\
            SD3             & 0.000 & 0.952 \\
            \bottomrule
        \end{tabular}}
    \end{minipage}
    \hfill
    % --- Our method ---
    \begin{minipage}[c]{0.48\textwidth}
        \centering
        \caption{Our detection performance (AUC) under optimization-based attacks. ``G-Cos'' and ``L-SPD'' are used here as in the guidance-based attack scenarios (see Tab.~\ref{tab:2_reprompt}).}
        \vspace{-0.02in}
        \label{tab:10_prcw_our}
        \renewcommand{\arraystretch}{1.1}
        \scalebox{0.8}{
        \begin{tabular}{llcc}
            \toprule
            Proxy & Target & G-Cos & L-SPD \\
            \midrule
            SD2.1 & SD2.1 & 0.977 & 0.951 \\
            \midrule
            SD3 & SD3 & 0.931 & 0.700 \\
            \bottomrule
        \end{tabular}}
    \end{minipage}
\end{table}

\section{Example of Forgery Images}
\label{sec:e}
In this section, we provide additional example images generated by guidance- and optimization-based forgery methods. All images are generated at a resolution of $512 \times 512$, and the adversary employs SD2.1 as the proxy model.

\subsection{Guidance-based Methods}
Fig.~\ref{fig:13_guidance} presents a visual comparison between watermarked images produced by different target models and forged images generated via the proxy model using GS and TR schemes.

\begin{figure}[h]
    \centering
    \includegraphics[width=0.85\columnwidth]{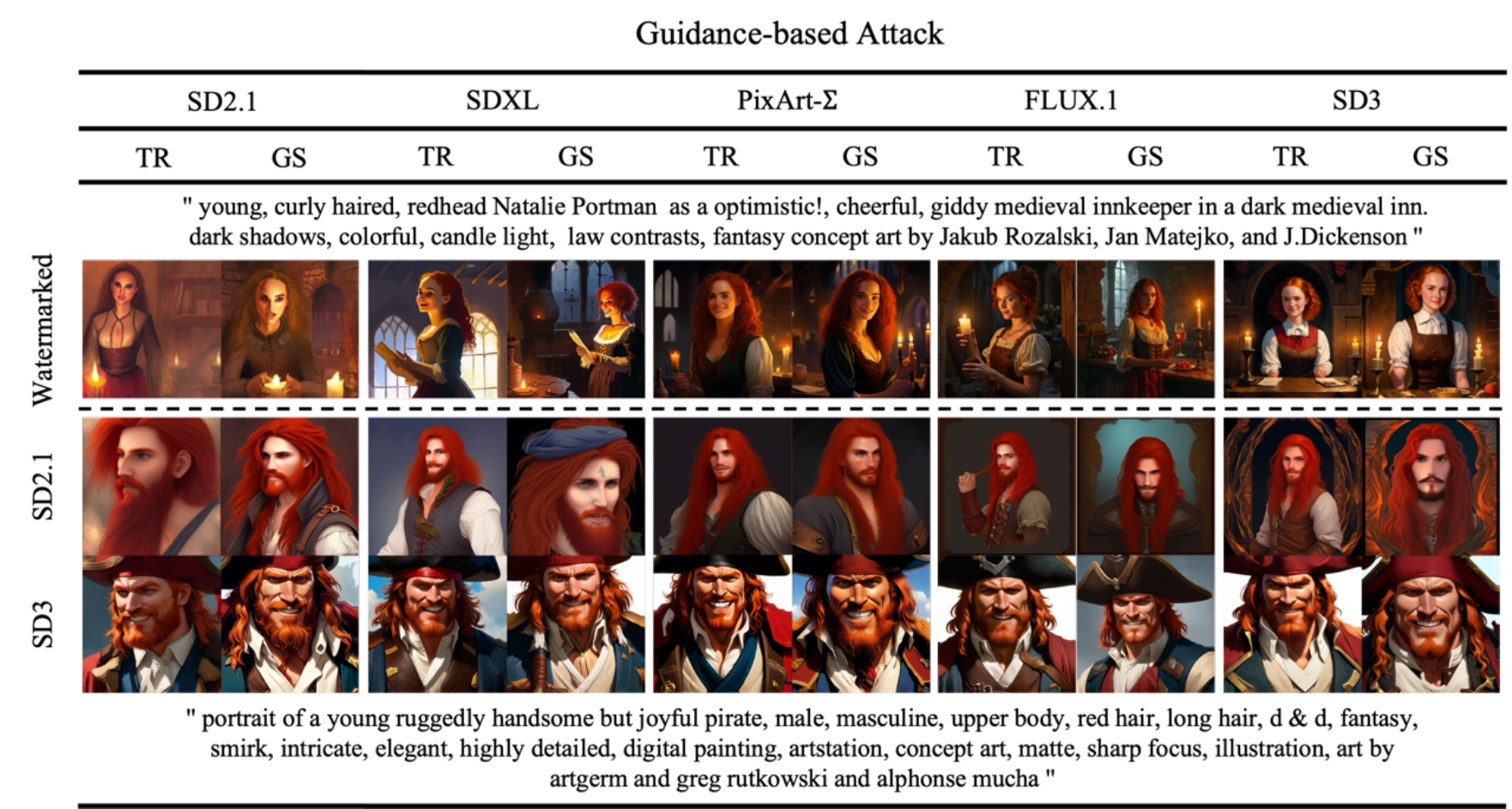}
    \vspace{-0.01in}
    \caption{Examples of guidance-based attacks on different target models and watermarking schemes. The top row shows watermarked images, generated using prompts from the SDP dataset with the target model indicated at the top of each panel. The subsequent rows present the corresponding forged outputs, where the adversary employs SD2.1 and SD3 as proxy models to perform forgery.}
    \label{fig:13_guidance}
    \vspace{-0.1in}
\end{figure}

\subsection{Optimization-based Methods}
Fig.~\ref{fig:13_imprint} illustrates the progression of the imprinting forgery attack as the number of optimization steps increases. As discussed in ~\cite{muller2025black}, successful forgeries typically require at least 20 optimization steps during which the accumulation of adversarial noise severely degrades image fidelity (\textit{e.g.}, with PSNR values dropping below 24 dB).

% \newpage
% \begin{figure}[ht]
%     \centering
%     % first：Guidance-based attacks
%     \begin{minipage}{1.0\columnwidth}
%         \centering
%         \includegraphics[width=0.75\columnwidth]{Figure/pdf/fig13.pdf}
%         \caption{Examples of guidance-based forgery attacks across various target models and watermarking schemes. The top row displays watermarked images (SDP dataset), while subsequent rows show forged outputs generated using SD2.1 and SD3 as proxy models.}
%         \label{fig:13_guidance}
%     \end{minipage}
%     % \vfill
%     \vspace{0.5in}
%     % second：Optimization progression
%     \begin{minipage}{0.9\columnwidth}
%         \centering
%         \includegraphics[width=\columnwidth]{Figure/pdf/fig14.pdf}
%         \vspace{-0.1in}
%         \caption{Progression of optimization-based forgery attacks on SDXL. For each of the four cover images $x^{(c)}$, the top row tracks the evolution of the forged latent (TR and GS) over optimization steps, while the bottom row visualizes the pixel-wise residual differences relative to the initial cover.}
%         \label{fig:13_imprint}
%     \end{minipage}
%     \vspace{-0.1in}
% \end{figure}

\begin{figure}[ht]
    \centering
    \includegraphics[width=0.9\columnwidth]{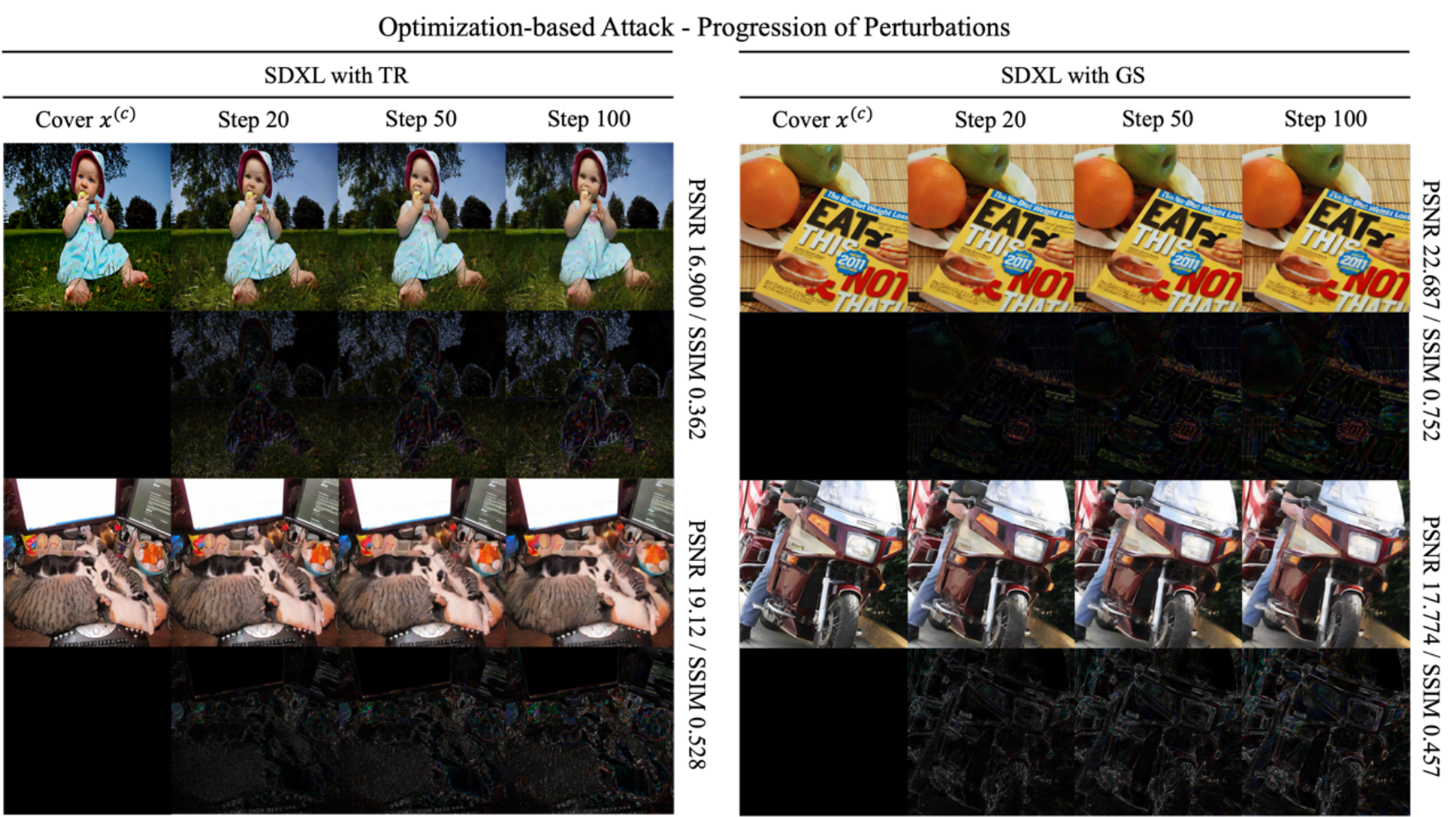}
    \vspace{-0.01in}
    \caption{Progression of optimization-based forgery attack as the number of optimization steps increases. The target model is SDXL, and the forged watermark is TR and GS. Results are conducted on SDXL with TR and GS watermarks and four cover images $x^{(c)}$. For each cover image, the top row shows the initial cover and the corresponding forged images at different optimization steps, while the bottom row illustrates the absolute pixel-wise differences from the initial cover.}
    \label{fig:13_imprint}
    \vspace{-0.1in}
\end{figure}

%%%%%%%%%%%%%%%%%%%%%%%%%%%%%%%%%%%%%%%%%%%%%%%%%%%%%%%%%%%%%%%%%%%%%%%%%%%%%%%
%%%%%%%%%%%%%%%%%%%%%%%%%%%%%%%%%%%%%%%%%%%%%%%%%%%%%%%%%%%%%%%%%%%%%%%%%%%%%%%

\end{document}